\pdfoutput=1
\documentclass[a4paper,12pt,reqno]{article}
\usepackage[markup=underlined]{changes}
\RequirePackage{etex}

\usepackage{mathpazo}

\usepackage{etex}
\usepackage{graphicx}
\usepackage{setspace}
\usepackage{amsmath}
\usepackage{amsfonts}
\usepackage{titlesec}
\usepackage{pictex}
\usepackage{comment}
\usepackage{amsthm}
\usepackage{graphics,epsfig,verbatim,bm,latexsym,url,amsbsy}
\usepackage{rotating}
\usepackage[authoryear,round]{natbib}
\bibliographystyle{ecta}
\usepackage{float}
\usepackage[position=bottom]{subfig}
\usepackage{mathrsfs}
\usepackage{multirow}
\usepackage{array}
\usepackage{bigints}
\usepackage{bbm}
\usepackage{geometry}
\usepackage{bbm}
\usepackage[ruled,vlined]{algorithm2e}
\usepackage{soul}
\usepackage{mathtools}
\usepackage{enumitem}
\usepackage{amssymb}

\DeclarePairedDelimiterX{\inp}[2]{\langle}{\rangle}{#1, #2}

\usepackage{hyperref}
\hypersetup{
	colorlinks=true,
	linkcolor=red!75!black,
	citecolor=blue!75!black,
	filecolor=magenta,      
	urlcolor=cyan,
}

\newcommand{\Ex}{\mathbb{E}}
\newcommand{\N}{\mathbb{N}}

\renewcommand{\Pr}{\mathbb{P}}

\newenvironment{manualtheorem}[1]{%
  \manualtheoreminner
}{\endmanualtheoreminner}

\theoremstyle{definition} 
\theoremstyle{definition} 
\theoremstyle{definition} 

\theoremstyle{definition} \newtheorem{claim}{Claim}
\theoremstyle{definition} \newtheorem{definition}{Definition}
\theoremstyle{definition} 
\theoremstyle{definition} 
\theoremstyle{definition} \newtheorem{lemma}{Lemma}
\theoremstyle{definition} \newtheorem{theorem}{Theorem}
\theoremstyle{definition} 
\theoremstyle{definition} 
\theoremstyle{definition} 
\theoremstyle{definition} 
\theoremstyle{definition}\newtheorem{proposition}{Proposition}
\theoremstyle{definition} 
\theoremstyle{definition} 
\theoremstyle{definition} 
\theoremstyle{definition} 
\theoremstyle{definition} 
\theoremstyle{definition}

\theoremstyle{definition} 
\usepackage{mathtools}
\setlength{\abovedisplayskip}{0pt}
\setlength{\belowdisplayskip}{0pt}
\setlength{\parskip}{0.1\baselineskip}
\titlespacing*{\paragraph}{0pt}{1.25ex plus 1ex minus .2ex}{0.5em}

\geometry{
margin=1.1in,
top=1.25in,
bottom=1.25in
}

\titleformat{\section}
		{\bfseries\center}
         {\thesection}
        {0.5em}
        {}
        []

\titleformat{\subsection}[runin]
        {\normalfont\bfseries}
        {\thesubsection}
        {0.5em}
        {\addperiod}
        []
\newcommand{\addperiod}[1]{#1.}
\allowdisplaybreaks

\title{\scshape{\Large{\textbf{Informational Puts}}}}

\author{\makebox[.25\linewidth]{{Andrew Koh}\thanks{MIT Department of Economics; email: \protect\texttt{ajkoh@mit.edu}}}\\{MIT} 
\and 
\makebox[.25\linewidth]{Sivakorn Sanguanmoo\thanks{MIT Department of Economics; email: \protect\texttt{sanguanm@mit.edu}}} \\ {MIT} 
\and \makebox[.25\linewidth]{Kei Uzui\thanks{MIT Department of Economics; email: \protect\texttt{kuzui@mit.edu} 
{{
\\~\\
\emph{First version: December 2023}. We are especially grateful to Drew Fudenberg and Stephen Morris for guidance, support, and many helpful suggestions. We also thank Daron Acemoglu, Matt Elliott, Nobuhiro Kiyotaki, Daniel Luo,  Daisuke Oyama, Parag Pathak, Satoru Takahashi, Iv\'an Werning, Alex Wolitzky, Muhamet Yildiz, as well as audiences at Cambridge University, EC'24, the Econometric Society North American and Asian Meetings, Nuffield College Oxford, and MIT Finance, Macro, and Theory Lunches for helpful comments. A very preliminary version of this paper appeared as an extended abstract in the Proceedings of the 25th ACM Conference on Economics and Computation (EC'24) under the title `Full Dynamic Implementation'.
}} \\
}} \\
{MIT}
}

\setstretch{1.2}

\begin{document}
\maketitle
\thispagestyle{empty}

\begin{abstract}
We analyze how dynamic information should be provided to uniquely implement the largest equilibrium in binary-action coordination games. The designer offers an \emph{informational put}: she stays silent if players choose her preferred action, but injects asymmetric and inconclusive public information if they lose faith. There is (i) no multiplicity gap: the largest (partially) implementable
equilibrium can be implemented uniquely; and (ii) no commitment gap: the policy is sequentially optimal. Our results have sharp implications for the design of policy in coordination environments.  
\end{abstract}


\clearpage 
\setcounter{page}{1}
\newpage
\section{Introduction}

Many economic environments feature \emph{uncertainty} about a payoff-relevant state, \emph{coordination motives}, and \emph{frictions} in revising one's actions. These components are pervasive across many aspects of social and economic life. In macroeconomics, firms might be uncertain about underlying demand conditions, face complementarities in pricing decisions \citep{nakamura2010monetary}, and set prices in a staggered manner because of menu costs or inattention \citep{calvo1983staggered,reis2006inattentive}. In finance, creditors might be uncertain about the debtor's solvency, have incentives to run if others do the same \citep{diamond1983bank,goldstein2005demand}, but might only be able to withdraw their principal at the staggered expiration of individual contracts \citep{he2012dynamic}. The same applies, mutatis mutandis to industrial organization,\footnote{Consumers might be uncertain about a product's quality, have incentive to adopt the same product as others \citep{farrell1985standardization,ellison2000neo}, and face stochastic adoption opportunities \citep*{biglaiser2022should}.}
industrial policy,\footnote{Firms might only wish to industrialize if others do the same \citep*{rosenstein1943problems,murphy1989industrialization} but might face staggered decisions as to which sector to operate in \cite{matsuyama1991increasing}.} and political revolutions.


Equilibria of such environments are sensitive to the flow of information over time. Consider a player who, at any history of the game, finds herself with the opportunity to re-optimize her action. The fundamental state matters for her flow payoffs, so her decision must depend on her current beliefs. What is more, since she faces switching frictions, her decision also depends on her beliefs about what future agents will do. But those beliefs depend, in turn, on what she expects future players to learn, as well as future players' beliefs about the play of agents even further out into the future. Thus, the evolution of future beliefs---even those arbitrarily distant---multiply back to shape incentives in the present.

We are interested in how dynamic information should be optimally provided by a policymaker who (i) prefers a certain action e.g., for firms not to raise prices, for investors not to run on a bank or currency, for citizens not to protest, and so on; but (ii) is concerned about the equilibrium multiplicity endemic to such environments. Our main result (Theorem \ref{thrm:main}) characterizes the \emph{value}, \emph{form}, and \emph{sequential optimality} of dynamic information policies under {adversarial equilibrium selection}:

\begin{enumerate}[leftmargin = *] 

    \item \textbf{Value: No multiplicity gap.} The optimal policy uniquely implements the upper-bound on the time path of aggregate play. Thus, there is no multiplicity gap: the largest partially implementable equilibrium can also be implemented fully (i.e., as the \emph{unique} equilibrium). This is in sharp contrast to recent work on static implementation via information design in supermodular games which finds there generically exists a gap even with private information \citep*{morris2024implementation}, or with both private information and transfers \citep*{halac2021rank}. 
    
    \item \textbf{Form: Informational puts.}  
    If players play the designer's preferred action, the designer \emph{stays silent}. If, however, the aggregate measure of agents playing the designer's preferred action falls too far short of a dynamic target specified by the policy,\footnote{We emphasize that it is not crucial that information is only delivered off-path; we discuss this in Section \ref{sec:robustness}. In a finite player version of the model developed in Online Appendix \ref{appendix:finite_players}, informational puts delivers information on-path, albeit with small probability.}  the designer injects an \emph{asymmetric} and \emph{inconclusive} public signal---this is the \emph{informational put}.\footnote{This is analogous to the ``Fed put'' in which the Fed's history of intervening to halt market downturns has arguably created the belief that they are insured against downside risk \citep{miller2002moral}. This is \emph{as if} the Fed has offered the market a put option as insurance against downturns. In our setting, the designer steps in to inject information when players start switching to action $0$ which, as we will show, with high probability induces aggregate play to correct---\emph{as if} the designer has offered a put option as insurance against strategic uncertainty about the play of future players.}
    
    The signal is asymmetric such that the probability that agents become a little more confident is far higher than the probability that agents become much more pessimistic. These small but high-probability movements are chosen to be in the direction of the dominance region---at which playing the designer's preferred action is strictly dominant---and are \emph{chained together} such that the unique equilibrium of the subgame is for future players to play the designer-preferred action.\footnote{This is done via a "contagion argument" which can be viewed as the dynamic analog of interim deletion of strictly dominated strategies in static games of incomplete information.} Thus, informational puts deliver \emph{insurance against strategic uncertainty} by inducing confidence that future players will play the designer's preferred action. The signal is inconclusive such that, even if agents turn pessimistic, they do not become excessively so---this will be important for sequential optimality. 
    
    \item \textbf{Sequential optimality: No commitment gap.} Our dynamic information policy is constructed such that at every history, the designer has no incentive to deviate.\footnote{With the caveat that for a small set of histories, deviation incentives can be made arbitrarily small. For such histories, this is simply because optimal information policies continuing from those histories do not exist. Nonetheless, this can be approached via a sequence of policies so that the designer's deviation incentives vanishes along this sequence. This openness property is also typical of static full implementation environments as highlighted by \cite*{morris2024implementation}.}  
    Thus, there is no intertemporal commitment gap: whatever can be implemented with commitment to the dynamic information structure can also be implemented when the sender can continually re-optimize her choice of dynamic information.\footnote{We further emphasize that sequential optimality is not given---we offer examples of policies which are optimal but not sequentially optimal.}
    Sequentially optimality arises through the delicate interaction between properties of our policy: \emph{asymmetry}, \emph{chaining}, and \emph{inconclusiveness}. {Asymmetric} off-path information are {chained} together to obtain full implementation at all states in which the designer-preferred action is not strictly dominated. Then, {inconclusive} off-path information ensures that, even if agents turn pessimistic, full implementation is still guaranteed.\footnote{We will later observe that our policy can often be {implemented via dynamic cheap talk} with no `within period' or `between period' commitment; see Section \ref{sec:robustness}.}
\end{enumerate}

What is distinctive about providing dynamic information vis-a-vis providing transfers or designing the extensive form?  
First, information is \emph{less powerful}: beliefs are martingales and, unlike transfers, do not directly enter players' payoffs---this imposes severe constraints on what incentives can be delivered. Nonetheless, we will show that although information is less powerful on its own, it can be chained together over time to close the gap between full and partial implementation. Second, information is \emph{more flexible}: the designer has the freedom to design any belief martingale and, moreover, delivering information does not necessarily hurt the designer. This flexibility can be leveraged to shape the designer's continuation incentives---ensuring that her counterfactual selves at off-path histories are willing to follow through with the promised information. That dynamic information can be made sequentially optimal stands in sharp contrast to work on subgame implementation where the designer (viewed as a player in the game) does not generally wish to follow-through with off-path threats \citep{chakravorty2006credible},\footnote{See \cite{sjostrom2002implementation}, Section 4.6 for a survey of work where the designer is a player. For example, \cite{chakravorty2006credible} write \emph{``Generally, it is not in the designer's best interest to go through with the reward/punishment in the “subgame” arising from some disequilibrium play.''}}
and on contracting e.g., deposit insurance might not always be credible to prevent bank runs \citep{dybvig2023nobel}.\footnote{See also work on mechanism design with limited commitment \citep{laffont1988dynamics,bester2001contracting,liu2019auctions,doval2022mechanism} and macroeconomics \citep{halac2014fiscal} where time-inconsistency plays a crucial role.}

Our results have simple and sharp implications for policy in coordination environments. It is often held that to eliminate `bad' equilibria, substantial information must be delivered on the path of equilibrium play. We offer an alternative view: as long as the designer's preferred action is not strictly dominated, silence backed by the credible promise to inject asymmetric and inconclusive information suffices.

\paragraph{Related Literature} Our results relate most closely to recent work on full implementation in supermodular games via information design \citep*{morris2024implementation,inostroza2023adversarial,li2023global}. In this literature, information design induces non-degenerate higher-order beliefs, and this is important to obtain uniqueness via a "contagion argument" over the type space. By contrast, our dynamic information is public and higher-order beliefs are degenerate but we leverage a distinct kind of "intertemporal contagion".
A key takeaway from this literature is that there is typically a gap between the designer's value under adversarial equilibrium selection, and under designer-favorable selection (what we call a ``multiplicity gap''); by contrast, we show that for dynamic binary-action supermodular games there is no such gap.

Also related is the important and complementary work of \cite{basak2020diffusing} and \cite{basak2024panics}. We highlight several substantive differences. First, we study different dynamic games: in \cite{basak2020diffusing,basak2024panics} players make a once-and-for-all decision on whether to play the risky action, and they focus on regime change games---both features play a key role in their analysis.\footnote{
\cite{basak2024panics} note: \emph{``...in our
dynamic coordination game, the agents can choose when to attack, and the attack is the only irreversible
choice. Moreover, our principal has a simple objective: she wants to avoid a disaster. We argue that an early warning may not be as effective (if not completely ineffective) in the absence of these features.''} In this regard, our contribution is to offer dynamic information policies which work in a broader class of environments. 
} For instance, \cite{basak2024panics} assume on some states regime change is inevitable so a designer who wishes to preserve the regime still finds it optimal to disclose information. In our environment, agents play a general binary-action supermodular game and our designer's payoff is any increasing functional from the path of aggregate play. Specialized to a dynamic regime change game, this can capture the designer's incentives to prolong the regime or minimize the attack. Importantly, our dynamic information policies---and the reasons they work---are quite distinct. We discuss this connection more thoroughly in Section \ref{section:discussion}. 

Our paper also relates to work on the equilibria of dynamic coordination games. An important paper of \cite{gale1995dynamic} studies a complete information investment game where players can decide when, if ever, to make an irreversible investment and investing is payoff dominant.\footnote{See also \cite*{chamley1999coordinating,dasgupta2007coordination,angeletos2007dynamic,mathevet2013tractable,koh2024inertial} all of which study the equilibria of different dynamic coordination games. Recent work by \cite{jehiel2024power} studies implementation via transfers.} 
The main result is that investment succeeds across all subgame perfect equilibria. Our environment and results differ in several substantive ways. For instance, our policy allows the designer to implement the largest equilibria---irrespective of whether it is payoff dominant.

Our results are also connected to the literature on dynamic implementation. \cite{moore1988subgame} show that arbitrary social choice functions can be achieved with large off-path transfers.\footnote{See also \cite*{aghion2012subgame} for a discussion of the lack of robustness to small amounts of imperfect information.
}  \cite{glazer1996virtual} show that virtual implementation of social choice functions can be achieved by appealing to extensive-form versions of \cite{abreu1992virtual} mechanisms.\footnote{See work by \cite{chen2015full} who exploit the freedom to design the extensive-form. \cite{sato2023robust} designs both the extensive-form and information structure a la \cite{doval2020sequential} and further utilizes the fact the designer can design information about players' past moves; by contrast, we fix the dynamic game and past play is observed.} \cite*{chen2023getting} weaken backward induction to initial rationalizability.\footnote{That is, only imposing sequential rationality and common knowledge of sequential rationality at the beginning of the game, but "anything goes" off-path.
} Different from these papers, our designer is substantially more constrained: (i) there is no freedom to design the extensive-form which we take as given; (ii) the designer only offers dynamic information; and (iii) our policy is sequentially optimal.

Our game is one where players have stochastic switching opportunities. Variants of these models have been studied in macroeconomics \citep{diamond1982aggregate,
calvo1983staggered,diamond1989rational,frankel2000resolving}, industrial policy \citep*{murphy1989industrialization,matsuyama1991increasing}, finance \citep{he2012dynamic}, industrial organization \citep*{biglaiser2022should}, and game theory \citep*{burdzy2001fast, matsui1995approach,oyama2002p,kamada2020revision}.\footnote{See also more recent work by \cite*{guimaraes2018dynamic,guimaraes2020dynamic}. \cite{angeletos2016incomplete} offer an excellent survey.} A common insight from this literature is that switching frictions can generate uniqueness, and the \emph{risk-dominant profile} is selected via a process of backward induction. Our contribution is to show how the \emph{largest equilibrium} can be uniquely implemented by carefully designing dynamic information.

Sequential optimality is an important property of our information policy and thus our work relates to recent work studying the role of (intertemporal) commitment in dynamic information design. \cite*{koh2022attention,koh2024persuasion} show by construction that sequential optimality is generally achievable in single-agent stopping problems. It will turn out that sequentially optimal information policies also exist in our environment, but for quite distinct reasons; we discuss this more thoroughly in Section \ref{sec:mainresult}.

\section{Model}
\paragraph{Environment} There is a finite set of states $\Theta = \{\theta_1,\theta_2\ldots ,\theta_n\}$. There is an interior common prior $\mu_0 \in \Delta(\Theta) \setminus \partial \Delta(\Theta)$, where we use $\Delta(\Theta)$ to denote the set of probability measures over $\Theta$ endowed with the Euclidean metric, and use $\partial \Delta(\Theta)$ to denote the boundary of this set. There is a unit measure of players indexed $i \in I := [0,1]$. Time is continuous and indexed $\mathcal{T} := [0,+\infty)$. The action space is binary: $a_{it} \in \{0,1\}$ where $a_{it}$ is $i$'s action at time $t$. Write $A_t := \int a_{it} di$ to denote the proportion of players playing action $1$ at time $t$. 

Working with a continuum of agents makes our analysis cleaner because randomness from individual switching frictions vanishes in the aggregate. A close analog of our result holds for finite players; we develop this in Online Appendix \ref{appendix:finite_players}. 

\paragraph{Payoffs}
The flow payoff for each player is $u: \{0,1\} \times [0,1] \times \Theta \to \mathbb{R}$. We write $\Delta u(A,\theta) := u(1,A,\theta) - u(0,A,\theta)$ to denote the payoff difference from action $1$ relative to $0$ and assume throughout: 
\begin{itemize}
    \item[(i)] \textbf{Supermodularity.} $\Delta u(A,\theta)$ is continuously differentiable and strictly increasing in $A$. 
    \item[(ii)] \textbf{Dominant state.} There exists $\theta^* \in \Theta$ such that $\Delta u(0,\theta^*) > 0$. 
\end{itemize}

Condition (i) states that the game is one of strategic complements. Condition (ii) is a standard richness assumption on the space of possible payoff structures: there exists \emph{some} state $\theta^*$ under which playing action $1$ is strictly dominant.
The payoff of player $i \in I$ is $\int e^{-rt} u(a_{it},A_t,\theta) dt$ where $r > 0$ is an arbitrary discount rate. Each player is endowed with a personal Poisson clock which ticks at an independent rate $\lambda > 0$.
Players can only re-optimize at the ticks of their clocks \citep*{calvo1983staggered,matsui1995approach,frankel2000resolving,burdzy2001fast,kamada2020revision} and the aggregate measure of ticks in the population is non-random.\footnote{By an appropriate continuum law of large numbers \citep{sun2006exact}.}  

Our specification of flow payoffs is quite general, with the caveat that players are homogeneous.\footnote{A similar assumption has been made in static environments by \cite*{inostroza2023adversarial,li2023global} and was weakened by \cite*{morris2024implementation} who characterize optimal private information for full implementation by focusing on potential games with a convexity requirement, which amounts to there not being ``too much heterogeneity" across players. We discuss the role of heterogeneity in Section \ref{sec:robustness}.} We have chosen to focus on a simple but canonical time-aggregator: the discounted flow payoff. This transparently conveys how our policy works. As will be apparent (and formalized in Appendix \ref{appendix:examples}) our results continue to hold for a wider class of dynamic coordination games analyzed in the literature such as:\footnote{Although such environments are not nested within that of the main text, our results remain unchanged: the informational puts policy closes the multiplicity gap and remains sequentially optimal. Appendix \ref{appendix:examples} formalizes these environments and sketches the (minor) required modifications to the proofs.} 
\begin{enumerate}
    \item \emph{Dynamic regime change games.} Let $\Theta$ be totally ordered reflecting the strength of the regime. Action $0$ corresponds to attacking the regime and $1$ corresponds to not attacking; attacking is associated with a per-unit flow cost of $c > 0$ and the `instantaneous' failure rate of the regime is strictly increasing in the measure attacking and strictly decreasing in the state. Attacking when the regime fails yields a fixed lump sum payoff and the failure of the regime is publicly observed.\footnote{Moreover, our designer's payoff can reflect incentives to delay the regime change, even if it is inevitable i.e., the regime eventually fails almost surely, no matter aggregate play.}
    \item \emph{Stopping games.} Action $1$ corresponds to irreversibly stopping and adopting the action (e.g., investing, attacking, protesting, etc.). Flow payoffs are increasing in the measure of people who have stopped. Action $0$ corresponds to not stopping. 
\end{enumerate}

\paragraph{Dynamic information policies} A history $H_t := \big((\mu_{s})_{s \leq t}, (A_s)_{s \leq t}\big)$ specifies beliefs and aggregate play up to time $t$. Let $\mathcal{H}_t$ be the set of all histories and $\mathcal{H} := \bigcup_{t \geq 0} \mathcal{H}_t$. Write $\mathcal{F}_t$ as the natural filtration generated by histories. A dynamic information policy is a c\`adl\`ag $(\mathcal{F}_t)_t$-martingale. Let 
\[\mathcal{M} := \Big\{\bm{\mu}': \bm{\mu}' \text{ is a c\`adl\`ag $(\mathcal{F}_t)_t$-martingale, $\mu_0 = \mu'_0$ a.s.} \Big\}.
\]
be the set of all dynamic information policies. 

\paragraph{Strategies and Equilibria} A strategy $\sigma_i : \mathcal{H} \to \Delta\{0,1\}$ is a map from histories to a distribution over actions so that if $i$'s clock ticks at time $t$, her choice of action is given by history $H_{t-} := \lim_{t' \uparrow t}H_{t'}$.\footnote{This is well-defined since $(A_t)_t$ is a.s. continuous and $(\mu_t)_t$ has left-limits. Since the measure of agents who act at time $t$ is zero, our game is in effect equivalent to one in which play at time $t$ depends on history $H_t$.}
Thus, each information policy $\bm{\mu}$ induces a stochastic game;\footnote{Note that information is public so all agents share the same beliefs; in Online Appendix \ref{appendix:private} we relax this to show that private information often cannot do better.} let $\Sigma(\bm{\mu},A_0)$ denote the set of subgame perfect equilibria of the stochastic game. We focus on subgame perfection because there is no private information so the game continuing from each history corresponds to a proper subgame.



\begin{figure}[h!]  
\centering
\captionsetup{width=0.9\linewidth}
    \caption{Relationship between beliefs, equilibria, and action paths} \includegraphics[width=0.7\textwidth]{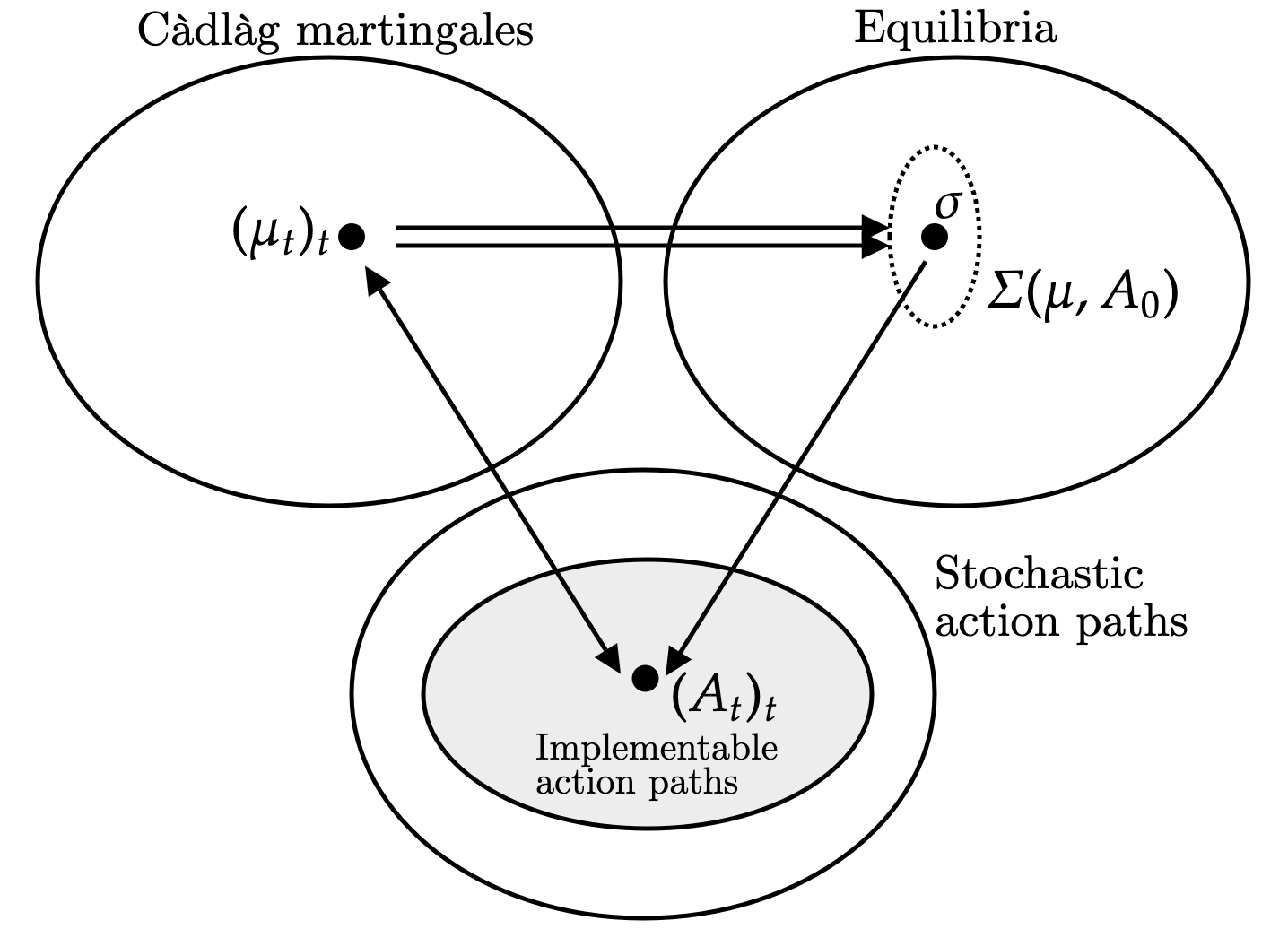}
    \label{fig:model_illust}
\end{figure}

Figure \ref{fig:model_illust} illustrates the connection between dynamic information policies (top left), equilibria (top right), and the path of aggregate actions (bottom). Each information policy $(\mu_t)_t$ specifies a c\`adl\`ag martingale which induces a set of equilibria $\Sigma(\bm{\mu},A_0)$. The realizations of beliefs $(\mu_t)_t$ as well as the selected equilibrium $\sigma \in \Sigma(\bm{\mu},A_0)$ jointly induce a stochastic path of aggregate play $(A_t)_t$. The designer's problem, then, is to choose its dynamic information policy to influence the set of equilbria and thus $(A_t)_t$. 

\paragraph{Designer's problem under adversarial equilibrium selection}  
The designer's problem under commitment when nature is choosing the best equilibrium is 
\[
\sup_{\substack{\bm{\mu} \in \mathcal{M}\\ \bm{\sigma} \in \Sigma(\bm{\mu},A_0)}} \Ex^{\sigma} \Big[\phi\big(\bm{A}\big)\Big]. \tag{OPT} \label{eqn:opt}
\]
Conversely, when nature is choosing the worst equilibrium, the problem is 
\[
\sup_{\bm{\mu} \in \mathcal{M}} \inf_{\bm{\sigma} \in \Sigma(\bm{\mu},A_0)} \Ex^{\sigma} \Big[\phi\big(\bm{A}\big)\Big] \tag{ADV} \label{eqn:adv}
\]
where $\phi: \mathcal{A} \to \mathbb{R}$ is an increasing and bounded functional from the path-space of aggregate play $\mathcal{A}$, and we write $\bm{A} := (A_t)_t \in \mathcal{A}$. For instance, this could be the discounted measure of play i.e., $\phi(\bm{A}) = \int e^{-rt}A_t dt$ with discount rate $r > 0$.

\paragraph{Sequential Optimality} If the designer cannot commit to future information, off-path delivery of information might have no bite in the present. To this end, define the payoff gap at history $H_t$ as the value of the best deviation from the original policy $\bm{\mu}$: 
\[
\inf_{\bm{\sigma} \in \Sigma(\bm{\mu},A_0)} \Ex^{\sigma} \Big[\phi\big(\bm{A}\big) \Big| \mathcal{F}_t\Big]  -  
\sup_{\bm{\mu}' \in \mathcal{M}} \inf_{\bm{\sigma} \in \Sigma(\bm{\mu}',A_0)} \Ex^{\sigma} \Big[\phi(\bm{A})\Big| \mathcal{F}_t\Big] \geq 0 
\]
where $\mathcal{F}_t$ is the filtration corresponding to $H_t$. $\bm{\mu}$ is sequentially optimal if the gap is zero for all histories $H_t \in \mathcal{H}$. Sequential optimality is quite demanding and states that at every history---including off-path ones---the designer still finds it optimal to follow through with her dynamic information policy. 

\section{Optimal dynamic information} \label{sec:mainresult}

We begin with an intuitive description of a sequentially-optimal dynamic information policy for binary states before constructing it formally. With binary states, we set $\Theta = \{0,1\}$ where $1$ is the dominant state on which it is strictly dominant to play action $1$. Since beliefs are one-dimensional, with slight abuse of notation we will write $\mu_t := \mathbb{P}(\theta = 1|\mathcal{F}_t)$. Let $\psi_{UD}(A)$ be the lowest belief such that, if the current aggregate play is $A$, playing action $1$ is strictly dominant. We call the set of belief and aggregate action pairs $(\mu,A)$ such that $\mu \geq \psi_{LD}(A)$ the \emph{upper dominance region}. 

\paragraph{I. Belief and aggregate action near the upper dominance region.} First suppose that at time $t$, the public belief $\mu_t$ and aggregate play $A_t$ are close to the upper dominance region as illustrated by the blue dot labeled $(\mu_t,A_t)$ in Figure \ref{fig:intuition_1} (a). If players switch to action $1$, the designer stays silent. Then, aggregate action progressively increases as illustrated by the upward arrows in Figure \ref{fig:intuition_1} (a).

\begin{figure}[h]
\centering
\caption{Policy near upper dominance region}
    \subfloat[Silence on-path]{\includegraphics[width=0.5\textwidth]{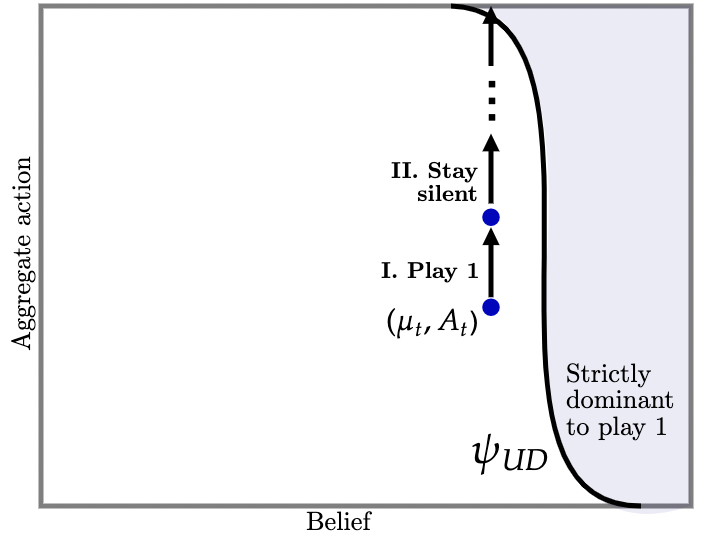}}
    \subfloat[Injection into dominance region]{\includegraphics[width=0.5\textwidth]{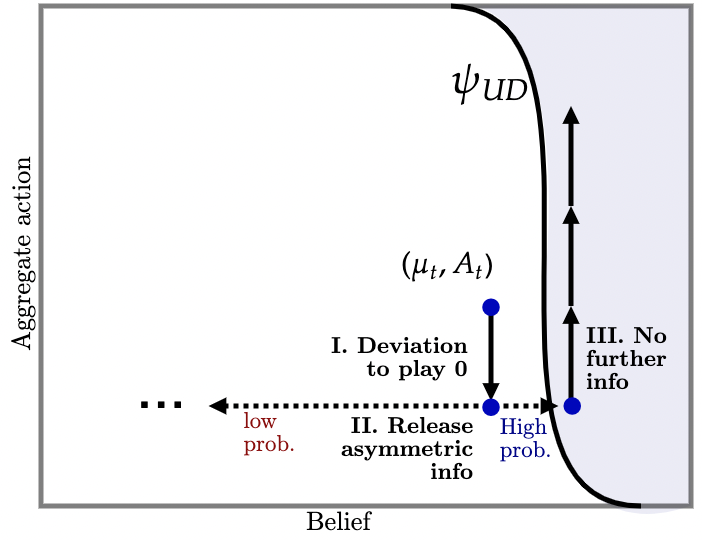}}
    \label{fig:intuition_1}
\end{figure}

Now suppose, instead, that players start playing action $0$ as depicted in Figure \ref{fig:intuition_1} (b) I. Then, the designer injects \emph{asymmetric} information such that it is very likely that agents become slightly more optimistic i.e., public beliefs move up a little and into the upper dominance region, but there is a small chance agents become much more pessimistic (Fig. \ref{fig:intuition_1} (b) II). Suppose that this deviation happened and so this information is injected and, furthermore, that it has made agents a little more optimistic. Then, on this event, future beliefs are in the upper dominance region so it is strictly dominant for future agents to take action $1$. Correspondingly, the designer delivers no further information (Fig. \ref{fig:intuition_1} (b) III) and aggregate play begins to increase thereafter. But, knowing that this sequence of events is likely to take place, and because agents have coordination motives, playing action $0$ in the first place is strictly dominated. 

\paragraph{II. Belief and aggregate action far from upper dominance region.} Next consider Figure \ref{fig:intuition_2} (a) where belief $\mu_t$ and aggregate play $A_t$ are far from the upper dominance region i.e., $\mu_t$ is far below $\psi_{LD}(A_t)$. Our previous argument now breaks down: there is no way for off-path information---no matter how cleverly designed---to ensure beliefs reach the dominance region with a high enough probability as to deter the initial deviation to action $0$. This is the key weakness of off-path information vis-a-vis off-path transfers. What then does the designer do?

\begin{figure}[h]
\centering
\caption{Chaining off-path information}
    \subfloat[Chaining]{\includegraphics[width=0.5\textwidth]{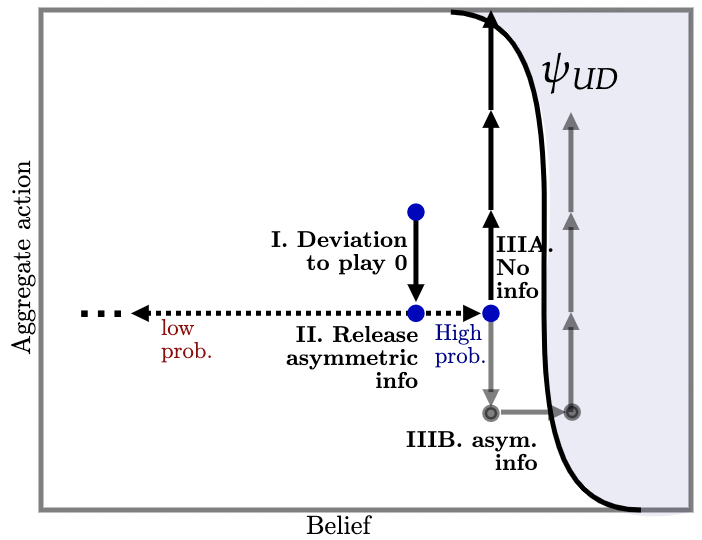}}
    \subfloat[Contagion to lower dominance]{\includegraphics[width=0.5\textwidth]{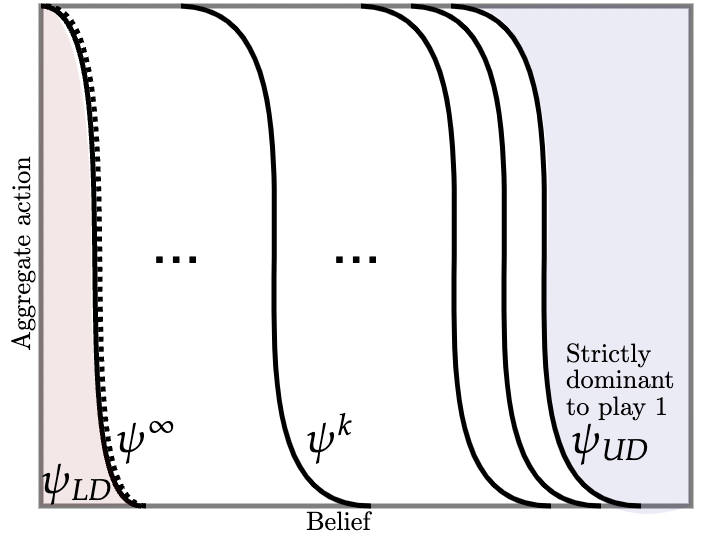}}
    \label{fig:intuition_2}
\end{figure}

If players start playing action $0$, the designer delivers asymmetric information so that, with high probability, agents become a little more confident---but, importantly, not confident enough that action $1$ is strictly dominant. This is depicted in Figure \ref{fig:intuition_2} (a) II. Upon this realization, if future agents continue deviating to $0$, the policy injects yet another bout of asymmetric information which, with high probability, pushes beliefs into the upper dominance region. This is depicted in Figure \ref{fig:intuition_2} (a) IIIB. Knowing this, we have already seen that those future agents strictly prefer to switch to $1$. But knowing that, agents in the present state $(\mu_t,A_t)$, anticipating that upon deviation the injection will, with high probability, induce future agents to play $1$, also strictly prefer to play $1$ in the present. 

What are the limits of this line of reasoning? It turns out that, by choosing our dynamic information policy carefully, we can chain together these injections of off-path information in such a way as to obtain full implementation at \emph{all} belief-aggregate pairs for which action $1$ is not strictly dominated. This is depicted by Figure \ref{fig:intuition_2} (b) where, as before, the blue region represents the upper dominance region, and the pink region represents the lower dominance region at which action $0$ is strictly dominant. The logic is related to the ``contagion arguments" of \cite*{frankel2000resolving,burdzy2001fast,frankel2003equilibrium}. These papers show that the risk-dominant action is typically selected as the limit of some iterated deletion procedure in which \emph{both} the upper dominance region (blue region) and lower dominance region (pink) expand with each iteration and meet in the middle which pins down the unique equilibrium.\footnote{In \cite*{frankel2000resolving,burdzy2001fast} this is also obtained via backward induction, where a symmetric random process governs aggregate incentives. Mapped to our model, this corresponds to public information so that the belief martingale is a time-changed Brownian motion. In \cite*{frankel2003equilibrium}, this is obtained via interim deletion of strictly dominated strategies in many-action global games, though the logic is similar.} 
By contrast, we show how dynamic information can be employed to generate \emph{asymmetric contagion} such that only the upper dominance region expands to engulf the space of \emph{all} belief-aggregate play pairs where action $1$ is not strictly dominated. 

\paragraph{III. Designer-preferred action strictly dominated.} Now suppose beliefs are so pessimistic that $1$ is strictly dominated i.e., $\mu_t \leq \psi_{LD} (A_t)$ where $\psi_{LD}(A_t)$ is the highest belief under which, given aggregtae play $A_t$, action $1$ is strictly dominated. Then, the above policy no longer works: even if players expect all future players to play $1$, they are so pessimistic about the state that playing $0$ is strictly better.  Hence, the designer must offer non-trivial information on-path to push beliefs out of the lower dominance region. How is this optimally done? 

\begin{figure}[h]
\centering
\caption{Escaping the lower dominance region}
    \subfloat[Immediate injection]{\includegraphics[width=0.25\textwidth]{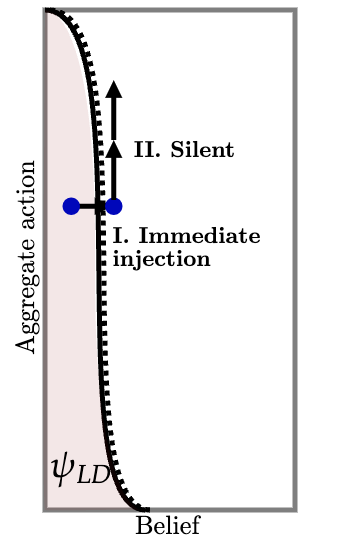}}
    \subfloat[Delayed injection]{\includegraphics[width=0.25\textwidth]{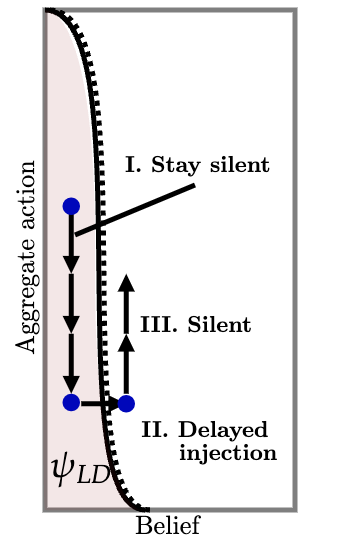}}
    \subfloat[`Smooth' injection]{\includegraphics[width=0.25\textwidth]{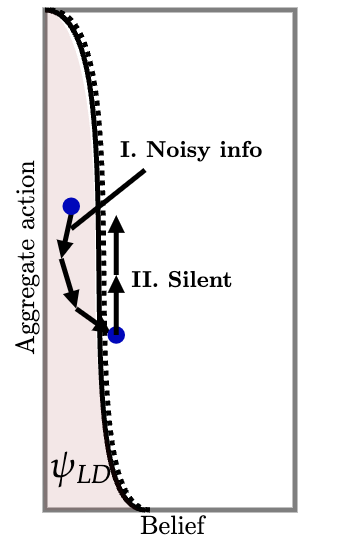}}
    \label{fig:intuition_3}
\end{figure}

Figure \ref{fig:intuition_3} (a) illustrates the optimal policy which consists of an \emph{immediate} and \emph{precise} injection of information such that beliefs jump to either $0$ or (just) out of the lower dominance region. The optimality of such a policy is built on the observation that if the designer does not intervene early to curtail players from progressively switching to $0$, it simply becomes more difficult to escape the lower dominance region down the line. Consider, for instance, the policy in Figure \ref{fig:intuition_3} (b) which also injects precise information to maximize the chance of escaping the lower dominance region, but with a delay. Before this injection, players switch to action $0$ and since $\psi_{LD}(A)$ is strictly decreasing, the probability of escaping the dominance region is strictly smaller. For similar reasons, the policy illustrated in Figure \ref{fig:intuition_3} (c) which induces continuous sample belief paths is also suboptimal.

\paragraph{IV. Sequential optimality.} Our previous discussion specified off-path injections of policies upon deviation away from the action $1$. Of course, if such deviations actually occur, the designer may not have any incentive to follow-through with its policy. For instance, consider Figure \ref{fig:intuition_4} (a) which employs the strategy of injecting `conclusive bad news' that the state is $0$ so that, with high probability beliefs increase a little, and with low probability agents learn conclusively that $\theta = 0$. Information of this form maximizes the chance that beliefs increase\footnote{As in \cite{kamenica2011bayesian} and subsequent work.} and, as we have described, these can be be chained together to achieve full implementation. However, this policy is not sequentially optimal: if agents do deviate and play action $0$, injecting such information is suboptimal because it poses an extra risk: if conclusive bad news does arrive, beliefs become absorbing at $\mu_t = 0$ and further information is powerless to influence beliefs. It then becomes strictly dominant for all agents to play $0$ thereafter. How, then, is sequential optimality obtained?

\begin{figure}[h]
\centering
\caption{Sequential optimality}
    \subfloat[Not sequentially optimal]{\includegraphics[width=0.5\textwidth]{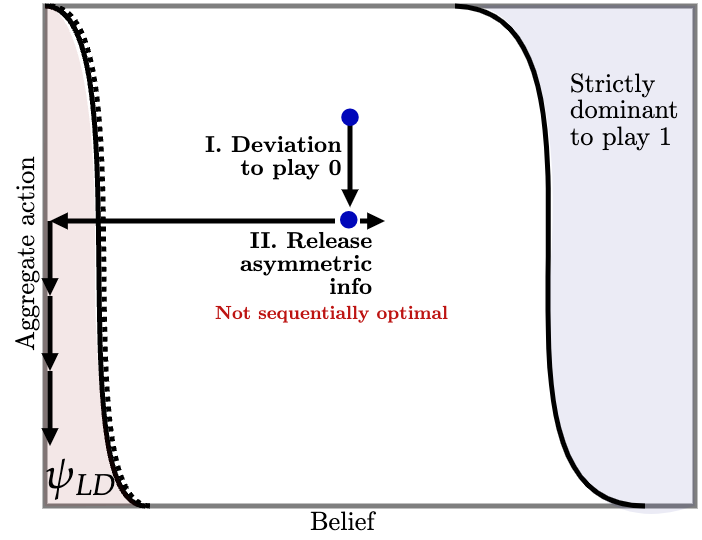}}
    \subfloat[Sequentially optimal]{\includegraphics[width=0.5\textwidth]{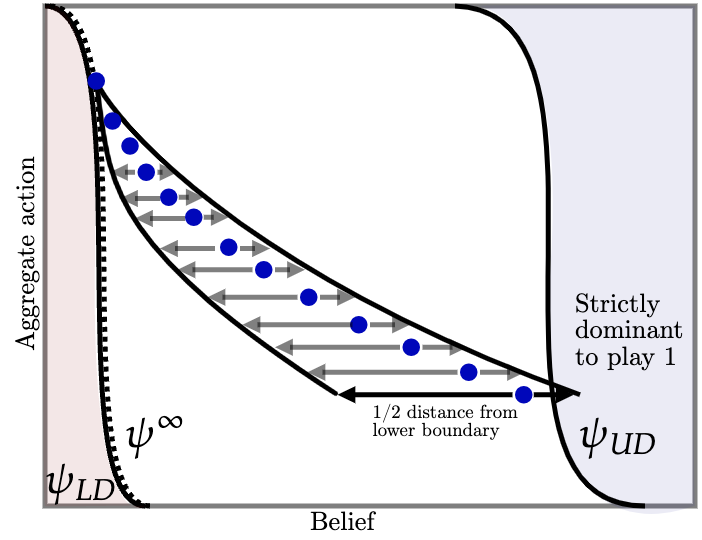}}
    \label{fig:intuition_4}
\end{figure}

Consider, instead, \emph{inconclusive} off-path information as illustrated in Figure \ref{fig:intuition_4} (b) where the arrows corresponds to each blue dot represents a potential injection of off-path information upon players' deviating to action $0$. Each injection induces two kinds of beliefs: upon arrival of a `good' signal (right arrow), agents become a little more optimistic; upon arrival of a `bad' signal, agents become \emph{relatively} more pessimistic (left arrow), but not so much that action $1$ becomes strictly dominated. Figure \ref{fig:intuition_4} illustrates a particular policy in which, upon realization of the bad signal at state $(\mu_{t},A_{t})$, agents' beliefs move halfway toward the lower dominance region i.e., to $[\mu_{t} + \psi_{LD}(A_t)]/{2}$. Conversely, if the good signal arrives, believes move up a little, so that the probability of the former is much higher than the latter. 

By choosing this distribution carefully for each belief-aggregate action pair, we can achieve full implementation via the chaining argument outlined above, which requires that (i) probability of the good signal arriving is sufficiently high as to deter deviations; and (ii) movement in beliefs generated by the good signal is sufficiently large that, when chained together, we obtain full implementation over the whole region. At the same time, this is sequentially optimal since, whenever the designer is faced with the prospect of injecting off-path information, she is willing to do so: with probability $1$ agents' posterior beliefs are such that, in the continuation subgame, full implementation of the largest time-path of play remains possible.\footnote{We emphasize that there is nothing circular about this argument: we iteratively delete switching to action $0$ under the \emph{worst-case} conjecture that, upon the bad signal arriving, all future agents play $1$. This is sufficient to obtain full implementation as long as action $1$ is not strictly dominated.} 

Sequential optimality of dynamic information has been recently studied in single-agent optimal stopping problems \citep*{koh2022attention,koh2024persuasion} who show that optimal dynamic information can always be modified to be sequentially optimal.\footnote{See also \cite{ball2023dynamic} who finds in a different single-agent contracting environment that the optimal dynamic information policy happens to be sequentially optimal.} In such environments, sequential optimality is obtained via an entirely distinct mechanism: the designer progressively delivers more interim information to {raise the agent's outside option} at future histories which, in turn, ties the designer's hands in the future. By contrast, in the present environment our designer chains together off-path information together to \emph{raise her own continuation value} by ensuring that, following any realization of the asymmetric signal, her future self can always fully implement the largest path of play.

\paragraph{Construction of informational puts.} 
We now make our previous discussion precise and general.   We will construct a particular family of belief martingales $\bm{\mu} \in \mathcal{M}$ which is `Markovian' in the sense that the `instantaneous' information at time $t$ depends only on the belief-aggregate play pair $(\mu_t,A_t)$, as well as an auxiliary $(\mathcal{F}_t)_t$-predictable process $(Z_t)_t$ we will define as part of the policy. We begin with several key definitions: 

\begin{definition} [Lower dominance region] Let $\Psi_{LD}: [0,1] \rightrightarrows \Delta(\Theta)$ denote the set of beliefs under which players prefer action $0$ even if all future players choose to play action $1$:
\begin{align*}
    \Psi_{LD}(A_t) \coloneqq  \Big\{ \mu \in \Delta(\Theta): \Ex_{\theta \sim \mu} \Big[ \int_{t}^{t + \tau} e^{-rs} \Delta u (\bar{A}_s,\theta) ds \Big] \leq 0 \Big\},
\end{align*}
where $\bar{A}_s$ solves $d \bar{A}_s = \lambda (1- \bar A_s) ds$ for $s \geq t$ with boundary $\bar{A}_t = A_t$ and $\tau$ is independently distributed according to an exponential distribution with rate $\lambda$.
\end{definition}

\begin{figure}[h]
\centering
\caption{Illustration of $\Psi_{LD}$, $\text{Bd}_{\theta^*}$, and $D$}
    \subfloat[$|\Theta| = 2$]{\includegraphics[width=0.45\textwidth]{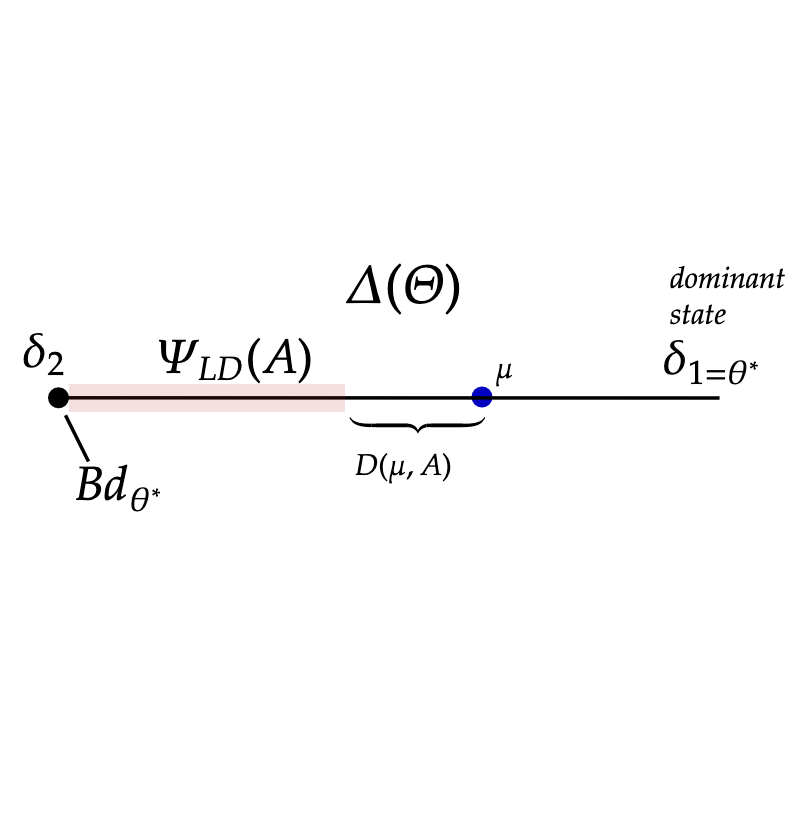}}
    \subfloat[$|\Theta| = 3$]{\includegraphics[width=0.45\textwidth]{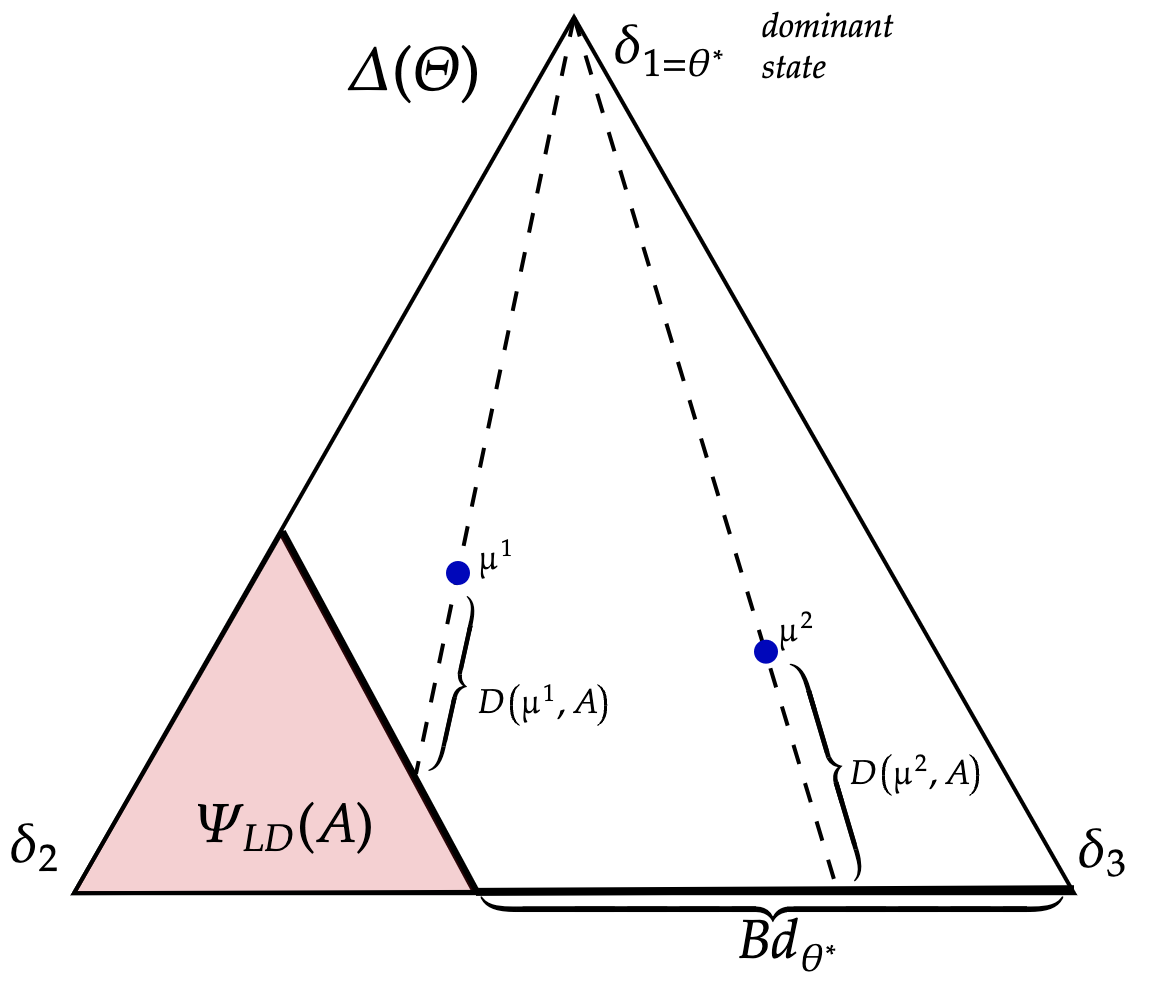}}
    \label{fig:3state_illust}
\end{figure}

\vspace{-1em}
Observe that supermodularity implies $\Psi_{LD}$ is decreasing in $A_t$: $\Psi_{LD}(A_t) \subseteq \Psi_{LD}(A'_t)$ if $A_t \geq A'_t$. 
$\Psi_{LD}$ is illustrated by the pink region of Figure \ref{fig:3state_illust} for the cases where $|\Theta| = 2$ (panel (a)) and $|\Theta| = 3$ (panel (b)).

\begin{definition}
For each $\mu_t \notin \Psi_{LD}(A_t)$ and $A_t \in [0,1]$, define
\[D(\mu_t,A_t) \coloneqq \inf\bigg\{ \alpha \in [0,1] :  \mu_t - \alpha \cdot \frac{\delta_{\theta^*} - \mu_t}{1 - \mu_t(\theta^*)} \in \Psi_{LD}(A_t) \cup \text{Bd}_{\theta^*}  \bigg\}.\]
\end{definition}
\vspace{-1em}
This gives the `distance' from current beliefs $\mu_t$ as it moves along a linear path starting from $\delta_{\theta^*}$ to either (i) the lower dominance region $ \Psi_{LD}(A_t)$; or (ii) the set of beliefs that assign zero probability on state $\theta^*$ which we denote with $\text{Bd}_{\theta^*} \coloneqq \{ \mu \in \Delta(\Theta) : \mu(\theta^*) = 0\}.$ 
This is depicted in Figure \ref{fig:3state_illust}; each blue dot represents a belief.


We next define several variables to describe our policy when beliefs are outside of the lower dominance region. 

\begin{definition}[Tolerance, upward/downward jump sizes, belief direction] \label{defn:outsideLD} Define: 
\begin{figure}[H]
\begin{minipage}[t]{0.5\linewidth}
\begin{itemize}[leftmargin = 2em]
    \item[(i)] \textbf{Tolerance.} $\mathsf{TOL}(D)$ specifies the magnitude of deviation of off-path play vis-a-vis a target $Z_t$. If this is exceeded, the policy injects additional information. 
    \item[(ii)] \textbf{Upward jump size.}  $M\cdot \mathsf{TOL}(D)$ scales the tolerance by $M > 0$. This specifies the upward movement in beliefs if the injected information is positive. 
    \item[(iii)] \textbf{Downward jump size.} $\mathsf{DOWN}(D)$ specifies the downward movement in beliefs if the injected information is negative. 
    \item[(iv)] \textbf{Belief direction.} $\hat{\bm{d}}(\mu) \in \mathbb{R}^{n}$ specifies the direction of belief movements. We set it as the directional vector of $\mu$ towards $\delta_{\theta^*}$.
\end{itemize}
\end{minipage}%
\hfill%
\begin{minipage}[t]{0.47\textwidth}\vspace{0pt}
\centering 
\vspace{0em}
{\includegraphics[width=\textwidth]{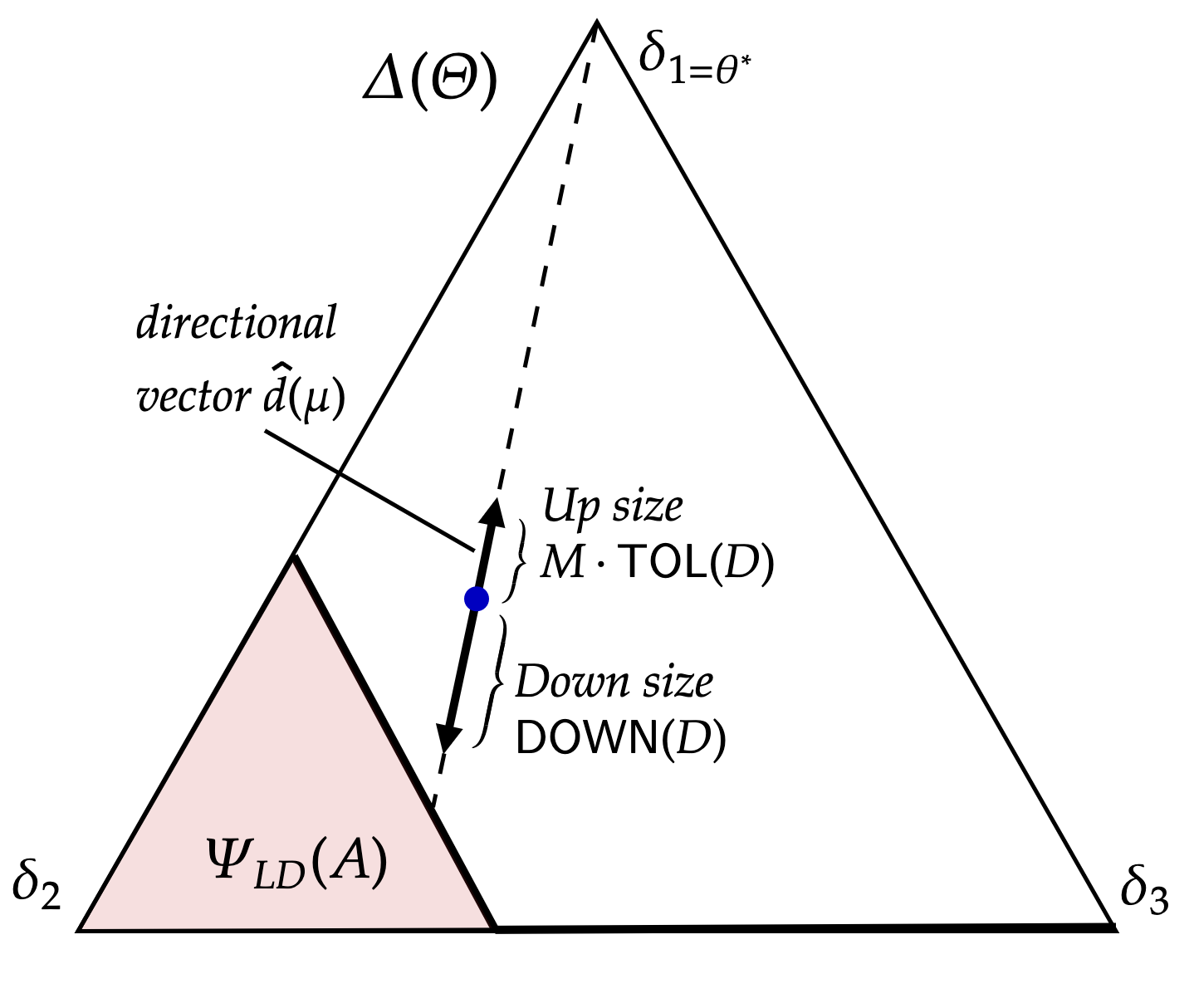}}
\caption{Illustration of up/downward jump sizes and belief directions}
\vspace{-2em}
\end{minipage}
\end{figure}
\end{definition}

Definition \ref{defn:outsideLD} specifies variables to describe our information policy when beliefs lie outside of the lower dominance region $\mu \notin \Psi_{LD}$. We now define variables to describe our information policy when beliefs lie inside the lower dominance region  $\mu \in \Psi_{LD}$. 
\clearpage 

\begin{definition}[Maximal escape probability and beliefs] \label{defn:insideLD} Define: \begin{figure}[H]
\begin{minipage}[t]{0.5\linewidth}
\begin{itemize}[leftmargin = 2em]
    \item[(i)] The set of beliefs which are attainable with probability $p$ from $\mu$ is
    \[F(p,\mu) \coloneqq \{\mu' \in \Delta(\Theta): p\mu' \leq \mu \}\]
    which follows from the martingale property of beliefs. 
    \item[(ii)] \textbf{Maximal escape probability.} $p^*(\mu,A) \coloneqq \max \{p \in [0,1]: F(p,\mu) \subseteq \Psi_{LD}(A)\}$ is a tight upper-bound on the probability that beliefs escape $\Psi_{LD}$. 
    \item[(iii)] The \textbf{maximal escape beliefs} are 
    \vspace{-1em}
    \[\partial({\eta},\mu):= F(p^*(\mu,A)-\eta,\mu) \cap \Psi_{LD}^c(A)\]
    \vspace{0em} 
    where $\Psi^c_{LD}(A) = \Delta(\Theta) \setminus \Psi_{LD}(A)$.
\end{itemize}
\end{minipage}%
\hfill%
\begin{minipage}[t]{0.48\textwidth}\vspace{0pt}
\centering 
\vspace{0em}
{\includegraphics[width=\textwidth]{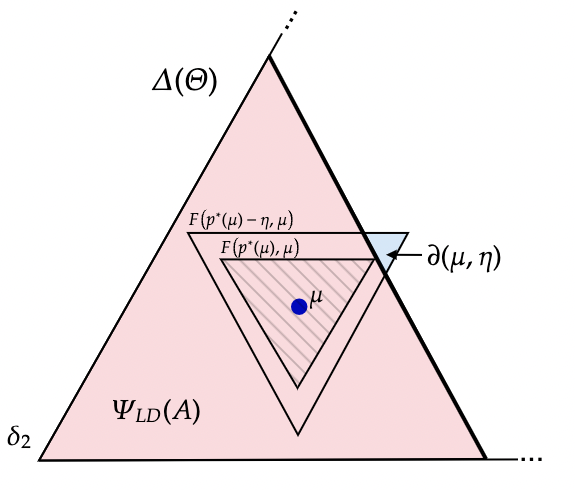}}
\caption{Illustration of $F(p,\mu)$, maximal escape probability, and maximal escape beliefs}
\vspace{-2em}
\end{minipage}
\end{figure}
\end{definition}

We are (finally!) ready to define our family of dynamic information policies. Recall that $\bm{\mu}$ is c\`adl\`ag so has left-limits which we denote with $ \mu_{t-} := \lim_{t' \uparrow t} \mu_t$. We will simultaneously specify the law of $\bm{\mu}$ as well as construct the stochastic process $(Z_t)_t$ which is $(\mathcal{F}_t)_t$-predictable\footnote{That is, $Z_t$ is measurable with respect to the left filtration $\lim_{s \uparrow t} \mathcal{F}_s$.} and initializing $Z_0 = A_0$. $(Z_t)_t$ is interpreted as the \emph{target aggregate play} at each history. Given the tuple $(\mu_{t-},Z_{t-},A_t)$,  define the time-$t$ information structure and law of motion of $Z_t$ as follows:
\begin{enumerate}
    \item \textbf{Silence on-path.} If action $1$ is not strictly dominated i.e., $\mu_{t-} \notin \Psi_{LD}(A_t)$ and play is within the tolerance level i.e., $|A_t - Z_{t-}| < \mathsf{TOL}(D)$ then 
    \begin{center}
        $\mu_t = \mu_{t-}$ almost surely,
    \end{center}
    i.e., no information, and $dZ_t = \lambda(1-Z_{t-}).$
    \item \textbf{Asymmetric and inconclusive off-path injection.} 
    If action $1$ is not strictly dominated i.e., $\mu_{t-} \notin \Psi_{LD}(A_t)$ and play is outside the tolerance level i.e., $|A_t - Z_{t-}| \geq \mathsf{TOL}(D)$ then
    \begin{align*}
        \mu_t = \begin{cases}
            \mu_{t-} + (M \cdot \mathsf{TOL}(D))\cdot \hat{d}(\mu_{t-}) &\text{w.p. $\frac{\mathsf{DOWN}(D)}{\mathsf{DOWN}(D)+ M\cdot\mathsf{TOL}(D)}$} \\
            \mu_{t-} -  \mathsf{DOWN}(D) \cdot \hat{d}(\mu_{t-}) &\text{w.p. $\frac{M \cdot \mathsf{TOL}(D)}{\mathsf{DOWN}(D)+ M\cdot\mathsf{TOL}(D)}$},
        \end{cases}
    \end{align*}
    where $\hat{d}(\mu) := \frac{\delta_{\theta^*} - \mu}{ 1 - \mu(\theta^*) }$ is the (normalized) directional vector of $\mu$ toward $\delta_{\theta^*}$, and reset $Z_t = A_t$.
    \item \textbf{Jump.} If action $1$ is strictly dominated i.e., $\mu_{t-} \in \Psi_{LD}(A_t)$ then beliefs jump to a maximal escape point: pick any $\mu^+ \in \partial(\mu,\eta)$ and 
    \begin{align*}
        \mu_t = \begin{cases}
            \mu^+ &\text{w.p. $p^*(\mu_{t-},A_t) - \eta$} \\
            \mu^- &\text{w.p. $1-(p^*(\mu_{t-},A_t) - \eta)$.\footnotemark}
        \end{cases}
    \end{align*}
    \footnotetext{From the martingale property of beliefs, this can be computed as $\mu^- = \frac{\mu_t - (p^*(\mu_{t-},A_t) - \eta) \mu^+}{ 1-(p^*(\mu_{t-},A_t) - \eta)}.$}
\end{enumerate}


We have defined a family of information structures parametrized by $\mathsf{TOL}(D)$ (tolerance), $M\cdot \mathsf{TOL}(D)$ (upward jump size), $\mathsf{DOWN}(D)$ (downward jump size), and $\eta$ (distance beliefs jump outside the lower dominance region $\Psi_{LD}$). There is some flexibility to choose them: we will set $\mathsf{DOWN}(D) = \frac{1}{2}D$, $\mathsf{TOL}(D) = m \cdot D^2$ where $m > 0$ is a small constant, $M > 0$ is a large constant. 

We choose $m$ small so that $\mathsf{TOL}(D)$, the upward jump size, is much smaller than the downward jump size---this ensures that the probability of becoming (a little) more optimistic is much larger. $M$ is the ratio between the upward jump size and the tolerance---it is large to guarantee that off-path information can push future beliefs into the upper dominance region. The exact choice of $m$ and $M$ will depend on the primitives of the game, but are independent of $\eta$ (how far outside the lower dominance region beliefs jump); a detailed construction is in Appendix \ref{appendix:proofs}. Parameterize this family of policies by $(\bm{\mu}^{\eta})_{\eta > 0}$ which we call \emph{informational puts}.  

\begin{theorem} \label{thrm:main} In the limit $\eta \downarrow 0$, informational puts close the multiplicity gap and is sequentially optimal: 
\begin{itemize}
    \item[(i)] \textbf{Form and value.} 
    \[
    \lim_{\eta \downarrow 0} \inf_{\sigma \in \Sigma(\bm{\mu}^{\eta}, A_0)} \Ex^{\sigma}\Big[\phi(\bm{A})\Big] = \eqref{eqn:adv} = \eqref{eqn:opt}.\]
    \item[(ii)] \textbf{Sequential optimality.} 
    \[
    \lim_{\eta \downarrow 0} \sup_{H_t \in \mathcal{H}} \Bigg| \inf_{\bm{\sigma} \in \Sigma(\bm{\mu}^{\eta},A_0)} \Ex^{\sigma} \Big[\phi\big(\bm{A}\big) \Big| \mathcal{F}_t\Big]  - 
\sup_{\bm{\mu}' \in \mathcal{M}} \inf_{\bm{\sigma} \in \Sigma(\bm{\mu}',A_0)} \Ex^{\sigma} \Big[\phi(\bm{A})\Big| \mathcal{F}_t\Big] \Bigg| = 0.
    \]
\end{itemize}
\end{theorem}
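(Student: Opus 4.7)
} My strategy is to show that, under the informational puts policy $\bm{\mu}^{\eta}$, every subgame perfect equilibrium of the continuation game from any history with $\mu_{t-}\notin\Psi_{LD}(A_t)$ has each player switching to action $1$ at every clock tick, inducing the maximal physical path $dA_s=\lambda(1-A_s)ds$ from $A_t$. Combined with the jump step, which for $\mu_{t-}\in\Psi_{LD}(A_t)$ attains escape probability $p^*(\mu,A)-\eta$, this realizes the pointwise upper bound on aggregate play. Since $\phi$ is increasing, this bound coincides with both \eqref{eqn:adv} and \eqref{eqn:opt} in the limit $\eta\downarrow 0$, giving (i). Applying the same argument at every continuation history simultaneously yields (ii): after every injection the successor state either lies outside $\Psi_{LD}$ (where $\bm{\mu}^{\eta}$ implements the maximum) or is a freshly-drawn maximal escape target, so the continuation value matches $\sup_{\bm{\mu}'}\inf_{\sigma'}\Ex^{\sigma'}\bigl[\phi(\bm{A})\mid\mathcal{F}_t\bigr]$ within $\eta$.

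The crux is an intertemporal contagion over $(\mu,A)$-pairs. I construct an increasing family $S_0\subseteq S_1\subseteq\cdots$ inside $\{(\mu,A):\mu\notin\Psi_{LD}(A)\}$, with $S_0$ the upper dominance region, and
\[
S_{n+1}=\Big\{(\mu,A):\mu\notin\Psi_{LD}(A)\text{ and }(\mu+M\cdot\mathsf{TOL}(D)\cdot\hat{d}(\mu),A)\in S_n\Big\}.
\]
At $(\mu,A)\in S_{n+1}\setminus S_n$, any positive-measure coordinated deviation to action $0$ drives $A_t$ below $Z_{t-}-\mathsf{TOL}(D)$ and triggers the off-path injection. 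With probability $\mathsf{DOWN}(D)/(\mathsf{DOWN}(D)+M\cdot\mathsf{TOL}(D))=1/(1+2MmD)$---uniformly close to $1$ by taking $Mm$ small---the posterior pairs with $A_t$ inside $S_n$, so by induction all future agents play $1$ and $A_s$ follows the maximal path thereafter. With complementary probability the downward jump of size $D/2$ moves beliefs exactly halfway along $\hat{d}(\mu)$ toward $\Psi_{LD}(A)\cup\text{Bd}_{\theta^*}$ by the definition of $D$, preserving $\mu_t\notin\Psi_{LD}(A)$ (inconclusiveness) so continuation remains feasible. A one-shot-deviation comparison, using strict supermodularity of $\Delta u$ in $A$ and the positive coordination premium on the upward-jump event, then shows that staying at action $1$ is strictly better than deviating. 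Choosing $m$ small (to guarantee the upward-jump probability) and $M$ large (so each jump of size $MmD^2$ crosses the next level) depends only on primitives of $u$, and a potential-style termination bound ensures $\bigcup_n S_n$ exhausts the complement of $\Psi_{LD}$ in finitely many iterations.

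For $\mu_{t-}\in\Psi_{LD}(A_t)$, $p^*(\mu,A)$ is by construction the maximal Bayes-plausible exit probability from $\Psi_{LD}(A)$, the jump step attains $p^*-\eta$, and on the escape event the state enters the contagion region where the argument above resumes; on the complementary event the continuation is bounded, and the induced gap is $O(\eta)$. I expect the main obstacle to be the uniform Bellman comparison in the contagion step: as $D\to 0$ both the coordination premium (which vanishes continuously at the $\Psi_{LD}$-boundary) and the downward-jump probability $O(MmD)$ shrink at different orders, so the strict inequality ruling out deviation must be secured uniformly in $D$ by calibrating $m$ against the marginal coefficient of $\Delta u$ in $A$ (positive by strict supermodularity and continuous differentiability) and $M$ against the induced level width. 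The remaining pieces---martingale consistency of the injection and the no-improvement property at every history required for (ii)---follow directly from the construction by inspection.
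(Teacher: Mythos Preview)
Your high-level architecture matches the paper's: an iterated-dominance contagion from the upper dominance region, driven by asymmetric inconclusive off-path injections, followed by a maximal-escape jump inside $\Psi_{LD}$ and a continuation check for sequential optimality. Where your plan has a genuine gap is the induction step itself.

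Your sets are defined by
\[
S_{n+1}=\Big\{(\mu,A):\ (\mu+M\cdot\mathsf{TOL}(D)\cdot\hat d(\mu),\,A)\in S_n\Big\},
\]
i.e.\ you test membership in $S_n$ at the \emph{current} aggregate $A$. But the injection does not fire at $(\mu,A)$; it fires at the hitting time $T^*$ when $Z-A$ first exceeds $\mathsf{TOL}$, and by then the aggregate has moved to some $A_{T^*}$ which depends on the (adversarial) continuation path and can lie strictly below $A$. The object you actually need in the induction is $(\mu+M\cdot\mathsf{TOL}(D(\mu,A_{T^*})),\,A_{T^*})\in S_n$, and nothing in your definition controls that. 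The paper closes this by building the sequence with a \emph{translation property}: each $\Psi^n$ is a rigid vertical shift of $\Psi_{LD}$ by a constant $c_n$, so $\psi^n$ inherits the Lipschitz constant $L_\psi$ of $\psi_{LD}$. Together with the bound $A_{T^*}>A_t-\mathsf{TOL}(D(\mu_t,A_t))$, this pins down $M=2L_\psi$ exactly (not ``$M$ large''): half of the upward jump compensates the Lipschitz drift in $\psi^n$ caused by the change $A_t\to A_{T^*}$, and the other half delivers the net gain. Without the translation structure or something equivalent, the step from $S_{n+1}$ to $S_n$ does not close.

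Two smaller points. First, the deviation comparison is not one-shot at $T^*$; the agent's payoff integrates $\Delta u(A_s,\theta)$ over $[t,\tau]$, including the pre-$T^*$ stretch where $A_s$ is only within $\mathsf{TOL}$ of the target. The paper's bound \eqref{eq: lower bound} decomposes into a ``before $T^*$'' and ``after $T^*$'' piece and shows the total correction is $\frac{L}{\lambda}\big(\mathsf{TOL}(c_n)(1+p_n)+(1-p_n)\big)$; this is where the specific choice $\mathsf{TOL}(D)\propto D^2$ and $\mathsf{DOWN}(D)=D/2$ makes both corrections $O(\bar\delta\,D)$ against a benefit of order $CD$. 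Second, your ``finitely many iterations'' is not right: with $\mathsf{TOL}(D)=mD^2$ the recursion $c_{n+1}+\tfrac{M}{2}\mathsf{TOL}(c_{n+1})=c_n$ gives $c_n\downarrow 0$ only asymptotically. This does not damage the conclusion, but your potential-style termination claim should be replaced by the fixed-point argument $\mathsf{TOL}(c^*)=0\Rightarrow c^*=0$.
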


\begin{proof}
    See Appendix \ref{appendix:proofs}. 
\end{proof}

\section{Robustness and generalizations} \label{sec:robustness}
Our dynamic game is quite general in some regards but more specific in others. We now discuss which aspects are crucial, and which can be relaxed. 

\paragraph{Continuum vs finite players} We worked with a continuum of players so there is no aggregate randomness in the time path of agents who can re-optimize their action. This delivers a cleaner analysis since the only source of randomness is fluctuations in beliefs driven by the policy. In Online Appendix \ref{appendix:finite_players} we show that an analog of Theorem \ref{thrm:main} holds in a model with a large but finite number of players and information is---as in the main text---only conditioned on aggregate rather than individual play.\footnote{See \cite{aumann1966} and \cite{fudenberg1986limit,levine1995agents} for a discussion of the subtleties between continuum and finite players.} In the finite player version of our model, the \emph{same} informational puts policy that was optimal in the continuum case continues to solve the adversarial problem and close the multiplicity gap for large but finite number of players whenever initial beliefs are such that action $1$ is not strictly dominated; see Theorem \ref{thrm: finite} in Online Appendix \ref{appendix:finite_players}.\footnote{In particular, informational puts closes the multiplicity gap between adversarial and designer-optimal selection in the finite case at rate $O(N^{-1/9})$ where $N$ is the number of players. The case where the prior is inside the lower dominance region is more subtle and is also analyzed in Online Appendix \ref{appendix:finite_players}.} 

\paragraph{On-path vs off-path information.} An artifact of working with a continuum is that there is no aggregate randomness in the ticks of players' clocks. Our finite player model reintroduces aggregate randomness which can lead aggregate play to---even if all players switch to $1$ as soon as they can---fall short of target play.\footnote{This was previously ruled out by the continuum law of large numbers.} Thus, in the finite player model, our policy now injects information with positive probability (see Online Appendix \ref{appendix:finite_players}) but continues to achieve full implementation. This clarifies that the effectiveness of offering informational puts do not hinge on the delivery of information being zero-probability \emph{per se}. 


\paragraph{Public vs private information.} When the initial condition is such that playing $1$ is not strictly dominated, our policy fully implements the upper-bound on the time path of aggregate play. Thus, private information clearly cannot do better. If initial beliefs are such that the designer's preferred action is strictly dominated, however, this is more subtle.\footnote{It is still an open question as to how to characterize feasible joint paths of higher-order beliefs.} We analyze this in Online Appendix \ref{appendix:private}.

\paragraph{Homogeneous vs heterogeneous players.} Payoffs in our environment are quite general, with the caveat that they were identical across players. It is well-known that introducing heterogeneity typically aids equilibrium uniqueness in supermodular games. Since we have already closed the multiplicity gap under homogeneous payoffs,\footnote{At least when initial beliefs are such that action $1$  is not strictly dominated.} we expect that heterogeneity will make full implementation easier. 

\paragraph{Switching frictions.} Switching frictions are commonly used to model switching costs, inattention, or settings with some staggered structure. They are important in our environment because our policy can then inject information as soon as players begin deviating from the designer's preferred action. This allows off-path information to be chained together by correcting incipient deviations as soon as they occur. By contrast, if players could continually re-optimize their actions, then off-path information is powerless to rule out equilibria of the form ``all simultaneously switch to $0$".\footnote{Indeed, prior work which obtained equilibrium uniqueness (of risk-dominant selection) \citep*{frankel2000resolving,burdzy2001fast} do so via switching frictions. Switching frictions are prevalent in macroeconomics but, as \cite{angeletos2016incomplete} note, \emph{``It is then
somewhat surprising that this approach} [combining aggregate shocks with switching frictions to generate uniqueness] \emph{has not attracted more attention in applied research.''}} 
We emphasize, however, that it would suffice for \emph{some} frictions to exist, but the exact form is not particularly important: the switching rate could vary with aggregate play,  change over time, and can be taken to be arbitrarily quick or slow.

\paragraph{Full implementation via dynamic cheap talk.} 
We have worked within the framework of dynamic information design. Implicit in this framework are two distinct assumptions: (i) `within period' commitment to a statistical experiment as in static information design; and (ii) `between period' intertemporal commitment to follow through with promised information. Theorem \ref{thrm:main} shows that the latter is unnecessary; what of the former? 

If initial beliefs are such that action $1$ is not strictly dominated, the largest path of play can be implemented via dynamic cheap talk with {no commitment---within or between periods}.\footnote{This is driven by the following sequence of simple observations. First, observe that the value of dynamic cheap talk is upper-bounded by the value of dynamic information design. Next, from Theorem \ref{thrm:main} dynamic information design can always uniquely implement the largest possible time-path of play in a sequentially optimal manner. Thus, we can construct \emph{an} equilibrium of a dynamic cheap talk game in which the policymaker randomizes exactly as she would under dynamic information design. In this equilibrium the policymaker achieves her first-best payoff.}
We highlight two caveats. First, dynamic cheap talk reintroduces a different kind of multiplicity since there will always be a babbling equilibrium. Thus, this applies only to environments in which the policymaker \emph{fears miscoordination} but not \emph{miscommunication}.\footnote{Loosely, this require adversarial equilibrium selection to apply only to the dynamic coordination game among agents.} Second, when the initial condition is such that action $1$ is strictly dominated, then some `within period' commitment is indeed required to push beliefs out of the lower dominance region. 

\paragraph{Designer has full control over information.} We have assumed that the designer is the sole provider of information. This simplified our analysis and allowed us to obtain clean results. Nonetheless, informational puts can remain efficacious in environments where agents have access to alternate sources of information. For instance, in the dynamic regime change game analyzed in Appendix \ref{appendix:examples}, agents observe whether the regime has yet survived and so make additional inferences about the state---thus, the designer is now constrained by the introduction of outside information and cannot choose any belief martingale. Nonetheless, we show that a version of Theorem \ref{thrm:main} obtains: informational puts continues to (i) solve the adversarial problem; and (ii) close the multiplicity gap.

\section{Discussion}\label{section:discussion}
We have shown that dynamic information is a powerful tool for full implementation in general binary-action supermodular games. In doing so, we highlighted key properties of sequentially-optimal dynamic information: asymmetric and inconclusive signals are chained together off-path to achieve full implementation while ensuring the policymaker has no incentive to deviate from her promised information. 
We conclude with a brief discussion of implications.

\paragraph{Information as insurance.}
Our policy of offering an informational put ensures that, if future players lose faith and switch to action $0$, information is appropriately injected to (with high probability) correct the path of aggregate play. That is, our information policy serves as \emph{credible insurance} against strategic uncertainty, rather than as {punishment} for playing the designer's less preferred action.\footnote{Indeed, another key distinction from transfers is that information provision \emph{per se} cannot serve as punishment since its value is non-negative in single-agent settings.}

This distinction is seen most clearly by noting that our policy is effective---both with a continuum and finite players---although the  designer's policy is conditioned only on aggregate play and cannot detect and respond to individual deviations, a more demanding requirement \citep{levine1995agents}. Moreover, our analysis in the continuum case clarifies that players do not need to believe that they can individually influence game for informational puts to work. This is in stark contrast to work on dynamic coordination, durable goods monopolist, or public-good provision where it is crucial that each agent's action makes a non-negligible difference.\footnote{For instance \cite{gale1995dynamic} highlights a gap between a continuum and finite number of players in dynamic coordination games. A similar gap emerges in durable goods monopolist \citep*{fudenberg1985infinite,gul1986foundations,bagnoli1989durable}. See also recent work by \cite{battaglini2024dynamic} in public goods context where the fact that each agent can influence the state (by a little) is important.} 
Our key insight that this is not required: a dynamic policy with a \emph{moving target}---such that information is injected if aggregate play comes up short---can deliver full implementation by insuring players against strategic uncertainty. 

This view of information as insurance links informational puts to a well-studied instrument in the context of bank runs: deposit insurance. Both instruments serve as insurance against strategic uncertainty and can uniquely implement the designer-preferred equilibrium in which nobody runs. 
 But information also differs in substantive ways: it applies to environments where transfers might be infeasible or unnatural, and can be credibly implemented without any `backing'.\footnote{For instance, \cite{dybvig2023nobel} discussing an alternate form of insurance---a discount window for troubled banks---writes: \emph{``...the central bank may suffer from a credibility problem. If people
are unsure whether the central bank will always fund the banks... they [the depositors] may still run on the bank.''}}

\paragraph{Implications for implementation via information.} A recent literature on information and mechanism design finds that in static environments, there is generically a multiplicity gap---the designer can do strictly better under partial rather than full implementation \citep*{morris2024implementation,halac2021rank}.\footnote{See also \cite*{inostroza2023adversarial,li2023global,morris2022joint,halac2024pricing}.} We showed that the careful design of dynamic public information can quite generally close this gap in dynamic coordination environments.\footnote{Moreover, information in our environment is public so higher-order beliefs are degenerate; by contrast, optimal static implementation via information typically requires inducing non-degenerate higher-order beliefs.}

But do our results demand more of players' rationality and common knowledge of rationality? Yes and no. On the one hand, it is well-known that in environments like ours, there is a tight connection between the iterated deletion of \emph{interim} strictly dominated strategies (as in \cite*{frankel2003equilibrium}) and backward induction, which can be viewed as the iterated deletion of \emph{intermporally} strictly dominated strategies (as in \cite*{frankel2000resolving,burdzy2001fast}). In this regard, we do not think our results require "more sophistication" of agents than in static environments. On the other hand, it is also known that common knowledge of rationality is delicate in dynamic games and must continue to hold at off-path histories.\footnote{See \cite{aumann1995backward}. \cite{samet2005counterfactuals} offers an entertaining discussion. This motivates implementation in "initial rationalizability" when the designer has freedom to design the extensive form game \citep*{chen2023getting}.} In this regard, our stronger results are obtained at the price of arguably stronger assumptions on common knowledge of rationality.

\paragraph{Implications for coordination policy.} Our results have simple policy implications. It is often held that to prevent agents from playing undesirable equilbiria, policymakers must deliver substantial on-path information. For instance, Mario Draghi's `whatever it takes' speech is often cited as an instance in which an informational injection led to an equilibrium switch.\footnote{See \cite{morris2019crises} for a recent articulation of this idea in static games in which a large (informational) shock can lead to equilibrium switching.}

Our results offer a different view since no information is delivered on the path of equilibrium play. This is in contrast to the important work of \cite{basak2020diffusing,basak2024panics} who analyze dynamic regime change games with an irreversible `attack' decision. They show that a designer wishing to preserve the regime can do so with appealingly simple policies: `frequent stress tests' \citep{basak2020diffusing} or a `timely disaster alert' \citep{basak2024panics}---both of which deliver substantial information on-path. However, these policies are also quite specific to their environment---as \cite{basak2024panics} write: ``\emph{if the principal’s
objective is to minimize the size of the attack or to delay the attack as much as possible even when
a disaster is inevitable... then the timely disaster alert is not a desirable policy.}''\footnote{\cite{basak2024panics} go on to conjecture that \emph{``In such cases, private disclosure could be useful to ensure miscoordination
and therefore delay the crisis''.} Perhaps surprisingly, Theorem \ref{thrm:main} shows that with switching frictions this is not required: public information suffices.} 

Our policy of offering informational puts as insurance against strategic uncertainty, though perhaps more complicated, works for broader class of coordination environments and richer designer objectives. For instance, when our setting is specialized to a dynamic regime change game, this can incorporate the designer's desire to minimize the attack or delay the regime's failure.\footnote{Our environment and those of \cite{basak2020diffusing,basak2024panics} are non-nested since we have switching frictions. Nonetheless they are in a similar spirit; in Appendix \ref{appendix:examples} we show how our results apply to dynamic regime change games and stopping games.} Moreover, it prescribes qualitatively different kinds of information: when public beliefs are so pessimistic that the designer-preferred action is strictly dominated, an \emph{early} and \emph{precise} injection of on-path information is indeed required; waiting only makes implementation harder. But as long as beliefs are not so pessimistic, no additional information is required on the path of equilibrium play: \emph{silence} backed by the \emph{credible} promise to inject asymmetric and inconclusive information suffices. 

\setstretch{0.8}
\setlength{\bibsep}{0pt}
\small 
\bibliography{WorksCited}

\begin{thebibliography}{73}
\newcommand{\enquote}[1]{``#1''}
\expandafter\ifx\csname natexlab\endcsname\relax\def\natexlab#1{#1}\fi

\bibitem[\protect\citeauthoryear{Abreu and Matsushima}{Abreu and
  Matsushima}{1992}]{abreu1992virtual}
\textsc{Abreu, D. and H.~Matsushima} (1992): \enquote{Virtual implementation in
  iteratively undominated strategies: complete information,}
  \emph{Econometrica: Journal of the Econometric Society}, 993--1008.

\bibitem[\protect\citeauthoryear{Aghion, Fudenberg, Holden, Kunimoto, and
  Tercieux}{Aghion et~al.}{2012}]{aghion2012subgame}
\textsc{Aghion, P., D.~Fudenberg, R.~Holden, T.~Kunimoto, and O.~Tercieux}
  (2012): \enquote{Subgame-perfect implementation under information
  perturbations,} \emph{The Quarterly Journal of Economics}, 127, 1843--1881.

\bibitem[\protect\citeauthoryear{Angeletos, Hellwig, and Pavan}{Angeletos
  et~al.}{2007}]{angeletos2007dynamic}
\textsc{Angeletos, G.-M., C.~Hellwig, and A.~Pavan} (2007): \enquote{Dynamic
  global games of regime change: Learning, multiplicity, and the timing of
  attacks,} \emph{Econometrica}, 75, 711--756.

\bibitem[\protect\citeauthoryear{Angeletos and Lian}{Angeletos and
  Lian}{2016}]{angeletos2016incomplete}
\textsc{Angeletos, G.-M. and C.~Lian} (2016): \enquote{Incomplete information
  in macroeconomics: Accommodating frictions in coordination,} in
  \emph{Handbook of macroeconomics}, Elsevier, vol.~2, 1065--1240.

\bibitem[\protect\citeauthoryear{Arieli, Babichenko, Sandomirskiy, and
  Tamuz}{Arieli et~al.}{2021}]{arieli2021feasible}
\textsc{Arieli, I., Y.~Babichenko, F.~Sandomirskiy, and O.~Tamuz} (2021):
  \enquote{Feasible joint posterior beliefs,} \emph{Journal of Political
  Economy}, 129, 2546--2594.

\bibitem[\protect\citeauthoryear{Aumann}{Aumann}{1966}]{aumann1966}
\textsc{Aumann, R.} (1966): \enquote{Existence of Competitive Equilibria in
  Markets with a Continuum of Traders,} \emph{Econometrica}, 34, 1--17.

\bibitem[\protect\citeauthoryear{Aumann}{Aumann}{1995}]{aumann1995backward}
\textsc{Aumann, R.~J.} (1995): \enquote{Backward induction and common knowledge
  of rationality,} \emph{Games and economic Behavior}, 8, 6--19.

\bibitem[\protect\citeauthoryear{Bagnoli, Salant, and Swierzbinski}{Bagnoli
  et~al.}{1989}]{bagnoli1989durable}
\textsc{Bagnoli, M., S.~W. Salant, and J.~E. Swierzbinski} (1989):
  \enquote{Durable-goods monopoly with discrete demand,} \emph{Journal of
  Political Economy}, 97, 1459--1478.

\bibitem[\protect\citeauthoryear{Ball}{Ball}{2023}]{ball2023dynamic}
\textsc{Ball, I.} (2023): \enquote{Dynamic information provision: Rewarding the
  past and guiding the future,} \emph{Econometrica}, 91, 1363--1391.

\bibitem[\protect\citeauthoryear{Basak and Zhou}{Basak and
  Zhou}{2020}]{basak2020diffusing}
\textsc{Basak, D. and Z.~Zhou} (2020): \enquote{Diffusing coordination risk,}
  \emph{American Economic Review}, 110, 271--297.

\bibitem[\protect\citeauthoryear{Basak and Zhou}{Basak and
  Zhou}{2024}]{basak2024panics}
---\hspace{-.1pt}---\hspace{-.1pt}--- (2024): \enquote{Panics and early
  warnings,} \emph{Journal of Political Economy}, Forthcoming.

\bibitem[\protect\citeauthoryear{Battaglini and Palfrey}{Battaglini and
  Palfrey}{2024}]{battaglini2024dynamic}
\textsc{Battaglini, M. and T.~R. Palfrey} (2024): \enquote{Dynamic Collective
  Action and the Power of Large Numbers,} Tech. rep., National Bureau of
  Economic Research.

\bibitem[\protect\citeauthoryear{Bester and Strausz}{Bester and
  Strausz}{2001}]{bester2001contracting}
\textsc{Bester, H. and R.~Strausz} (2001): \enquote{Contracting with imperfect
  commitment and the revelation principle: the single agent case,}
  \emph{Econometrica}, 69, 1077--1098.

\bibitem[\protect\citeauthoryear{Biglaiser, Cr{\'e}mer, and Veiga}{Biglaiser
  et~al.}{2022}]{biglaiser2022should}
\textsc{Biglaiser, G., J.~Cr{\'e}mer, and A.~Veiga} (2022): \enquote{Should I
  stay or should I go? Migrating away from an incumbent platform,} \emph{The
  RAND Journal of Economics}, 53, 453--483.

\bibitem[\protect\citeauthoryear{Burdzy, Frankel, and Pauzner}{Burdzy
  et~al.}{2001}]{burdzy2001fast}
\textsc{Burdzy, K., D.~M. Frankel, and A.~Pauzner} (2001): \enquote{Fast
  equilibrium selection by rational players living in a changing world,}
  \emph{Econometrica}, 69, 163--189.

\bibitem[\protect\citeauthoryear{Calvo}{Calvo}{1983}]{calvo1983staggered}
\textsc{Calvo, G.~A.} (1983): \enquote{Staggered prices in a utility-maximizing
  framework,} \emph{Journal of monetary Economics}, 12, 383--398.

\bibitem[\protect\citeauthoryear{Chakravorty, Corch{\'o}n, and
  Wilkie}{Chakravorty et~al.}{2006}]{chakravorty2006credible}
\textsc{Chakravorty, B., L.~C. Corch{\'o}n, and S.~Wilkie} (2006):
  \enquote{Credible implementation,} \emph{Games and Economic Behavior}, 57,
  18--36.

\bibitem[\protect\citeauthoryear{Chamley}{Chamley}{1999}]{chamley1999coordinating}
\textsc{Chamley, C.} (1999): \enquote{Coordinating regime switches,} \emph{The
  Quarterly Journal of Economics}, 114, 869--905.

\bibitem[\protect\citeauthoryear{Chen, Holden, Kunimoto, Sun, and
  Wilkening}{Chen et~al.}{2023}]{chen2023getting}
\textsc{Chen, Y.-C., R.~Holden, T.~Kunimoto, Y.~Sun, and T.~Wilkening} (2023):
  \enquote{Getting dynamic implementation to work,} \emph{Journal of Political
  Economy}, 131, 285--387.

\bibitem[\protect\citeauthoryear{Chen and Sun}{Chen and
  Sun}{2015}]{chen2015full}
\textsc{Chen, Y.-C. and Y.~Sun} (2015): \enquote{Full implementation in
  backward induction,} \emph{Journal of Mathematical Economics}, 59, 71--76.

\bibitem[\protect\citeauthoryear{Dasgupta}{Dasgupta}{2007}]{dasgupta2007coordination}
\textsc{Dasgupta, A.} (2007): \enquote{Coordination and delay in global games,}
  \emph{Journal of Economic Theory}, 134, 195--225.

\bibitem[\protect\citeauthoryear{Diamond and Dybvig}{Diamond and
  Dybvig}{1983}]{diamond1983bank}
\textsc{Diamond, D.~W. and P.~H. Dybvig} (1983): \enquote{Bank runs, deposit
  insurance, and liquidity,} \emph{Journal of political economy}, 91, 401--419.

\bibitem[\protect\citeauthoryear{Diamond and Fudenberg}{Diamond and
  Fudenberg}{1989}]{diamond1989rational}
\textsc{Diamond, P. and D.~Fudenberg} (1989): \enquote{Rational expectations
  business cycles in search equilibrium,} \emph{Journal of political Economy},
  97, 606--619.

\bibitem[\protect\citeauthoryear{Diamond}{Diamond}{1982}]{diamond1982aggregate}
\textsc{Diamond, P.~A.} (1982): \enquote{Aggregate demand management in search
  equilibrium,} \emph{Journal of political Economy}, 90, 881--894.

\bibitem[\protect\citeauthoryear{Doval and Ely}{Doval and
  Ely}{2020}]{doval2020sequential}
\textsc{Doval, L. and J.~C. Ely} (2020): \enquote{Sequential information
  design,} \emph{Econometrica}, 88, 2575--2608.

\bibitem[\protect\citeauthoryear{Doval and Skreta}{Doval and
  Skreta}{2022}]{doval2022mechanism}
\textsc{Doval, L. and V.~Skreta} (2022): \enquote{Mechanism design with limited
  commitment,} \emph{Econometrica}, 90, 1463--1500.

\bibitem[\protect\citeauthoryear{Dybvig}{Dybvig}{2023}]{dybvig2023nobel}
\textsc{Dybvig, P.~H.} (2023): \enquote{Nobel Lecture: Multiple Equilibria,}
  \emph{Journal of Political Economy}, 131, 2623--2644.

\bibitem[\protect\citeauthoryear{Ellison and Fudenberg}{Ellison and
  Fudenberg}{2000}]{ellison2000neo}
\textsc{Ellison, G. and D.~Fudenberg} (2000): \enquote{The neo-Luddite's
  lament: Excessive upgrades in the software industry,} \emph{The RAND Journal
  of Economics}, 253--272.

\bibitem[\protect\citeauthoryear{Farrell and Saloner}{Farrell and
  Saloner}{1985}]{farrell1985standardization}
\textsc{Farrell, J. and G.~Saloner} (1985): \enquote{Standardization,
  compatibility, and innovation,} \emph{the RAND Journal of Economics}, 70--83.

\bibitem[\protect\citeauthoryear{Frankel and Pauzner}{Frankel and
  Pauzner}{2000}]{frankel2000resolving}
\textsc{Frankel, D. and A.~Pauzner} (2000): \enquote{Resolving indeterminacy in
  dynamic settings: the role of shocks,} \emph{The Quarterly Journal of
  Economics}, 115, 285--304.

\bibitem[\protect\citeauthoryear{Frankel, Morris, and Pauzner}{Frankel
  et~al.}{2003}]{frankel2003equilibrium}
\textsc{Frankel, D.~M., S.~Morris, and A.~Pauzner} (2003): \enquote{Equilibrium
  selection in global games with strategic complementarities,} \emph{Journal of
  Economic Theory}, 108, 1--44.

\bibitem[\protect\citeauthoryear{Fudenberg and Levine}{Fudenberg and
  Levine}{1986}]{fudenberg1986limit}
\textsc{Fudenberg, D. and D.~Levine} (1986): \enquote{Limit games and limit
  equilibria,} \emph{Journal of economic Theory}, 38, 261--279.

\bibitem[\protect\citeauthoryear{Fudenberg, Levine, and Tirole}{Fudenberg
  et~al.}{1985}]{fudenberg1985infinite}
\textsc{Fudenberg, D., D.~Levine, and J.~Tirole} (1985):
  \enquote{Infinite-horizon models of bargaining with one-sided incomplete
  information,} \emph{Game-theoretic models of bargaining}, 73--98.

\bibitem[\protect\citeauthoryear{Gale}{Gale}{1995}]{gale1995dynamic}
\textsc{Gale, D.} (1995): \enquote{Dynamic coordination games,} \emph{Economic
  theory}, 5, 1--18.

\bibitem[\protect\citeauthoryear{Glazer and Perry}{Glazer and
  Perry}{1996}]{glazer1996virtual}
\textsc{Glazer, J. and M.~Perry} (1996): \enquote{Virtual implementation in
  backwards induction,} \emph{Games and Economic Behavior}, 15, 27--32.

\bibitem[\protect\citeauthoryear{Goldstein and Pauzner}{Goldstein and
  Pauzner}{2005}]{goldstein2005demand}
\textsc{Goldstein, I. and A.~Pauzner} (2005): \enquote{Demand--deposit
  contracts and the probability of bank runs,} \emph{the Journal of Finance},
  60, 1293--1327.

\bibitem[\protect\citeauthoryear{Guimaraes and Machado}{Guimaraes and
  Machado}{2018}]{guimaraes2018dynamic}
\textsc{Guimaraes, B. and C.~Machado} (2018): \enquote{Dynamic coordination and
  the optimal stimulus policies,} \emph{The Economic Journal}, 128, 2785--2811.

\bibitem[\protect\citeauthoryear{Guimaraes, Machado, and Pereira}{Guimaraes
  et~al.}{2020}]{guimaraes2020dynamic}
\textsc{Guimaraes, B., C.~Machado, and A.~E. Pereira} (2020): \enquote{Dynamic
  coordination with timing frictions: Theory and applications,} \emph{Journal
  of Public Economic Theory}, 22, 656--697.

\bibitem[\protect\citeauthoryear{Gul, Sonnenschein, and Wilson}{Gul
  et~al.}{1986}]{gul1986foundations}
\textsc{Gul, F., H.~Sonnenschein, and R.~Wilson} (1986): \enquote{Foundations
  of dynamic monopoly and the Coase conjecture,} \emph{Journal of economic
  Theory}, 39, 155--190.

\bibitem[\protect\citeauthoryear{Halac, Lipnowski, and Rappoport}{Halac
  et~al.}{2021}]{halac2021rank}
\textsc{Halac, M., E.~Lipnowski, and D.~Rappoport} (2021): \enquote{Rank
  uncertainty in organizations,} \emph{American Economic Review}, 111,
  757--786.

\bibitem[\protect\citeauthoryear{Halac, Lipnowski, and Rappoport}{Halac
  et~al.}{2024}]{halac2024pricing}
---\hspace{-.1pt}---\hspace{-.1pt}--- (2024): \enquote{Pricing for
  Coordination,} .

\bibitem[\protect\citeauthoryear{Halac and Yared}{Halac and
  Yared}{2014}]{halac2014fiscal}
\textsc{Halac, M. and P.~Yared} (2014): \enquote{Fiscal rules and discretion
  under persistent shocks,} \emph{Econometrica}, 82, 1557--1614.

\bibitem[\protect\citeauthoryear{He and Xiong}{He and
  Xiong}{2012}]{he2012dynamic}
\textsc{He, Z. and W.~Xiong} (2012): \enquote{Dynamic debt runs,} \emph{The
  Review of Financial Studies}, 25, 1799--1843.

\bibitem[\protect\citeauthoryear{Inostroza and Pavan}{Inostroza and
  Pavan}{2023}]{inostroza2023adversarial}
\textsc{Inostroza, N. and A.~Pavan} (2023): \enquote{Adversarial coordination
  and public information design,} \emph{Available at SSRN 4531654}.

\bibitem[\protect\citeauthoryear{Jehiel and Leduc}{Jehiel and
  Leduc}{2024}]{jehiel2024power}
\textsc{Jehiel, P. and M.~V. Leduc} (2024): \enquote{On the Power of Reaction
  Time in Deterring Collective Actions,} \emph{Available at SSRN 4957235}.

\bibitem[\protect\citeauthoryear{Kamada and Kandori}{Kamada and
  Kandori}{2020}]{kamada2020revision}
\textsc{Kamada, Y. and M.~Kandori} (2020): \enquote{Revision games,}
  \emph{Econometrica}, 88, 1599--1630.

\bibitem[\protect\citeauthoryear{Kamenica and Gentzkow}{Kamenica and
  Gentzkow}{2011}]{kamenica2011bayesian}
\textsc{Kamenica, E. and M.~Gentzkow} (2011): \enquote{Bayesian persuasion,}
  \emph{American Economic Review}, 101, 2590--2615.

\bibitem[\protect\citeauthoryear{Koh, Li, and Uzui}{Koh
  et~al.}{2024{\natexlab{a}}}]{koh2024inertial}
\textsc{Koh, A., R.~Li, and K.~Uzui} (2024{\natexlab{a}}): \enquote{Inertial
  Coordination Games,} \emph{arXiv preprint arXiv:2409.08145}.

\bibitem[\protect\citeauthoryear{Koh and Sanguanmoo}{Koh and
  Sanguanmoo}{2022}]{koh2022attention}
\textsc{Koh, A. and S.~Sanguanmoo} (2022): \enquote{Attention Capture,}
  \emph{arXiv preprint arXiv:2209.05570}.

\bibitem[\protect\citeauthoryear{Koh, Sanguanmoo, and Zhong}{Koh
  et~al.}{2024{\natexlab{b}}}]{koh2024persuasion}
\textsc{Koh, A., S.~Sanguanmoo, and W.~Zhong} (2024{\natexlab{b}}):
  \enquote{Persuasion and Optimal Stopping,} \emph{arXiv preprint
  arXiv:2406.12278}.

\bibitem[\protect\citeauthoryear{Laffont and Tirole}{Laffont and
  Tirole}{1988}]{laffont1988dynamics}
\textsc{Laffont, J.-J. and J.~Tirole} (1988): \enquote{The dynamics of
  incentive contracts,} \emph{Econometrica: Journal of the Econometric
  Society}, 1153--1175.

\bibitem[\protect\citeauthoryear{Levine and Pesendorfer}{Levine and
  Pesendorfer}{1995}]{levine1995agents}
\textsc{Levine, D.~K. and W.~Pesendorfer} (1995): \enquote{When are agents
  negligible?} \emph{The American Economic Review}, 1160--1170.

\bibitem[\protect\citeauthoryear{Li, Song, and Zhao}{Li
  et~al.}{2023}]{li2023global}
\textsc{Li, F., Y.~Song, and M.~Zhao} (2023): \enquote{Global manipulation by
  local obfuscation,} \emph{Journal of Economic Theory}, 207, 105575.

\bibitem[\protect\citeauthoryear{Liu, Mierendorff, Shi, and Zhong}{Liu
  et~al.}{2019}]{liu2019auctions}
\textsc{Liu, Q., K.~Mierendorff, X.~Shi, and W.~Zhong} (2019):
  \enquote{Auctions with limited commitment,} \emph{American Economic Review},
  109, 876--910.

\bibitem[\protect\citeauthoryear{Mathevet and Steiner}{Mathevet and
  Steiner}{2013}]{mathevet2013tractable}
\textsc{Mathevet, L. and J.~Steiner} (2013): \enquote{Tractable dynamic global
  games and applications,} \emph{Journal of Economic Theory}, 148, 2583--2619.

\bibitem[\protect\citeauthoryear{Matsui and Matsuyama}{Matsui and
  Matsuyama}{1995}]{matsui1995approach}
\textsc{Matsui, A. and K.~Matsuyama} (1995): \enquote{An approach to
  equilibrium selection,} \emph{Journal of Economic Theory}, 65, 415--434.

\bibitem[\protect\citeauthoryear{Matsuyama}{Matsuyama}{1991}]{matsuyama1991increasing}
\textsc{Matsuyama, K.} (1991): \enquote{Increasing returns, industrialization,
  and indeterminacy of equilibrium,} \emph{The Quarterly Journal of Economics},
  106, 617--650.

\bibitem[\protect\citeauthoryear{Miller, Weller, and Zhang}{Miller
  et~al.}{2002}]{miller2002moral}
\textsc{Miller, M., P.~Weller, and L.~Zhang} (2002): \enquote{Moral Hazard and
  The US Stock Market: Analysing the ‘Greenspan Put',} \emph{The Economic
  Journal}, 112, C171--C186.

\bibitem[\protect\citeauthoryear{Moore and Repullo}{Moore and
  Repullo}{1988}]{moore1988subgame}
\textsc{Moore, J. and R.~Repullo} (1988): \enquote{Subgame perfect
  implementation,} \emph{Econometrica: Journal of the Econometric Society},
  1191--1220.

\bibitem[\protect\citeauthoryear{Morris}{Morris}{2020}]{morris2020no}
\textsc{Morris, S.} (2020): \enquote{No trade and feasible joint posterior
  beliefs,} .

\bibitem[\protect\citeauthoryear{Morris, Oyama, and Takahashi}{Morris
  et~al.}{2022}]{morris2022joint}
\textsc{Morris, S., D.~Oyama, and S.~Takahashi} (2022): \enquote{On the joint
  design of information and transfers,} \emph{Available at SSRN 4156831}.

\bibitem[\protect\citeauthoryear{Morris, Oyama, and Takahashi}{Morris
  et~al.}{2024}]{morris2024implementation}
---\hspace{-.1pt}---\hspace{-.1pt}--- (2024): \enquote{Implementation via
  Information Design in Binary-Action Supermodular Games,} \emph{Econometrica},
  92, 775--813.

\bibitem[\protect\citeauthoryear{Morris and Shin}{Morris and
  Shin}{1998}]{morris1998unique}
\textsc{Morris, S. and H.~S. Shin} (1998): \enquote{Unique equilibrium in a
  model of self-fulfilling currency attacks,} \emph{American Economic Review},
  587--597.

\bibitem[\protect\citeauthoryear{Morris and Yildiz}{Morris and
  Yildiz}{2019}]{morris2019crises}
\textsc{Morris, S. and M.~Yildiz} (2019): \enquote{Crises: Equilibrium shifts
  and large shocks,} \emph{American Economic Review}, 109, 2823--2854.

\bibitem[\protect\citeauthoryear{Murphy, Shleifer, and Vishny}{Murphy
  et~al.}{1989}]{murphy1989industrialization}
\textsc{Murphy, K.~M., A.~Shleifer, and R.~W. Vishny} (1989):
  \enquote{Industrialization and the big push,} \emph{Journal of political
  economy}, 97, 1003--1026.

\bibitem[\protect\citeauthoryear{Nakamura and Steinsson}{Nakamura and
  Steinsson}{2010}]{nakamura2010monetary}
\textsc{Nakamura, E. and J.~Steinsson} (2010): \enquote{Monetary non-neutrality
  in a multisector menu cost model,} \emph{The Quarterly journal of economics},
  125, 961--1013.

\bibitem[\protect\citeauthoryear{Oyama}{Oyama}{2002}]{oyama2002p}
\textsc{Oyama, D.} (2002): \enquote{p-Dominance and equilibrium selection under
  perfect foresight dynamics,} \emph{Journal of Economic Theory}, 107,
  288--310.

\bibitem[\protect\citeauthoryear{Reis}{Reis}{2006}]{reis2006inattentive}
\textsc{Reis, R.} (2006): \enquote{Inattentive producers,} \emph{The Review of
  Economic Studies}, 73, 793--821.

\bibitem[\protect\citeauthoryear{Rosenstein-Rodan}{Rosenstein-Rodan}{1943}]{rosenstein1943problems}
\textsc{Rosenstein-Rodan, P.~N.} (1943): \enquote{Problems of industrialisation
  of eastern and south-eastern Europe,} \emph{The economic journal}, 53,
  202--211.

\bibitem[\protect\citeauthoryear{Samet}{Samet}{2005}]{samet2005counterfactuals}
\textsc{Samet, D.} (2005): \enquote{Counterfactuals in wonderland,} \emph{Games
  and Economic Behavior}, 51, 2005.

\bibitem[\protect\citeauthoryear{Sato}{Sato}{2023}]{sato2023robust}
\textsc{Sato, H.} (2023): \enquote{Robust implementation in sequential
  information design under supermodular payoffs and objective,} \emph{Review of
  Economic Design}, 27, 269--285.

\bibitem[\protect\citeauthoryear{Sjostrom and Maskin}{Sjostrom and
  Maskin}{2002}]{sjostrom2002implementation}
\textsc{Sjostrom, T. and E.~Maskin} (2002): \enquote{Implementation theory,}
  \emph{Handbook of Social Choice and Welfare}, 1, 232--288.

\bibitem[\protect\citeauthoryear{Sun}{Sun}{2006}]{sun2006exact}
\textsc{Sun, Y.} (2006): \enquote{The exact law of large numbers via Fubini
  extension and characterization of insurable risks,} \emph{Journal of Economic
  Theory}, 126, 31--69.

\end{thebibliography}

\appendix 

\begin{center}
    \large{\textbf{\scshape{Appendix to Informational Puts}}}
\end{center}

\normalsize 
\setstretch{1.15}

Appendix \ref{appendix:proofs} proves Theorem \ref{thrm:main}. Appendix \ref{appendix:examples} shows our results apply to a wider class of dynamic coordination games.

\titleformat{\section}
		{\bfseries\center\scshape}     
         {Appendix \thesection:}
        {0.5em}
        {}
        []

\appendix

\section{Proofs} \label{appendix:proofs}

\paragraph{Preliminaries.} We use the following notation for the time-path of aggregate actions following from $A_t$: for $s \geq t$, $\bar A_s$ solves 
\[d\bar A_s = \lambda (1 - \bar A_s) \cdot ds \quad \text{with boundary $\bar A_t = A_t$.}
\]
Similarly, for $s \geq t$, $\underline A_s$ solves 
\[d\underline A_s = - \lambda \underline A_s \cdot  ds \quad \text{with boundary $\bar A_t = A_t$.}
\]
In words, $\bar A_s$ and $\underline A_s$ denote future paths of aggregate actions when everyone in the future switches to actions $1$ and $0$ as quickly as possible, respectively. Finally, it will be helpful to define the operator $S: \mathcal{H} \to \Delta(\Theta) \times [0,1]$ mapping histories to the most recent pair of belief and aggregate action, i.e., $S((\mu_s,A_s)_{s \leq t}) := (\mu_t,A_t)$.

\paragraph{Outline of proof.} The proof of Theorem \ref{thrm:main} consists of the following steps, as described in Figure~\ref{fig:map_main}:

\begin{figure}[h!]  
\centering
\captionsetup{width=1.0\linewidth}
    \caption{Roadmap for proof of Theorem \ref{thrm:main}} \includegraphics[width=1\textwidth]{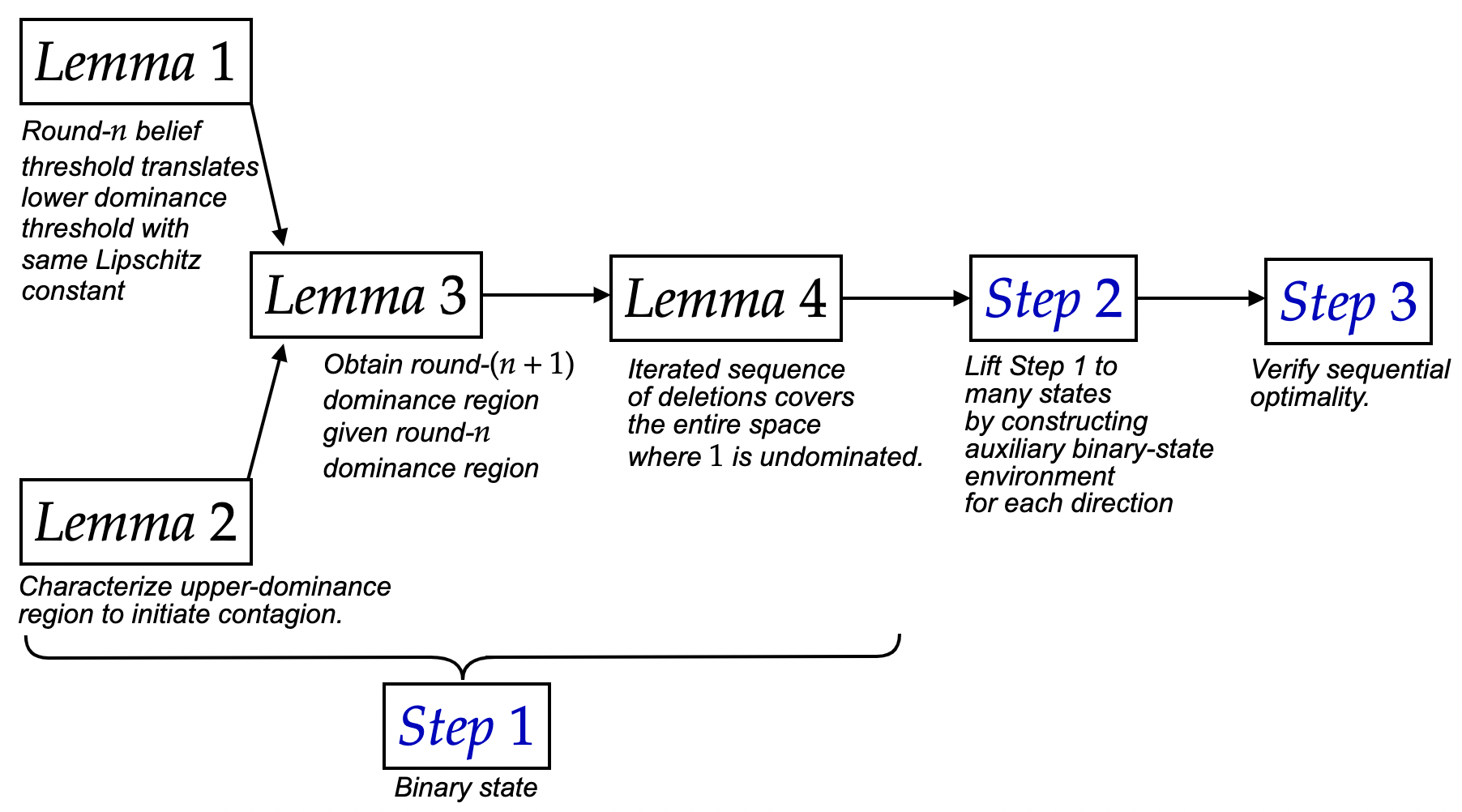}
    \label{fig:map_main}
\end{figure}

\noindent \underline{Step 1}: We first show the result for binary states $\Theta = \{ 0, 1 \}$ with $\theta^* = 1$. With slight abuse of notation, we associate beliefs directly with the probability that the state is $1$: $\mu_t = \mu_t(\theta^*)$. Then, our lower dominance region is one-dimensional and summarized by a threshold belief for each $A$: 
\[ \psi_{LD}(A) := \max_{\mu \in \Psi_{LD}(A)}\mu(\theta^*)\]
We show that $\mu_t > \psi_{LD}(A_t)$ implies switching to $1$ is the unique subgame perfect equilibrium under the information policy $\bm{\mu}^*$. We show this in several sub-steps. 
\begin{itemize}[leftmargin=*]
    \item \underline{Step 1A}: There exists a belief threshold, which is a `rightward' translation of the lower dominance region $\psi_{LD}(A_t)$ such that agents find it strictly dominant to play action $1$ regardless of others' actions if the current belief is above this threshold (Lemma~\ref{lem: dominance region}). We call this threshold $\psi_0(A_t)$.

    \item \underline{Step 1B}: For $n \in \mathbb{N}$, suppose that agents conjecture that all agents will switch to action $1$ at all future histories $H$ such that $S(H) = (\mu_s,A_s)$ fulfills $\mu_s > \psi^n(A_s)$. Under this assumption, we can compute a lower bound \eqref{eq: lower bound} on the expected payoff difference for agents between playing actions $1$ and $0$ for any given current belief $\mu_t \in (\psi_{LD}(A_t), \psi^n(A_t)]$. 
    
    To do so, we will separately consider the future periods before and after the aggregate action deviates from the tolerated distance from the target, at which point new information is provided. Call this time $T^*$. 
    \begin{itemize}
        \item Before $T^*$, we construct the lower bound using the fact that aggregate actions cannot be too far from the target even in the worst-case scenario.

        \item At $T^*$, the designer injects information with binary support. We choose the upward jump size $M \cdot \mathsf{TOL}(D)$ to be sufficiently large so that, whenever the `good signal' realizes beliefs exceed $\psi^n(A_{T^*})$. Whenever the `bad signal' realizes, we conjecture the worst-case that all agents switch to action $0$. 
    \end{itemize} 
    
    \item \underline{Step 1C}: We show that by carefully choosing the information policy, the threshold under which switching to $1$ is strictly dominant, $\psi^{n+1}(A_t)$, is strictly smaller than $\psi^n(A_t)$. 
    The policy has several key features:
    \begin{itemize}
        \item Large $M$: when the aggregate action $A_{T^*}$ falls below the tolerated distance $\mathsf{TOL}(D(\mu_t, A_{T^*}))$ from the target, the high belief after the injection exceeds $\psi^n(A_{T^*})$, which ensures the argument in Step 1B. In particular, we choose $M$ to be large relative to the Lipschitz constant of $\psi^n$. 

        \item Small $\mathsf{TOL}(D(\mu_t, A_{T^*}))$: we should maintain a low tolerance level for deviations from the target. If the designer allowed a large deviation, the aggregate action could drop so low by the time information is injected that agents’ incentives to play action $1$ would be too weak to recover.

        \item Large $\mathsf{DOWN}(D(\mu_t, A_{T^*}))$: the downward jump size should be large relative to the upward jump size $M \cdot \mathsf{TOL}(D(\mu_t, A_{T^*}))$, but not so large that beliefs fall into the lower dominance region. This ensures that the probability of the belief being high after the injection is sufficiently large.

    \end{itemize}
    These three features guarantee that the lower bound \eqref{eq: lower bound} is sufficiently large and remains positive even when the current belief $\mu_t$ is slightly below $\psi^n(A_t)$. Hence $\psi^{n+1}(A_t)$ is strictly smaller than $\psi^n(A_t)$, allowing us to expand the range of beliefs under which action $1$ is uniquely optimal (Lemma~\ref{lem: contagion}).

    \item \underline{Step 1D}: By iterating Step 1C for $n \in \mathbb{N}$, we show that $\psi^n(A_t)$ converges to $\psi_{LD}(A_t)$. Then, if $\mu_t > \psi_{LD}(A_t)$, agents who can switch in period $t$ find it uniquely optimal to choose action $1$.
\end{itemize}

\noindent  \underline{Step 2}: We extend the arguments in Step 1 from binary states to finite states: if $\mu^*_t \notin \Psi_{LD}(A_t) \cup \text{Bd}_{\theta^*},$ then playing action $1$ is the unique subgame perfect equilibrium under the information policy $\bm{\mu}^*$. As described in the main text, our policy is such that beliefs move either in the direction $\hat d(\mu)$ toward $\delta_{\theta^*}$, or in the direction $- \hat d(\mu)$ away from $\delta_{\theta^*}$. 


\noindent \underline{Step 3}: We establish sequential optimality:
\begin{itemize}[leftmargin=*]
    \item \underline{Step 3A:} for any $\epsilon > 0$, $\bm{\mu^*}$ is $\epsilon$-sequentially optimal when $\mu^*_t \in \Psi_{LD}(A_t) \cup \text{Bd}_{\theta^*}$
    \item \underline{Step 3B:}  $\bm{\mu^*}$ is sequentially optimal when $\mu^*_t \notin \Psi_{LD}(A_t) \cup \text{Bd}_{\theta^*}.$
\end{itemize}

\begin{proof}[Proof of Theorem \ref{thrm:main}] We proceed along the steps outlined above. \\~\\
\noindent \underline{\textbf{Step 1.}} Suppose that $\Theta = \{0,1\}$ and $\theta^* = 1$. With slight abuse of notation, we associate beliefs $\mu_t$ with the probability that $\theta = 1$. As in the main text, we let $\psi_{LD}(A_t)$ denote the boundary of the lower dominance region. We will show that as long as action $1$ is not strictly dominated i.e., $\mu^*_t > \psi_{LD}(A_t)$, then action $1$ is played under any subgame perfect equilibrium. 

\begin{definition}\label{def: psi}
For $n \in \N$, we will construct a sequence $(\Psi^n)_n$ where $\Psi^n \subset \Delta(\Theta) \times [0,1]$ is a subset of the round-$n$ dominance region. $\Psi^n$ will satisfy the following conditions: 
\begin{enumerate}
    \item[(i)] \textbf{Contagion.} Action $1$ is strictly preferred under every history $H$ where $S(H) \in \Psi^n$ under the conjecture that action $1$ is played under every history $H'$ such that $S(H') \in  \Psi^{n - 1}$.
    \item[(ii)] \textbf{Translation.} There exists a constant $c_n > 0$ such that $\Psi^n = \{(\mu,A) : D(\mu,A) \geq c_n \},$ where $D(\mu,A) = \mu - \psi_{LD}(A)$.
\end{enumerate}
We initialize $\Psi^0$ as the upper dominance region whereby $1$ is strictly dominant. 
\end{definition}

Observe also that since $\Delta u(\cdot,\theta)$ is continuous and strictly increasing on a compact domain, it is also Lipschitz and we let the constant be $L > 0$. This also implies the lower dominance region (as a function of $A$) is Lipschitz continuous, and we denote the constant with $L_{\psi}$. 
\begin{lemma}
    \label{lem: lipschitz}$\psi_{LD}(\cdot)$ is Lipschitz continuous. 
\end{lemma}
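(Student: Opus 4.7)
The plan is to write $\psi_{LD}(A)$ as a ratio of two Lipschitz functions of $A$ with denominator uniformly bounded away from zero, from which Lipschitzness follows by the quotient rule. Because $\Theta = \{0,1\}$, for each $\theta \in \{0,1\}$ set
\[
X_\theta(A) \;:=\; \Ex_\tau \Big[\int_0^\tau e^{-rs}\, \Delta u(\bar A_s, \theta)\, ds\Big], \qquad \bar A_s \,=\, 1-(1-A)e^{-\lambda s},
\]
so that the expected payoff gap at belief $\mu$ under the worst-case conjecture defining $\Psi_{LD}$ is linear in $\mu$: $G(\mu,A) = (1-\mu) X_0(A) + \mu X_1(A)$. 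Hence on the relevant domain $\{A : X_0(A) \le 0\}$ (elsewhere $\Psi_{LD}(A)$ is empty and the lemma is vacuous), $\psi_{LD}(A)$ solves $G(\psi_{LD}(A),A)=0$ and is given in closed form by
\[
\psi_{LD}(A) \;=\; \frac{-X_0(A)}{X_1(A) - X_0(A)}.
\]

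Second, I would verify that each $X_\theta$ is Lipschitz in $A$. The flow $A \mapsto \bar A_s(A)$ is Lipschitz in $A$ with constant $e^{-\lambda s} \le 1$ uniformly in $s$, and by continuous differentiability of $\Delta u$ on the compact domain $[0,1]\times\Theta$, $\Delta u(\cdot,\theta)$ has a finite Lipschitz constant $L$. Thus $e^{-rs}\Delta u(\bar A_s(A),\theta)$ is Lipschitz in $A$ uniformly in $s$, and integrating up to the stopping time $\tau$ followed by taking expectation over $\tau \sim \mathrm{Exp}(\lambda)$ preserves Lipschitzness, yielding $|X_\theta(A) - X_\theta(A')| \le (L/(r+\lambda))\,|A-A'|$.

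Third, I would use the dominant-state assumption to bound the denominator away from zero. Since $\theta^*=1$ gives $\Delta u(0,1)>0$, and since $\bar A_s \ge 0$, strict monotonicity of $\Delta u$ in its first argument yields $\Delta u(\bar A_s, 1) \ge \Delta u(0,1) > 0$ for every $s$, whence $X_1(A) \ge \Delta u(0,1)/(r+\lambda)$ uniformly in $A$. On the domain $X_0(A) \le 0$ this gives
\[
X_1(A) - X_0(A) \;\ge\; X_1(A) \;\ge\; \frac{\Delta u(0,1)}{r+\lambda} \;>\; 0,
\]
uniformly in $A$. Combined with the Lipschitz bounds on $X_0$ and $X_1$ and the fact that the numerator and denominator are both uniformly bounded (since integrands and stopping times have finite moments), the standard quotient rule for Lipschitz functions delivers the claim. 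The only step I expect to require care is the routine bookkeeping in transferring Lipschitzness through the double expectation over $\tau$ and the interior integral; beyond that, no deep input beyond the dominant state assumption is needed.
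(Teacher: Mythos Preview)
Your proof is correct and takes a somewhat different route from the paper's. The paper argues in implicit-function-theorem style: it writes $\Delta \bar U(\mu,A) := \Ex_{\tau,\theta\sim\mu}[\int_t^\tau e^{-r(s-t)}\Delta u(\bar A_s,\theta)\,ds]$, asserts it is $C^1$ and strictly increasing in each argument, sets $L := \max_{\mu,A} \partial_A \Delta\bar U$ and $l := \min_{\mu,A} \partial_\mu \Delta\bar U > 0$ by compactness, and then applies the mean value theorem together with $\Delta\bar U(\psi_{LD}(A),A)=0$ to obtain the Lipschitz constant $L_\psi = L/l$. You instead solve for $\psi_{LD}$ explicitly as the quotient $-X_0/(X_1-X_0)$ and bound numerator and denominator separately. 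Since $\partial_\mu \Delta\bar U = X_1(A) - X_0(A)$, the two arguments are really the same computation viewed from different angles. Your packaging has one small advantage: you invoke the dominant-state assumption $\Delta u(0,1)>0$ directly to get $X_1(A)\ge \Delta u(0,1)/(r+\lambda)>0$ and combine this with $X_0(A)\le 0$ on the relevant domain to bound the denominator, whereas the paper simply asserts strict monotonicity of $\Delta\bar U$ in $\mu$ without tracing it back to the primitives. The paper's version, on the other hand, delivers an explicit constant $L_\psi = L/l$ that is reused verbatim later in the contagion argument (e.g., in choosing $M=2L_\psi$), so if you adopt your approach you should still record the resulting Lipschitz constant for downstream use.
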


\begin{proof}[Proof of Lemma \ref{lem: lipschitz}]
    Fix any $t$. The expected payoff difference between playing $1$ and $0$ when everyone in the future switches to action $1$ is given by
    \begin{align*}
        \Delta \bar U(\mu_t, A_t) &:= \mu_t \Ex_\tau\bigg[\int_{s=t}^{\tau} e^{-r(s-t)} \Big\{ \Delta u (\bar{A}_s,1) - \Delta u (\bar{A}_s,0) \Big\} ds\bigg] \\
        &\quad\quad\quad\quad + \Ex_\tau\bigg[\int_{s=t}^{\tau} e^{-r(s-t)} \Delta u (\bar{A}_s,0) ds\bigg].
    \end{align*}
    Note that $\Delta \bar U$ is continuously differentiable and strictly increasing in both $\mu_t$ and $A_t$. Since the domain of $\Delta \bar U$ is compact, the following values are well-defined:
    \begin{align*}
        L := \max_{A, \mu} \frac{\partial \Delta \bar U}{\partial A} > 0, \quad l := \min_{A, \mu} \frac{\partial \Delta \bar U}{\partial \mu} > 0.
    \end{align*}
    Then, for any $A_t < A_t'$ and $\mu_t > \mu_t'$, we have
    \[
    \Delta \bar U(\mu_t, A_t) - \Delta \bar U(\mu_t', A_t') \geq -L(A_t' - A_t) + l(\mu_t - \mu_t')
    \]
    because the mean value theorem implies
    \begin{align*}
        \Delta \bar U(\mu_t, A_t) - \Delta \bar U(\mu_t', A_t) &\geq l(\mu_t - \mu_t') \\
        \Delta \bar U(\mu_t', A_t') - \Delta \bar U(\mu_t', A_t) &\leq L(A_t' - A_t).
    \end{align*}
    Substituting $\mu_t = \psi_{LD} (A_t)$ and $\mu_t' = \psi_{LD} (A_t')$ into the above inequality yields
    \begin{align*}
        0 &= \Delta \bar U(\psi_{LD} (A_t), A_t) - \Delta \bar U(\psi_{LD} (A_t'), A_t') \tag{$\Delta \bar U(\psi_{LD}(A), A) = 0$}\\
        &\geq -L(A_t' - A_t) + l(\psi_{LD} (A_t) - \psi_{LD} (A_t')).
    \end{align*}
    Hence, we have
    \[
    \psi_{LD} (A_t) - \psi_{LD} (A_t') \leq \underbrace{\frac{L}{l}}_{=:L_{\psi}} (A_t' - A_t).
    \]
\end{proof}

\noindent \noindent \underline{\textbf{Step 1A.}} Construct $\Psi^0$. 

Define $\Psi^0$ as
    \[
     \Psi^0 = \Big\{(\mu,A) \in \Delta(\Theta) \times [0,1]: D(\mu,A) \geq c_0 \Big\},
    \]
    with $c_0 := \max_A \psi_{UD}(A) - \psi_{LD}(A)$, where $\psi_{UD}(A)$ is defined as
    \[
    \psi_{UD}(A) := \min\Big\{\mu \in \Delta(\Theta): 
    \Ex \Big[ \int_{t} u(1,\underline{A}_{s},\theta) ds \Big] \geq 
    \Ex \Big[ \int_{t} u(0,\underline{A}_{s},\theta) ds \Big]
    \Big\}.
    \]
    $\psi_{UD}(A)$ is the minimum belief under which players prefer action $1$ even if all future players choose to play action $0$.
    
\begin{lemma}
    \label{lem: dominance region}
    Action $1$ is strictly preferred under every history $H$ where $S(H) \in \Psi^0$.
\end{lemma}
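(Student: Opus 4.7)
The plan is to show that $\Psi^0$ is contained in the classical upper-dominance region where playing action $1$ yields a (weakly) higher expected continuation payoff than action $0$ under \emph{any} conjecture of others' play, and then upgrade to strict preference. First, I would use the choice of $c_0$ to obtain the inclusion $\Psi^0 \subseteq \{(\mu,A) : \mu \geq \psi_{UD}(A)\}$. For any $(\mu,A) \in \Psi^0$,
\[
\mu \;\geq\; \psi_{LD}(A) + c_0 \;\geq\; \psi_{LD}(A) + [\psi_{UD}(A) - \psi_{LD}(A)] \;=\; \psi_{UD}(A),
\]
where the second inequality invokes $c_0 = \max_{A'}[\psi_{UD}(A') - \psi_{LD}(A')]$. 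Thus at every history with $S(H) \in \Psi^0$ the current belief already exceeds the upper-dominance threshold at the current aggregate play.

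Next, I would invoke the definition of $\psi_{UD}$ directly: at $\mu = \psi_{UD}(A)$, the expected flow-payoff difference between actions $1$ and $0$ until the next Poisson tick is non-negative even when future aggregate play follows the worst-case path $\underline{A}_s$ in which every future agent switches to $0$ as soon as possible. Under any subgame-perfect continuation the realized aggregate path $(A_s)_{s \geq t}$ satisfies $A_s \geq \underline{A}_s$ pointwise almost surely, since actions lie in $\{0,1\}$ and no aggregate path can fall below the ``everyone switches to $0$'' benchmark. Strict supermodularity of $\Delta u$ in $A$ (assumption (i)) then implies the expected payoff difference at the current history is at least as large as the reference difference in the definition of $\psi_{UD}$, hence non-negative for any conjecture.

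For strict preference, I would combine the previous two steps with strict monotonicity in $\mu$: the existence of the dominant state $\theta^*$ together with continuity of $\Delta u$ imply the expectation $\mathbb{E}_\mu[\int e^{-rs}\Delta u(\underline A_s,\theta)\,ds]$ is strictly increasing in $\mu$. So whenever $\mu$ strictly exceeds $\psi_{UD}(A)$ the inequality becomes strict. The main obstacle is the knife-edge boundary where $\mu = \psi_{UD}(A)$ exactly, which can occur only when $A$ attains the max in the definition of $c_0$. There I would argue that a subgame-perfect continuation cannot coincide with $\underline A_s$ on a set of full measure whenever future beliefs remain above $\psi_{LD}$, so realized play strictly improves upon $\underline A_s$ on a positive-measure set; strict supermodularity in $A$ then converts this into strict preference. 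A cleaner alternative, sufficient for the later contagion step, is to enlarge $c_0$ by an arbitrarily small slack so that $\mu > \psi_{UD}(A)$ holds strictly on all of $\Psi^0$; this is the key place in the proof that strict (rather than weak) dominance matters, since Step 1C will iterate this base case through the contagion Lemma~\ref{lem: contagion}.
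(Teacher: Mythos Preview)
Your first display---deriving $\mu \geq \psi_{UD}(A)$ from $c_0 = \max_{A'}[\psi_{UD}(A') - \psi_{LD}(A')]$---is exactly the paper's proof, and the paper stops there: it simply asserts that $\mu \geq \psi_{UD}(A)$ implies action $1$ is strictly preferred regardless of others' future play, without addressing the boundary case you flag.

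Your additional discussion of strictness is more careful than the paper, and the boundary worry is legitimate: at $\mu = \psi_{UD}(A)$ and under the conjecture that all future agents switch to $0$, the payoff difference is exactly zero by the definition of $\psi_{UD}$ as a $\min$. However, your resolution (a) is circular. The point of Lemma~\ref{lem: dominance region} is to seed the contagion argument: at this stage nothing has been established about subgame-perfect play when beliefs lie between $\psi_{LD}$ and $\psi_{UD}$, so you cannot appeal to the claim that ``a subgame-perfect continuation cannot coincide with $\underline{A}_s$.'' The worst-case path $\underline{A}_s$ is a perfectly admissible conjecture at round~$0$, and under that conjecture the boundary point gives only weak preference. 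Your resolution (b)---replacing $c_0$ by $c_0 + \varepsilon$ for an arbitrarily small $\varepsilon > 0$---is the correct and simplest patch; it makes $\mu > \psi_{UD}(A)$ strict throughout $\Psi^0$ and leaves all downstream arguments (Lemmas~\ref{lem: contagion} and~\ref{lem: induction}) unchanged, since they only require $c_0$ to be some finite positive constant.
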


\begin{proof}[Proof of Lemma~\ref{lem: dominance region}]
    Fix any history $H$ such that $S(H) \in \Psi^0.$ Then, by the definition of $\Psi_0$, the current $(\mu, A)$ satisfies
    \begin{align*}
        \mu \geq \psi_{LD}(A) + \max_{A'} \left\{\psi_{UD}(A') - \psi_{LD}(A')\right\} \geq \psi_{UD}(A).
    \end{align*}
    Hence, action $1$ is strictly preferred regardless of others' future play.
\end{proof}

\noindent \underline{\textbf{Step 1B.}} Construct a lower bound for the expected payoff difference given $\Psi^n$. 

Suppose that everyone plays action $1$ for any histories $H'$ such that $S(H') = (\mu, A)$ is in the round-$n$ dominance region $\Psi^n$. To obtain $\Psi^{n+1}$ in Step 1C, we derive the lower bound on the expected payoff difference of playing $0$ and $1$ given $\Psi^n$.

To this end, fix any history $H$ with the current target aggregate action $Z_t$ such that $S(H) = (\mu_t,A_t) \notin \Psi^n$ but $\mu_t > \psi_{LD}(A_t)$. From our construction of $Z_t,$ we must have $Z_t - A_t < \mathsf{TOL}(D(\mu_t,A_t)).$\footnote{By construction, if $Z_{t-} - A_t \geq \mathsf{TOL}(D(\mu_t,A_t))$, $Z_t = A_t$ must hold, which implies $Z_t - A_t = 0 < \mathsf{TOL}(D(\mu_t,A_t))$. If $Z_{t-} - A_t < \mathsf{TOL}(D(\mu_t,A_t))$, $Z_t - A_t < \mathsf{TOL}(D(\mu_t,A_t))$ is immediate because $Z_t$ does not jump.} For any continuous path $(A_s)_{s \geq t}$, we define the hitting time $T^*((A_s)_{s \geq t})$ as follows:
    \[T^* = \inf \Big\{s \geq t : Z_s - A_s \geq \mathsf{TOL}(D(\mu_t,A_s)) \text{ or } (\mu_t,A_s) \in \Psi^n \Big\}.\]
$T^*$ represents the first time at which either the designer injects new information, or the pair $(\mu_s, A_s)$ enters the round-$n$ dominance region. We will calculate the agent's expected payoff before time $T^*$ given the continuous path $(A_s)_{s \in [t,T^*]}$ and find a lower bound for this payoff by using the lower bound of $A_s$ for $s \geq t$. 
    
\noindent  \textbf{Before time ${T}^*$.} First, we calculate the agent's payoff before time $T^*$. Given $(A_s)_{s \geq t}$, we have $\mu_s = \mu_t$ and $(\mu_t,A_s ) \notin \Psi^n$ for every $s\in [t, T^*)$ because no information is injected when $Z_s - A_s < \mathsf{TOL}(D(\mu_t,A_s))$. Define $\psi^n(A_t) = \sup \{\mu \in \Delta(\Theta): (\mu,A_t) \notin \Psi^n\}$. This implies
    \begin{align} 
    \mathsf{TOL}(D(\mu_t,A_s)) &= \mathsf{TOL}(\mu_t - \psi_{LD}(A_s))\notag\\
    &\leq \mathsf{TOL}(\psi^n(A_s) - \psi_{LD}(A_s))\tag{$\mathsf{TOL}$ is increasing} \\
    &= \mathsf{TOL}(\psi^n(A_t) - \psi_{LD}(A_t)), \quad\text{(Translation property of $\Psi^n$)} \label{ineq: delta}
    \end{align}
Let $\bar{A}_s = \bar{A}(A_t,s-t)$, which is the aggregate play at $s \geq t$ when everyone will switch to action $1$ as fast as possible. By the definition of $Z$, we must have $Z_s = \bar{A}_s$ for every $s\in [t, T^*)$ because $Z_s - A_s < \mathsf{TOL}(D(\mu_s, A_s))$. Then we can write down the lower bound of $A_s$ when $s \in [t,T^*]$ as follows:
\begin{align}\label{ineq: lb aggregate action}
    A_s \geq \bar{A}_s - \mathsf{TOL}(D(\mu_t,A_s)) \geq \bar{A}_s - \mathsf{TOL}(\psi^n(A_t) - \psi_{LD}(A_t)).\quad \text{(From \eqref{ineq: delta})}
\end{align}
By Lipschitz continuity of $\Delta u(\cdot, \theta),$ we must have  
\begin{align}
    \Delta u (A_s,\theta) \geq \Delta u (\bar{A}_s, \theta) - \mathsf{TOL}(\psi^n(A_t) - \psi_{LD}(A_t))L \label{eq: lb before trigger}
\end{align}
with the Lipschitz constant $L$. Thus, the expected payoff difference of taking action $1$ and $0$ at time $(\mu_t,A_t)$ given a continuous path $(A_s)_{s \in [t,T^*]}$ before time $T^*$
is:
\begin{align}\label{eq: lower bound before T^*}
    &\Ex_\tau\bigg[\int_{s=t}^{\tau \wedge T^*} e^{-r(s-t)} \Delta u(A_s,\theta) ds \Big \lvert (A_s)_{s \in [t,T^*]} \bigg]\notag \\
    &= \Ex_\tau\bigg[\int_{s=t}^{\tau \wedge T^*} e^{-r(s-t)} \Delta u (\bar{A}_s,\theta) ds \bigg] \notag
    \\
    & \quad \quad + \Ex_\tau\bigg[\int_{s=t}^{\tau \wedge T^*} e^{-r(s-t)} \big(\Delta u (A_s,\theta) - \Delta u (\bar{A}_s,\theta) \big) ds \Big \lvert (A_s)_{s \in [t,T^*]}\bigg]\notag \\
    &\geq \Ex_\tau\bigg[\int_{s=t}^{\tau \wedge T^*} e^{-r(s-t)} \Delta u (\bar{A}_s,\theta) ds  \bigg] \notag
    \\ & \quad \quad - \mathsf{TOL}(\psi^n(A_t) - \psi_{LD}(A_t))L \cdot \Ex_\tau\bigg[\int_{s=t}^{\tau \wedge T^*} e^{-r(s-t)}  ds \Big\lvert (A_s)_{s \in [t,T^*]}\bigg] \tag{From (\ref{eq: lb before trigger})}\\
    &\geq \Ex_\tau\bigg[\int_{s=t}^{\tau \wedge T^*} e^{-r(s-t)} \Delta u (\bar{A}_s,\theta) ds \bigg] - \frac{\mathsf{TOL}(\psi^n(A_t) - \psi_{LD}(A_t))L}{\lambda}, \label{eq: lower bound before T^*}
\end{align}
where the last inequality follows from
\[
\Ex_\tau\bigg[\int_{s=t}^{\tau \wedge T^*} e^{-r(s-t)}  ds \Big\lvert (A_s)_{s \in [t,T^*]}\bigg] \leq \Ex_\tau\bigg[\int_{s=0}^{\tau}   ds\bigg] = \frac{1}{\lambda}.
\]

\noindent \textbf{After time ${T^*}$.} We calculate the lower bound of the expected payoff difference after time $T^*$.  We know that $\mu_{T^*-} = \mu_{t}.$ From the definition of $T^*$, we consider the following two cases depending on whether $Z_{T^*} - A_{T^*} < \mathsf{TOL}(D(\mu_{T^*},A_{T^*}))$ holds or not.

\noindent \textbf{Case 1: ${Z_{T^*} - A_{T^*} < \mathsf{TOL}(D(\mu_{T^*},A_{T^*}))}$.} This means $\mu_{T^*} = \mu_t$ because no information has been injected until $T^*$. Then the definition of $T^*$ implies $(\mu_{T^*},A_{T^*}) \in \Psi^n,$ where $\Psi^n$ is the round-$n$ dominance region. This means every agent strictly prefers to take action 1 at $T^*$. This increases $A_{T^*}$, inducing every agent taking action 1 after time $T^*$.\footnote{If $(\mu, A) \in \Psi^n$, then $(\mu, A') \in \Psi^n$ holds for any $A' \geq A.$} Thus, for $s \geq T^*$, we have
\begin{align}
    A_s = \bar{A}(A_{T^*},s-T^*)
    &\geq \bar{A}(\bar{A}_{T^*} - \mathsf{TOL}(\psi^n(A_t) - \psi_{LD}(A_t)), s-T^*)\tag{From \eqref{ineq: lb aggregate action}}\\ 
    &\geq \bar{A}_s - \mathsf{TOL}(\psi^n(A_t) - \psi_{LD}(A_t)),\label{ineq: lower bound for aggregate action}
\end{align}
where the last inequality follows from
\begin{align*}
    &\bar{A}(\bar{A}_{T^*} - \mathsf{TOL}(\psi^n(A_t) - \psi_{LD}(A_t)), s-T^*)\\
    &= 1 - \left( 1 - \bar A_{T^*} + \mathsf{TOL}(\psi^n(A_t) - \psi_{LD}(A_t)) \right) \exp (-\lambda (s - T^*)) \tag{Definition of $\bar A(A, s - t)$} \\
    &= \bar A_s - \mathsf{TOL}(\psi^n(A_t) - \psi_{LD}(A_t))\exp (-\lambda (s - T^*)) \tag{$\bar A_s = \bar A(\bar A_{T^*}, s - T^*)$}\\
    &\geq  \bar A_s - \mathsf{TOL}(\psi^n(A_t) - \psi_{LD}(A_t)).
\end{align*}
Hence, by the Lipschitz continuity of $\Delta u (\cdot,\theta)$, if $(\mu_{T^*},A_{T^*}) \in \Psi^n$, then
\begin{align}
    \Delta u (A_s,\theta) \geq \Delta u (\bar{A}_s, \theta) - \mathsf{TOL}(\psi^n(A_t) - \psi_{LD}(A_t))L. \label{eq: lb after trigger case 1}
\end{align}
The expected payoff difference of taking action $1$ and $0$ at time $(\mu_t,A_t)$ given a path $(A_s)_{s \in [t,T^*]}$ after time $T^*$ is
\begin{align}
    &\Ex_\tau\bigg[\int_{s=\tau \wedge T^*}^{\tau} e^{-r(s-t)} \Delta u(A_s,\theta) ds \Big\lvert (A_s)_{s \in [t,T^*]} \bigg] \notag\\
    &= \Ex_\tau\bigg[\int_{s=\tau \wedge T^*}^{\tau} e^{-r(s-t)} \Delta u (\bar{A}_s,\theta) ds \bigg]\notag\\
    &\quad\quad + \Ex_\tau\bigg[\int_{s=\tau \wedge T^*}^{\tau} e^{-r(s-t)} \big(\Delta u (A_s,\theta) - \Delta u (\bar{A}_s,\theta) \big) ds\Big\lvert (A_s)_{s \in [t,T^*]}\bigg] \notag\\
    &\geq \Ex_\tau\bigg[\int_{s=\tau \wedge T^*}^{\tau} e^{-r(s-t)} \Delta u (\bar{A}_s,\theta) ds\bigg] \notag\\
    &\quad\quad- \mathsf{TOL}(\psi^n(A_t) - \psi_{LD}(A_t))L \cdot \Ex_\tau\bigg[\int_{s=\tau \wedge T^*}^{\tau} e^{-r(s-t)}  ds \Big\lvert (A_s)_{s \in [t,T^*]}\bigg] \tag{From (\ref{eq: lb after trigger case 1})} \\
    &\geq \Ex_\tau\bigg[\int_{s=\tau \wedge T^*}^{\tau} e^{-r(s-t)} \Delta u (\bar{A}_s,\theta) ds \bigg] - \frac{\mathsf{TOL}(\psi^n(A_t) - \psi_{LD}(A_t))L  }{\lambda}. \label{eq: lower bound after T^* case 1}
\end{align}

\noindent \textbf{Case 2: ${Z_{T^*} - A_{T^*} \geq \mathsf{TOL}(D(\mu_{T^*}, A_s))}$.} By the definition of $T^*$, information is injected at $T^*$, and thus the belief at $T^*$ must be
\begin{align*}
    \mu_{T^*} = \begin{cases}
        \mu_{t} + M\cdot\mathsf{TOL}(D(\mu_{t},A_{T^*}))   &\text{w.p. $p_+(\mu_{t},A_{T^*})$}\\
        \mu_{t} - \mathsf{DOWN}(D(\mu_{t},A_{T^*})) &\text{w.p. $p_-(\mu_{t},A_{T^*})$}.
    \end{cases}
\end{align*}
Note that, if $(\mu_{T^*}, A_{T^*}) \in \Psi^n$, then everyone strictly prefers to take action $1$ at $T^*$. This increases $A_{T^*}$ and induces every agent to take action $1$ after time $T^*$ because $(\mu_s, A_s)$ stays in $\Psi^n$ for all $s \geq T^*$. Hence, we can write down the lower bound of $A_s$ when $s>T^*$ as follows:
\begin{align*}
    A_s &\geq 1\{(\mu_{T^*},A_{T^*}) \in \Psi^n\} \bar{A}(A_{T^*}, s-T^*) + 1\{(\mu_{T^*},A_{T^*}) \notin \Psi^n\} \underbar{$A$}(A_{T^*}, s-T^*) \tag{$A_s \geq \underbar{$A$}(A_{T^*}, s-T^*)$}\\
    &\geq 1\{(\mu_{T^*},A_{T^*}) \in \Psi^n\} \{\bar{A}_s - \mathsf{TOL}(\psi^n(A_t) - \psi_{LD}(A_t))\} \\
    &\quad\quad\quad\quad\quad\quad + 1\{(\mu_{T^*},A_{T^*}) \notin \Psi^n\} \underbar{$A$}(A_{T^*}, s-T^*). \tag{From \eqref{ineq: lower bound for aggregate action}}
\end{align*}
By Lipschitz continuity of $\Delta u (\cdot,\theta)$, we must have, if $(\mu_{T^*},A_{T^*}) \in \Psi^n$, then
\begin{align}
    \Delta u (A_s,\theta) \geq \Delta u (\bar{A}_s, \theta) - \mathsf{TOL}(\psi^n(A_t) - \psi_{LD}(A_t))L, \label{eq: lb after trigger case 2}
\end{align}
and if $(\mu_{T^*},A_{T^*}) \notin \Psi^n$, then 
\begin{align}
    \Delta u (A_s,\theta) \geq \Delta u (\bar A_s, \theta) - L (\bar A_s - A_s) \geq \Delta u (\bar A_s, \theta) - L.
    \label{eq: lb after trigger case 2, bad case}
\end{align}

Define $p_n \coloneqq \Pr((\mu_{T^*},A_{T^*})  \in \Psi^n \mid (A_s)_{s \in [t,T^*]})$. The expected payoff difference of taking action $1$ and $0$ at time $(\mu_t,A_t)$ given a path $(A_s)_{s \in [t,T^*]}$ after time $T^*$
is
\begin{align}
    &\Ex_\tau\bigg[\int_{s=\tau \wedge T^*}^{\tau} e^{-r(s-t)} \Delta u(A_s,\theta) ds \Big \lvert (A_s)_{s \in [t,T^*]} \bigg] \notag\\
    &= \Ex_\tau\bigg[\int_{s=\tau \wedge T^*}^{\tau} e^{-r(s-t)} \Delta u (\bar{A}_s,\theta) ds \bigg] \notag\\
    &\quad\quad + \Ex_\tau\bigg[\int_{s=\tau \wedge T^*}^{\tau} e^{-r(s-t)} \big(\Delta u (A_s,\theta) - \Delta u (\bar{A}_s,\theta) \big) ds \Big \lvert (A_s)_{s \in [t,T^*]}\bigg] \notag \\
    &\geq \Ex_\tau\bigg[\int_{s=\tau \wedge T^*}^{\tau} e^{-r(s-t)} \Delta u (\bar{A}_s,\theta) ds \bigg] \notag\\
    &\quad\quad- \big(\mathsf{TOL}(\psi^n(A_t) - \psi_{LD}(A_t))L p_n + L(1-p_n)  \big) \cdot \Ex_\tau\bigg[\int_{s=\tau \wedge T^*}^{\tau} e^{-r(s-t)}  ds \Big \lvert (A_s)_{s \in [t,T^*]}\bigg] \tag{From (\ref{eq: lb after trigger case 2}) and \eqref{eq: lb after trigger case 2, bad case}} \\
    &\geq \Ex_\tau\bigg[\int_{s=\tau \wedge T^*}^{\tau} e^{-r(s-t)} \Delta u (\bar{A}_s,\theta) ds \bigg] - \frac{\mathsf{TOL}(\psi^n(A_t) - \psi_{LD}(A_t))Lp_n + L(1-p_n)  }{\lambda}. \label{eq: lower bound after T^* case 2}
\end{align}

\noindent  \textbf{Combining before and after time ${T^*}$.} We are ready to construct a lower bound of the expected discounted payoff difference. To evaluate $s \geq T^*$, it is sufficient to focus on the case in which information is injected (Case 2) since \eqref{eq: lower bound after T^* case 2} is smaller than \eqref{eq: lower bound after T^* case 1} because $\mathsf{TOL}(\psi^n(A_t) - \psi_{LD}(A_t)) < 1$. By taking the sum of the payoffs before and after time $T^*$, that is \eqref{eq: lower bound before T^*} and \eqref{eq: lower bound after T^* case 2}, the expected payoff difference of taking action 1 and 0 at $(\mu_t,A_t)$ given a path $(A_s)_{s \in [t,T^*]}$ is lower-bounded as follows: 
\begin{align}
    &\Ex\Big[U_1(\mu_t,(A_s)_{s \geq t}) - U_0(\mu_t, (A_s)_{s \geq t}) \Big| (A_s)_{s \in [t,T^*]} \Big]\notag  \\
    &\geq \Ex_\tau\bigg[\int_{s=t}^{\tau} e^{-r(s-t)} \Delta u (\bar{A}_s,\theta) ds\bigg]  - \frac{\mathsf{TOL}(\psi^n(A_t) - \psi_{LD}(A_t))L(1 + p_n) + L(1-p_n)  }{\lambda}  
    \tag{LB} \label{eq: lower bound}.
\end{align}

Intuitively, the expected payoff cannot be too low compared to the case where everyone switches to action $1$ in the future because (i) aggregate actions are close to the target before new information is injected; and (ii) if the belief jumps upward upon injection, everyone will subsequently switch to action $1$.

\noindent \underline{\textbf{Step 1C.}} Finally, we characterize $\Psi^{n+1}$. The following lemma establishes that under $\bm{\mu^*},$ $\Psi^n$ is strictly increasing in the set order.
\begin{lemma}
    \label{lem: contagion}
    Given $\Psi^n$, there exists $\Psi^{n+1} \subsetneq \Psi^n$ such that $\Psi^{n+1}$ satisfies \textbf{Contagion} and \textbf{Translation} in Definition \ref{def: psi}.
\end{lemma}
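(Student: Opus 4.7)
The plan is to set $\Psi^{n+1} = \{(\mu, A) : D(\mu, A) \geq c_{n+1}\}$ for some $c_{n+1} \in (0, c_n)$, so that Translation is immediate and strict expansion $\Psi^n \subsetneq \Psi^{n+1}$ reduces to $c_{n+1} < c_n$. Contagion in turn reduces to showing that the lower bound \eqref{eq: lower bound} from Step 1B is strictly positive at the boundary $D(\mu_t, A_t) = c_{n+1}$ under the round-$n$ conjecture; monotonicity of the bound in $D$ then gives the rest of $\Psi^{n+1}$.

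For the positive term in \eqref{eq: lower bound}, I will bound it below by $l \cdot c_{n+1}$, where $l := \min \partial \Delta \bar U / \partial \mu > 0$ from the proof of Lemma~\ref{lem: lipschitz}, together with $\Delta \bar U(\psi_{LD}(A), A) = 0$. For the negative terms, $\mathsf{TOL}(c_n) = m c_n^2$ is fixed; furthermore, once the upward jump always lands inside $\Psi^n$, we have $p_n \geq p_+ = \mathsf{DOWN}(D') / [\mathsf{DOWN}(D') + M \cdot \mathsf{TOL}(D')]$ with $D' := D(\mu_t, A_{T^*})$, hence $1 - p_n \leq 2 M m c_n$ using $D' \leq c_n$.

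The crux is the landing step: arranging that every upward jump lands in $\Psi^n$. By the definition of $T^*$ and the tolerance bound, $A_t - A_{T^*} \leq \mathsf{TOL}(c_n) = m c_n^2$ prior to any injection, so Lemma~\ref{lem: lipschitz} yields $D' \geq c_{n+1} - L_\psi m c_n^2$. Plugging this into the landing condition $D' + M m (D')^2 \geq c_n$ and Taylor-expanding in $m$ gives the feasibility condition $c_{n+1} \geq c_n - (M - L_\psi) m c_n^2 + O(m^2)$, which admits a $c_{n+1} < c_n$ precisely when $M > L_\psi$. I will therefore fix such an $M$ once and for all, depending only on primitives.

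To finish, assemble the estimates: the negative contribution is at most $\frac{2 L m c_n (c_n + M)}{\lambda}$, so positivity of the lower bound holds whenever $c_{n+1} \geq \frac{2 L m c_n (c_n + M)}{\lambda l}$. Choosing $m < \lambda l / [2L(c_0 + M)]$ once and for all makes both the positivity interval and the landing interval strict subsets of $(0, c_n)$, and a routine check shows their intersection is nonempty; any $c_{n+1}$ in that intersection delivers the desired $\Psi^{n+1}$. The main technical obstacle is the quantifier order---$M$ must be fixed first (large enough that the landing condition leaves room for $c_{n+1} < c_n$ despite the Lipschitz drift of $\psi_{LD}$), and only then may $m$ be fixed small enough for the positivity interval to overlap the landing interval---so that the resulting constants are uniform in $n$ and the iteration in Step 1D can proceed to drive $c_n \downarrow 0$.
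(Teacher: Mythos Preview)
Your proposal is correct and follows essentially the same approach as the paper: both first ensure that the upward jump at $T^*$ lands in $\Psi^n$ (your landing condition $D' + Mm(D')^2 \geq c_n$, requiring $M > L_\psi$; the paper fixes $M = 2L_\psi$), and then verify positivity of the lower bound \eqref{eq: lower bound} by choosing $m$ small enough that the $O(mc_n)$ negative terms are dominated by the $O(c_{n+1})$ positive term. The only cosmetic differences are that you use the constant $l$ from the proof of Lemma~\ref{lem: lipschitz} where the paper introduces a fresh constant $C$, and you pick $c_{n+1}$ from an admissible interval where the paper defines $\psi^{n+1}$ implicitly via $\mu' + \tfrac{M}{2}\mathsf{TOL}(D(\mu',A)) = \psi^n(A)$.
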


\begin{proof}[Proof of Lemma~\ref{lem: contagion}]
To characterize $\Psi^{n+1}$, we first show that there exist tolerance level $\delta$, upward jump magnitude $M,$ and downward jump size $\epsilon$ such that if $\mu_t \geq \psi^n(A_t) - M\cdot\mathsf{TOL}(D(\mu_t,A_t))/2$, then $U_1(\mu_t,(A_s)_{s \geq t}) - U_0(\mu_t, (A_s)_{s \geq t}) > 0$. 

Suppose $\mu_t \geq \psi^n(A_t) - M\cdot\mathsf{TOL}(D(\mu_t,A_t))/2$. First, we evaluate the first term of \eqref{eq: lower bound}. We know from the definition of the lower dominance region $\psi_{LD}(A_t)$ that 
\[\psi_{LD}(A_t) \underbrace{\Ex_\tau\bigg[\int_{s=t}^\tau e^{-r(s-t)} \Delta u(\bar{A}_s,1) ds \bigg]}_{\geq 0} + (1-\psi_{LD}(A_t))\Ex_\tau\bigg[\int_{s=t}^\tau e^{-r(s-t)} \Delta u(\bar{A}_s,0) ds \bigg]\geq 0\]
with equality when $\psi_{LD}(A_t) > 0$. Hence,
 if $\psi_{LD}(A_t) > 0$, we have  
\begin{align}
    &\Ex_{\tau,\theta \sim \mu_t}\bigg[\int_{s=t}^{\tau} e^{-r(s-t)} \Delta u (\bar{A}_s,\theta) ds\bigg]\notag\\
    &= (\mu_t - \psi_{LD}(A_t))\Ex_{\tau}\bigg[\int_{s=t}^{\tau} e^{-r(s-t)} (\Delta u (\bar{A}_s,1) - \Delta u (\bar{A}_s,0)) ds \bigg] \notag\\
    &\geq (\mu_t - \psi_{LD}(A_t))\Ex_{\tau}\bigg[\int_{s=t}^{\tau} e^{-r(s-t)} \Delta u (\bar{A}_s,1) ds \bigg] \tag{$\Ex_\tau\big[\int_{s=t}^\tau e^{-r(s-t)} \Delta u(\bar{A}_s,0) ds \big] \leq 0$}\\
    &> C(\mu_t-\psi_{LD}(A_t)) \label{ineq: conctant C}
\end{align}
for some $C > 0$. This constant $C$ exists because
\[
\min_{A_t \in [0,1]} \Ex_{\tau}\bigg[\int_{s=t}^{\tau} e^{-r(s-t)} \Delta u (\bar{A}_s,1) ds \bigg] > 0
\]
since $\Delta u (A,1) > 0 $ for any $A \in [0, 1]$. If $\psi_{LD}(A_t) = 0$, we have $\Ex_\tau\big[\int_{s=t}^\tau e^{-r(s-t)} \Delta u(\bar{A}_s,1) ds \big] \geq 0$, which implies
\begin{align*}
    \Ex_{\tau,\theta \sim \mu_t}\bigg[\int_{s=t}^{\tau} e^{-r(s-t)} \Delta u (\bar{A}_s,\theta) ds\bigg] &\geq \mu_t\Ex_{\tau}\bigg[\int_{s=t}^{\tau} e^{-r(s-t)} \Delta u (\bar{A}_s,1)  ds \bigg] \tag{$\Ex_\tau\big[\int_{s=t}^\tau e^{-r(s-t)} \Delta u(\bar{A}_s,0) ds \big] \geq 0$} \\
    &> C(\mu_t-\psi_{LD}(A_t)).
\end{align*}

Additionally, note that if $\mathsf{TOL}$ satisfies $M\cdot\mathsf{TOL}(D(\mu,A)) \leq D(\mu,A)$ for every $(\mu,A)$, then $\mu_t-\psi_{LD}(A_t) \geq \frac{1}{2}(\psi^n(A_t) - \psi_{LD}(A_t))$. This follows from
\begin{align*}
    &\mu_t-\psi_{LD}(A_t) - \frac{1}{2}(\psi^n(A_t) - \psi_{LD}(A_t)) \\
    \geq& \frac{1}{2}(\psi^n(A_t) - \psi_{LD}(A_t)) - M\cdot\mathsf{TOL}(D(\mu_t,A_t))/2  \tag{$\mu_t \geq \psi^n(A_t) - M\cdot\mathsf{TOL}(D(\mu_t,A_t))/2$}\\
    \geq& M\cdot\mathsf{TOL}(D(\psi^n(A_t),A_t))/2 - M\cdot\mathsf{TOL}(D(\mu_t,A_t))/2 \tag{$M\cdot\mathsf{TOL}(D(\psi^n(A_t),A_t)) \leq D(\psi^n(A_t),A_t)$} \\
    \geq & 0. \tag{$\psi^n(A_t) \geq \mu_t$}
\end{align*}
Thus, if $\mu_t \geq \psi^n(A_t) - M\cdot\mathsf{TOL}(D(\mu_t,A_t))/2$, then
\begin{align}\label{ineq: first term}
    \Ex_{\tau,\theta \sim \mu_t}\bigg[\int_{s=t}^{\tau} e^{-r(s-t)} \Delta u (\bar{A}_s,\theta) ds\bigg]
    &> \frac{C}{2}(\psi^n(A_t) - \psi_{LD}(A_t)).
\end{align}

Next, we evaluate the second term of \eqref{eq: lower bound}. Notice that, if $(\mu_t, A_t) \notin \Psi^n$, then
\begin{align*}
    p_n = \Pr((\mu_{T^*},A_{T^*}) \in \Psi^n) = p_+(\mu_t,A_{T^*})  1\{(\mu_t + M\cdot\mathsf{TOL}(D(\mu_t,A_{T^*})),A_{T^*}) \in \Psi^n\}.
\end{align*}
We will show that $(\mu_t + M\cdot\mathsf{TOL}(D(\mu_t,A_{T^*})),A_{T^*}) \in \Psi^n$ if $\mu_t \geq \psi^n(A_t) - M\cdot\mathsf{TOL}(D(\mu_t,A_t))/2$. Observe that when $(\mu_t, A_t) \notin \Psi^n$, we must have
\begin{align} \label{ineq: lb A_T^*}
    A_{T^*} > A_t - \mathsf{TOL}(D(\mu_t,A_t)).
\end{align}
To see this, suppose for a contradiction that $A_{T^*} \leq A_t - \mathsf{TOL}(D(\mu_t,A_t))$, which implies $A_{T^*} < A_t$. However, since the definition of $T^*$ implies $A_{T^*} = Z_{T^*} - \mathsf{TOL}(D(\mu_t, A_{T^*}))$, we have
\begin{align*}
    A_{T^*} &= Z_{T^*} - \mathsf{TOL}(D(\mu_t, A_{T^*})) \\
    &> A_t - \mathsf{TOL}(D(\mu_t, A_t)), \tag{$Z_{T^*} = \bar A_{T^*} > A_t$ and $A_t > A_{T^*}$}
\end{align*}
which is a contradiction.

Lemma~\ref{lem: lipschitz} shows that $\psi_{LD}$ is a Lipschitz function. Since $\psi^n(A_t)$ is a translation of $\psi_{LD}(A_t)$, $ \psi^n(A_t)$ has the same Lipschitz constant $L_{\psi}$ as $\psi_{LD}(A_t)$. Hence, if $\mu_t \geq \psi^n(A_t) - M\cdot\mathsf{TOL}(D(\mu_t,A_t))/2$, we must have 
\begin{align*}
    \mu_t + M\cdot\mathsf{TOL}(D(\mu_t,A_{T^*})) &\geq  \psi^n(A_t) + M\cdot\mathsf{TOL}(D(\mu_t,A_t))/2 \\
    &> \big( \psi^n(A_{T^*}) - L_{\psi}\mathsf{TOL}(D(\mu_t,A_t))\big) + M\cdot\mathsf{TOL}(D(\mu_t,A_t))/2 \tag{From \eqref{ineq: lb A_T^*} and Lipschitz continuity of $\psi^n$}\\
    &=  \psi^n(A_{T^*}),
\end{align*}
by setting $M = 2L_{\psi}$. Thus, $(\mu_t + M\cdot\mathsf{TOL}(D(\mu_t,A_{T^*})),A_{T^*}) \in \Psi^n$ holds, implying \[p_n = p_{+}(\mu_t,A_{T^*}) = \frac{\mathsf{DOWN}(D(\mu_{t},A_{T^*}))}{\mathsf{DOWN}(D(\mu_{t},A_{T^*})) + M\cdot\mathsf{TOL}(D(\mu_t,A_{T^*}))}.\]
We set \[\mathsf{DOWN}(D(\mu_t,A_t)) = \frac{\mu_t-\psi_{LD}(A_t)}{2} \quad \text{\&} \quad  \mathsf{TOL}(D(\mu_t,A_t)) = \bar\delta \cdot \frac{\lambda C (\mu_t - \psi_{LD}(A_t))}{4L + 4LM (\mu_t-\psi_{LD}(A_t))^{-1}},\]
for a fixed small number $\bar\delta<1$ so that $\mathsf{TOL}(D(\mu,A)) < 1$ and $M\cdot\mathsf{TOL}(D(\mu,A)) \leq D(\mu,A)$ for every $\mu$ and $A$ (e.g., $\bar\delta = \min\{1,\frac{4L}{\lambda C}, \frac{4L}{\lambda C M}\}$). Thus,
\begin{align}
    1-p_n &= \frac{M\cdot\mathsf{TOL}(D(\mu_t,A_{T^*}))}{\mathsf{DOWN}(D(\mu_{t},A_{T^*})) + M\cdot\mathsf{TOL}(D(\mu_t,A_{T^*}))}\notag \\
    &\leq \frac{M\cdot\mathsf{TOL}(D(\mu_t,A_{T^*}))}{\mathsf{DOWN}(D(\mu_{t},A_{T^*}))} \tag{$M\cdot\mathsf{TOL}(D(\mu_t,A_{T^*})) \geq 0$}\\
    &= \frac{\bar\delta\lambda M C}{2L + 2LM (\mu_t-\psi_{LD}(A_{T^*}))^{-1}} \notag\\
    &\leq \frac{\bar\delta\lambda M C}{2L + 2LM (\psi^n(A_t) - \psi_{LD}(A_t))^{-1}}.\quad\quad\quad \text{(From \eqref{ineq: delta} and continuity of $A_s$)} \label{ineq: second term}
\end{align}

Thus, if $\mu_t \geq  \psi^n(A_t) - M\cdot\mathsf{TOL}(D(\mu_t,A_t))/2,$ we have
\begin{align*}
    &\Ex[U_1(\mu_t,(A_s)_{s \geq t}) - U_0(\mu_t, (A_s)_{s \geq t}) \mid (A_s)_{s \in [t,T^*]}] \\
    &> \frac{C}{2}( \psi^n(A_t) - \psi_{LD}(A_t))  - \frac{1}{\lambda} \Big( \mathsf{TOL}(\psi^n(A_t) - \psi_{LD}(A_t))L(1+p_n) +  L(1-p_n)\Big) \tag{From \eqref{eq: lower bound} and \eqref{ineq: first term}} \\
    &> \frac{C}{2}( \psi^n(A_t) - \psi_{LD}(A_t))  - \frac{L}{\lambda} \Big( 2\mathsf{TOL}(\psi^n(A_t) - \psi_{LD}(A_t)) +  (1-p_n)\Big) \\
    &\geq  \frac{C}{2}( \psi^n(A_t) - \psi_{LD}(A_t))  - \frac{\bar\delta L}{\lambda} \cdot \bigg(\frac{\lambda C (\psi^n(A_t) - \psi_{LD}(A_t)) + \lambda MC}{2L + 2LM (\psi^n(A_t) - \psi_{LD}(A_t))^{-1}} \bigg) \tag{From \eqref{ineq: second term}} \\
    &= \frac{C(1-\bar\delta)}{2}(\psi^n(A_t) - \psi_{LD}(A_t)) \\
    &>0,
\end{align*}
for every given path $(A_s)_{s \in [t,T^*]}$.

In conclusion, we found $\delta$, $M$, and $\epsilon$ such that if $\mu_t \geq  \psi^n(A_t) - M\cdot\mathsf{TOL}(D(\mu_t,A_t))/2$, then the agent must choose action $1$. Note that $\delta$ is increasing in $\mu_t$ and increasing in $A_t$. Thus, $\mu_t + M\cdot\mathsf{TOL}(D(\mu_t,A_t))/2$ is increasing in $\mu_t$ and continuous in $\mu_t$ when $\mu_t > \psi_{LD}(A_t)$. Therefore, for each $A_t$, there exists $\mu'(A_t) < \psi^n(A_t)$ such that 
\begin{align*}
    \mu'(A_t) + \frac{M\cdot\mathsf{TOL}(D(\mu'(A_t),A_t))}{2} =  \psi^n(A_t).
\end{align*}
Then we define
\[\psi^{n+1} = \{(\mu_t,A_t) : \mu_t \geq \mu'(A_t)\},\]
which also implies $\psi^{n+1}(A_t) := \sup \{\mu \in \Delta(\Theta): (\mu,A_t) \notin \Psi^{n+1}\}= \mu'(A_t)$. From the argument above, we must have an agent always choosing action $1$ whenever $(\mu_t,A_t) \in \Psi^{n+1}$ (\textbf{Contagion} in Definition~\ref{def: psi}). Moreover, we can rewrite the above equation as follows:
\begin{align*}
    (\psi^{n+1}(A_t)-\psi_{LD}(A_t)) + \frac{M\cdot\mathsf{TOL}(\psi^{n+1}(A_t)-\psi_{LD}(A_t))}{2} = \psi^n(A_t) - \psi_{LD}(A_t),
\end{align*}
where the RHS is constant in $A_t$ by the translation property of $\psi^n$. Thus, $\psi^{n+1}(A_t) - \psi_{LD}(A_t)$ must be also constant in $A_t$ (\textbf{Translation} in Definition~\ref{def: psi}). This concludes that round-$(n+1)$ dominance region $\Psi^{n+1}$ satisfies $\Psi^n \subset \Psi^{n+1}$ because $c_n = \psi^n(A_t) - \psi_{LD}(A_t) > \psi^{n+1}(A_t) - \psi_{LD}(A_t) =: c_{n+1}$.
\end{proof} 

\noindent \underline{\textbf{Step 1D.}} In the limit, the sequence $(\Psi^{n})_n$ covers the $(\mu,A)$ region where action $1$ is not strictly dominated.
\begin{lemma}
    \label{lem: induction}
    \[
    \bigcup_{n \in \mathbb{N}} \Psi^n = \Big\{(\mu,A) \in \Delta(\Theta) \times [0,1]: \mu > \psi_{LD}(A)\Big\}.
    \]
\end{lemma}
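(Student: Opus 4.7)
The plan is to leverage the translation property of $\Psi^n$ guaranteed by Lemma~\ref{lem: contagion} and reduce the claim to showing that the sequence of distance thresholds $(c_n)$ defined by $\Psi^n = \{(\mu,A) : D(\mu,A) \geq c_n\}$ converges to $0$. Once this is established, the desired set equality follows almost immediately: the inclusion $\bigcup_n \Psi^n \subseteq \{(\mu,A) : \mu > \psi_{LD}(A)\}$ is immediate because each $c_n > 0$ forces $D(\mu,A) > 0$ on $\Psi^n$, and the reverse inclusion follows because any pair with $D(\mu,A) > 0$ satisfies $D(\mu,A) \geq c_n$ for all $n$ large enough.

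The core of the argument is the recursion on $c_n$ extracted from the proof of Lemma~\ref{lem: contagion}. There, we showed that the threshold $\psi^{n+1}(A_t)$ is defined implicitly by
\[
\psi^{n+1}(A_t) + \frac{M \cdot \mathsf{TOL}\big(\psi^{n+1}(A_t) - \psi_{LD}(A_t)\big)}{2} = \psi^n(A_t),
\]
which, combined with the translation property, yields the one-dimensional recursion
\[
c_{n+1} + \frac{M \cdot \mathsf{TOL}(c_{n+1})}{2} = c_n.
\]
Since $\mathsf{TOL}(c) > 0$ whenever $c > 0$, this recursion shows $(c_n)$ is strictly decreasing, and it is clearly bounded below by $0$, so it admits a limit $c^\star \geq 0$.

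To pin down $c^\star$, I would pass to the limit in the recursion. The function $\mathsf{TOL}(D) = \bar\delta \cdot \lambda C D^2 / (4LD + 4LM)$ constructed in Lemma~\ref{lem: contagion} is continuous on $[0,\infty)$ with $\mathsf{TOL}(0) = 0$ and $\mathsf{TOL}(c) > 0$ for $c > 0$. Taking $n \to \infty$ on both sides of the recursion gives $c^\star + \frac{M \cdot \mathsf{TOL}(c^\star)}{2} = c^\star$, i.e.\ $\mathsf{TOL}(c^\star) = 0$, which forces $c^\star = 0$.

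The final step is to assemble the set equality. For any $(\mu,A)$ with $\mu > \psi_{LD}(A)$ we have $D(\mu,A) > 0$, and since $c_n \downarrow 0$ there exists $N$ with $c_N \leq D(\mu,A)$, so $(\mu,A) \in \Psi^N \subseteq \bigcup_n \Psi^n$. Conversely, $(\mu,A) \in \bigcup_n \Psi^n$ implies $D(\mu,A) \geq c_n > 0$ for some $n$, hence $\mu > \psi_{LD}(A)$. No real obstacle arises here---the only subtlety is ensuring the $\mathsf{TOL}$ function chosen in Lemma~\ref{lem: contagion} is continuous at zero with $\mathsf{TOL}(0)=0$, which is transparent from its explicit form, so the monotone convergence argument goes through cleanly.
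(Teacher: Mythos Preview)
Your proposal is correct and follows essentially the same approach as the paper: both pass to the limit in the recursion $c_{n+1} + \tfrac{M}{2}\mathsf{TOL}(c_{n+1}) = c_n$ (equivalently, $\psi^{n+1} + \tfrac{M}{2}\mathsf{TOL}(D(\psi^{n+1},\cdot)) = \psi^n$) and conclude $\mathsf{TOL}(c^\star)=0$, hence $c^\star=0$. Your write-up is in fact slightly more careful than the paper's---you explicitly verify continuity of $\mathsf{TOL}$ at $0$ and spell out both set inclusions---but the underlying argument is identical.
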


\begin{proof}[Proof of Lemma~\ref{lem: induction}]
    Recall $\psi^n(A_t) = \sup\{\mu \in \Delta(\Theta): (\mu,A_t) \notin \Psi^n\}.$ By Lemma~\ref{lem: contagion}, $\psi^n(A_t)$ is decreasing in $n$. Define $\psi^*(A_t) = \lim_{n\to\infty} \psi^n(A_t)$. In limit, we must have
\begin{align*}
    &\psi^*(A_t) + M\cdot\mathsf{TOL}(D(\psi^*(A_t),A_t))/2 = \psi^*(A_t)\\
    &\Rightarrow \mathsf{TOL}(D(\psi^*(A_t),A_t)) = 0 \Rightarrow \psi^*(A_t) = \psi_{LD}(A_t),
\end{align*}
which implies 
\[\bigcup_{n \geq 0} \Psi^n = \Big\{(\mu_t,A_t): \mu_t > \psi_{LD}(A_t)\Big\}\]
as required. 
\end{proof}

\noindent \underline{\textbf{Step 2.}} We have constructed an information policy which uniquely implements an equilibrium achieving \eqref{eqn:opt} for $|\Theta| = 2$. We now lift this to the case with finite states $\Theta = \{\theta_1,\ldots \theta_n\}$ as set out in the main text, where recall we set $\theta^*$ as the dominant state.  

In particular, we show that if $\mu^*_t \notin \Psi_{LD}(A_t) \cup \text{Bd}_{\theta^*},$ then playing action $1$ is the unique subgame perfect equilibrium under the information policy $\bm{\mu}^*$. To apply Step 1, we will construct an auxiliary binary-state environment for each direction from $\delta_{\theta^*}$.

To this end, we call a vector $\hat{\bm d} = (\hat{d}_\theta)_{\theta \in \Theta} \in \mathbb{R}^n$ a \emph{feasible directional vector} if  $\sum_\theta{\hat{d}_\theta} = 0$ and $\hat{d}_{\theta^*} = 1$ but $\hat{d}_{\theta} < 0$ if $\theta^* \ne \theta$. For each feasible directional vector $\hat{\bm d}$, define a function $\bar{\alpha}_{\hat{\bm d}}: [0,1] \to [0,1]$ such that, for every $A \in [0,1],$
    \[\bar{\alpha}_{\hat{\bm d}}(A) = \inf\Big\{ \alpha \in [0,1] : \delta_{\theta^*} - (1-\alpha) \hat{\bm d}  \notin \Psi_{LD}(A) \cup \text{Bd}_{\theta^*} \Big\}.\]
    Note that $\delta_{\theta^*} - (1-\alpha) \hat{\bm d} \in \text{Bd}_{\theta^*}$ if and only if $\alpha = 0$ because $\hat{d}_{\theta^*} = 1$. Observe that
    \begin{align*}
        \big( \Psi_{LD}(A_t) \cup \text{Bd}_{\theta^*} \big) ^c = \bigcup_{\hat{\bm d} \in \mathcal{D}} \big\{\delta_{\theta^*} - (1-\alpha) \hat{\bm d}: \alpha \in (\bar{\alpha}_{\hat{\bm d}}(A_t)),1]\big\},
    \end{align*}
    where $\mathcal{D}$ is the set of all feasible directional vectors. This is true because 1) $\big(\Psi_{LD}(A_t)\big)^c$ is a polygon since the expectation operator is linear; and 2) $\Psi_{LD}(A_t) \cup \text{Bd}_{\theta^*} \Delta(\Theta)$ is closed. Thus, it is equivalent to show that, for every feasible directional vector $\hat{\bm d},$ if $\alpha \in (\bar{\alpha}_{\hat{\bm d}}(A_t),1]$, then playing action $1$ is the unique subgame perfect equilibrium under the information policy $\bm{\mu^*}$.

    Fix a feasible directional vector $\hat{\bm d}.$ Define
    \[\Delta(\Theta)_{\hat{\bm d}} = \big\{\delta_{\theta^*} - (1-\alpha) \hat{\bm d} : \alpha \in [0,1]\big\} \]
    as the set of beliefs whose direction from $\delta_{\theta^*}$ is $\hat{\bm d}$. Consider an auxiliary environment with binary state $\tilde{\Theta} = \{0,1\}.$ Construct a bijection $\psi_{\hat{\bm d}}: \Delta(\Theta)_{\hat{\bm d}} \to \Delta(\tilde{\Theta})$ such that $\psi_{\hat{\bm d}}(\mu) =  \alpha$ if $\mu = \delta_{\theta^*} - (1-\alpha) \hat{\bm d}.$ Denote $\tilde{\mu} \coloneqq \psi_{\hat{\bm d}}(\mu) \in \Delta(\tilde{\Theta})$ for every $\mu \in \Delta(\Theta)_{\hat{\bm d}}$. Note that $\psi_{\hat{\bm d}}(\delta_{\theta^*}) = 1$.

    We define a flow payoff for each player under the new environment $\tilde{u}: \{0,1\} \times [0,1] \times \tilde{\Theta} \to \mathbb{R}$ as follows:
    \[\tilde{u}(a,A,\tilde{\theta}) = u\left(a,A, \psi^{-1}_{\hat{\bm d}}(\tilde{\theta}) \right).  \]
    Define $\Delta \tilde{u} (A,\tilde\theta) := u(1,A,\tilde\theta) - u(0,A,\tilde\theta)$. Since $\psi_{\hat{\bm d}}$ is a linear map, $\Delta  \tilde u (A,\theta)$ is still continuously differentiable and strictly increasing in $A.$ Also, given that $\Delta \tilde{u}(0,1) = \Delta u(0,\theta^*) > 0$, we still have an action-$1$-dominance region under this new environment.

    Then we can similarly define the maximum belief under which players prefer action $0$ even if all future players choose to play action $1:$
    \[\psi_{LD}^{\hat{\bm d}}(A_t) := \max\Big\{\tilde\mu \in \Delta(\tilde\Theta): 
    \Ex \Big[ \int_{t} \tilde u(0,\bar{A}_{s},\tilde\theta) ds \Big] \geq 
    \Ex \Big[ \int_{t} \tilde u(1,\bar{A}_{s},\tilde\theta) ds \Big]
    \Big\}.\]
    We define $\tilde D(\tilde{\mu},A) = \tilde{\mu} - \psi_{LD}^{\hat{\bm d}}(A)$. Then it is easy to see that $\tilde D(\mu,A) = D(\mu,A)$ for every $\mu \in \Delta(\tilde\Theta)$ and $A \in [0,1]$
    
    A key observation is that if $\mu_{t-} \notin \Psi_{LD}(A_t)$ and $\mu_{t-} \in \Delta(\Theta)_{\hat{\bm d}}$, then every future belief must stay in $\Delta(\Theta)_{\hat{\bm d}}$ almost surely with respect to any strategy. We can rewrite the time-$t$ information struture corresponding to the new environment as follows:
    \begin{enumerate}
    \item \textbf{Silence on-path.} If $\tilde{\mu}_{t-} > \psi_{LD}^{\hat{\bm d}}(A_t)$ and $|A_t - Z_{t-}| < \mathsf{TOL}(D)$
    \begin{center}
        $\mu_t = \mu_{t-}$ almost surely,
    \end{center}
    i.e., no information, and $dZ_t = \lambda(1-Z_{t-}).$
    \item \textbf{Noisy and asymmetric off-path.} If $\tilde{\mu}_{t-} > \psi_{LD}^{\hat{\bm d}}(A_t)$ and $Z_{t-} - A_t \geq \mathsf{TOL}(D),$
    \begin{align*}
        \tilde\mu_t = \begin{cases}
            \tilde\mu_{t-} + M \cdot \mathsf{TOL}(D) &\text{w.p. $\frac{\mathsf{DOWN}(D)}{\mathsf{DOWN}(D)+ M\cdot\mathsf{TOL}(D)}$} \\
            \tilde\mu_{t-} -  \mathsf{DOWN}(D) &\text{w.p. $\frac{M \cdot\mathsf{TOL}(D)}{\mathsf{DOWN}(D)+ M\cdot\mathsf{TOL}(D)}$},
        \end{cases}
    \end{align*}
    and reset $Z_t = A_t$.
\end{enumerate}
By applying Step 1, 
we conclude that if $\tilde\mu_{t-} > \psi_{LD}^{\hat{\bm d}}(A_t)$, then action $1$ is played under any subgame perfect equilibrium. The only subtlety is to verify that as in \eqref{ineq: conctant C}, there exists a constant $C>0$ such that 
\begin{align*}
    \min_{A_t \in [0,1]} \Ex_\tau\bigg[ \int_{s=t}^{\tau} e^{-r(s-t)} \Delta \tilde{u}(\bar{A}_s,1)  dt\bigg]  \geq C
\end{align*}
for any feasible directional vector $\hat{\bm d}.$ This is clear because $\Delta \tilde{u}(A,1) = \Delta u(A,\theta^*) > \Delta u(0,\theta^*)>0$ for every $A$ by the definition of $\theta^*$.

Since $[0,\psi_{LD}^{\hat{\bm d}}(A_t)] = \psi_{\hat{\bm d}} \big(\Delta(\Theta)_{\hat{\bm d}} \cap \Psi_{LD}(A_t)\big)$, we have $(\psi_{LD}^{\hat{\bm d}}(A_t), 1] = (\bar{\alpha}_{\hat{\bm d}}(A_t), 1]$. Hence, if $\alpha \in (\bar{\alpha}_{\hat{\bm d}}(A_t),1]$, then playing action $1$ is the unique subgame perfect equilibrium under the information policy $\bm{\mu^*}$, as desired. 

\noindent \underline{\textbf{Step 3.}} We now show sequential optimality. Step 3A handles the case when beliefs are such that $1$ is strictly dominated, while 3B handles the case when $1$ is not strictly dominated. 

\noindent \underline{\textbf{Step 3A.}} $\bm{\mu^*}$ is $\epsilon$-sequentially optimal when $\mu^*_t \in \Psi_{LD}(A_t).$

Fix any $\mu_0 \in \Psi_{LD}(A_0)$. Define $\tau^* \coloneqq \inf\{t: \mu_t \notin \Psi_{LD}(A_0)\}$ and $\bar{\tau} \coloneqq \inf\{t: \mu_t \notin \Psi_{LD}(A_t)\}$, i.e., $\tau^*$ and $\bar{\tau}$ are the first times $t$ at which the belief $\mu_t$ is not in $\Psi_{LD}(A_0)$ and $\Psi_{LD}(A_t)$, respectively. This means, at $s< \bar{\tau}$, all agents who can switch choose action $0$.   This pins down an aggregate action $A_t = \underline A (A_0, t)$ for every $t \leq \bar{\tau}$. Therefore, $A_t < A_0$ for every $t \leq \bar{\tau}$, implying $\Psi_{LD}(A_0) \subset \Psi_{LD}(A_t)$. Thus, $\tau^* \geq \bar{\tau},$ and so $A_t = \underline A (A_0, t)$ for every $t \leq \tau^*$.

Moreover, we know that $A_s \leq \bar A(A_t, s - t)$ for any $s \geq t$, so we can find an upper bound of the designer's payoff as follows:
\begin{align*}
    &\Ex^{\sigma}\Big[\phi(\bm{A})\Big]  \\
    &= \Ex_{\tau^*} \left[ \phi(\bm{A})\mathbb{1}(\tau^* = \infty) + \phi(\bm{A})\mathbb{1}(\tau^* < \infty) \right]\\
    &\leq \Ex_{\tau^*} \left[ \phi(\bm{\underline{A}})\mathbb{1}(\tau^* = \infty) + \phi(\bm{\bar{A}})\mathbb{1}(\tau^* < \infty) \right] \tag{$\bm{A} \leq \bm{\bar{A}}$}\\
    &= \phi(\bm{\underline{A}}) +   \left\{\phi(\bm{\bar{A}}) - \phi(\bm{\underline{A}})\right\}\Pr(\tau^* < \infty),
\end{align*}
where $\bm{\underline{A}}$ satisfies $\underline{A}_t = \underline{A}(A_0, t)$, and $\bm{\bar{A}}$ satisfies $\bar{A}_t = \bar{A}(A_0, t)$.

For every $t \in [0,\infty)$, the optional stopping theorem implies 
\begin{align*}
    \mu_0 &= \Ex\big[\mu_{\tau^* \wedge t} \big] \\
    &= \Ex[ \mu_{\tau^*} \mid \tau^* < t ] \Pr(\tau^* < t) + \Ex[\mu_{t} \mid \tau^* \geq t] \Pr(\tau^* \geq t) \\
    &\geq \underbrace{\Ex[ \mu_{\tau^*} \mid \tau^* < t ]}_{\eqqcolon \hat{\mu}_t } \Pr(\tau^* < t).
\end{align*}
This implies $\hat{\mu}_t \in F(\Pr(\tau^* < t),\mu_0)$ for every $t.$
By the definition of $\tau^*$ and $\mu_t$ is right-continuous, $\mu_{\tau^*} \in \overline{\Psi^c_{LD}(A_0)}$ under the event $\{\tau^* < \infty\}$. Since $\overline{\Psi^c_{LD}(A_0)}$ is convex, we also have $\hat{\mu}_t \in \overline{\Psi^c_{LD}(A_0)}.$ This means $\hat{\mu}_t \notin \text{Int } \Psi_{LD}(A_0)$, but  $\hat{\mu}_t \in F(\Pr(\tau^* < t),\mu_0)$. The definition of $p^*(\mu_0,A_0)$ implies $p^*(\mu_0,A_0) \geq \Pr(\tau^* < t)$ for every $t.$ Thus, 
\begin{align*}
    \Ex^{\sigma}\Big[\phi(\bm{A})\Big] &\leq \phi(\bm{\underline{A}}) +   \left\{\phi(\bm{\bar{A}}) - \phi(\bm{\underline{A}})\right\}p^*(\mu_0,A_0)\\
    &= (1-p^*(\mu_0,A_0)) \phi(\bm{\underline{A}}) + p^*(\mu_0,A_0)\phi(\bm{\bar{A}}).
\end{align*}
This implies
\begin{align*}
    \eqref{eqn:opt} = \sup_{\substack{\bm{\mu} \in \mathcal{M} \\
    \sigma \in \Sigma(\bm{\mu}, A_0)}}\Ex^{\sigma}\Big[\phi(\bm{A})\Big] \leq  (1-p^*(\mu_0,A_0)) \phi(\bm{\underline{A}}) + p^*(\mu_0)\phi(\bm{\bar{A}}).
\end{align*}
Under $\bm{\mu}^*$, if $\mu_{0+} \in \Psi^c_{LD}(A_0),$ then everyone takes action $1$ under any equilibrium outcome from we argued earlier. Thus,
\begin{align*}
    \inf_{\sigma \in \Sigma(\bm{\mu}^*, A_0)}
\Ex^{\sigma}\Big[\phi(\bm{A}) \Big] \geq (1-p^*(\mu_0,A_0) + \eta) \phi(\bm{\underline{A}}) + (p^*(\mu_0,A_0)- \eta)\phi(\bm{\bar{A}}).
\end{align*}
Taking limit $\eta \to 0$, we obtain
\begin{align*}
    \eqref{eqn:adv} =  \sup_{\bm{\mu} \in \mathcal{M}} \inf_{\sigma \in \Sigma(\bm{\mu}, A_0)}
\Ex^{\sigma}\Big[\phi(\bm{A})\Big] \geq  (1-p^*(\mu_0,A_0)) \phi(\bm{\underline{A}}) + p^*(\mu_0,A_0)\phi(\bm{\bar{A}}) \geq \eqref{eqn:opt}.
\end{align*}
Since $\eqref{eqn:opt} \geq  \eqref{eqn:adv}$, we obtain $\eqref{eqn:opt} =  \eqref{eqn:adv}$.

\noindent \underline{\textbf{Step 3B.}} We finally show $\bm{\mu^*}$ is sequentially optimal when $\mu^*_t \notin \Psi_{LD}(A_t) \cup \text{Bd}_{\theta^*} \Delta(\Theta).$ We proceed casewise: 
\begin{itemize}[leftmargin=*]
    \item \textbf{Case 1:} If $\mu_{t-} \notin \Psi_{LD}(A_t)$ and $|A_t - Z_{t-}| < \mathsf{TOL}(D(\mu_{t-},A_t))$. In this case, there is no information arriving, and everyone takes action 1. This will increase $A_t$, and every agent always takes action $1$ from time $t$ onwards. This is the best outcome for the designer, implying sequential optimality.
    \item \textbf{Case 2:} If $\mu_{t-} \notin \Psi_{LD}(A_t)$ and $|A_t - Z_{t-}| \geq \mathsf{TOL}(D(\mu_{t-},A_t))$. In this case, the belief moves to either $\mu_{t-} + (M \cdot \mathsf{TOL}(D))\cdot \hat{d}(\mu_{t-})$ or $\mu_{t-} - \mathsf{DOWN}(D) \cdot \hat{d}(\mu_{t-})$. Note that $\mu_{t-} - \mathsf{DOWN}(D) \cdot \hat{d}(\mu_{t-}) \notin \Psi_{LD}(A_t)$ because $\psi_{\hat{\bm d}}(\mu_t - \mathsf{DOWN}(D) \cdot \hat{d}(\mu_{t-})) = (1 + \bar{\alpha}_{\hat{\bm d}}(A_t))/2 > \bar{\alpha}_{\hat{\bm d}}(A_t)$. So no matter what information arrives, every agent takes action 1. This will increase $A_t$, and every agent always takes action $1$ after time $t$. Again, this is the best outcome for the designer, implying sequential optimality.
\end{itemize}
\end{proof}

\clearpage

\section{Examples of games nested in our environment} \label{appendix:examples}

\subsection{Dynamic regime changes games} Regime change games have been extensively studied in the literature on coordination \citep{morris1998unique,angeletos2007dynamic,basak2024panics} among many others. We show how our model can nest a variant of such games. 

\paragraph{Model} Let there be a total order over $\Theta$. $a_{it} = 0$ corresponds to attacking the regime; $a_{it} = 1$ corresponds to not. The hazard rate at which the regime fails is given by $\gamma(A,\theta) \geq 0$ which is strictly decreasing in $A$ (measure of people \emph{not} attacking) and strictly decreasing in $\theta$ (the strength of the regime). If the attack succeeds, the one-time benefit pays out a lump sum normalized to $1$, and the flow cost of attacking is normalized to $c$. Let $\rho$ denote the random time the regime has failed. Observe that---as is standard in regime change games---player $i$ cannot influence the distribution of $\rho$. If the regime has not yet failed, flow payoffs at time $t$ are:  
    \[
    u(1,A,\theta, t) = 0 \quad u(0,A,\theta, t) = \underbrace{\mathbbm{I}(a_t = 0, \rho = t)}_{\text{Attacking as regime fails}} - c, \quad c > 0
    \]
    On the other hand, if the regime has failed, payoffs are
    \[
    u(1,A,\theta, t) = 0 \quad u(0,A,\theta, t) = - c
    \]
    It suffices to consider only histories at which the regime has not yet failed. 
    For convenience, we define the ``conditional" pdf of $\rho$, the failure time, as follows:\footnote{If a nonnegative random variable $\rho$ with cdf $F$ and pdf $f$ has the hazard rate function $\gamma(t) \coloneqq \frac{f(t)}{1-F(t)}, t \geq 0$, then we can write $F$ and $f$ as functions of $\rho$ as follows:
    \[F(t) = 1-e^{-\int_0^t \gamma(s)ds}, \quad f(t) = \gamma(t)e^{-\int_0^t \gamma(s)ds}, \] for every $t \geq 0.$}
    \begin{align*}
    f_{\gamma,\theta}(\bm{A},s| t) \coloneqq \gamma (A_s,\theta) e^{-\int_t^s \gamma(A_{s'},\theta) ds'}.
    \end{align*}
    Now observe that expected payoff difference from taking $1$ (not attacking) versus $0$ (attacking) between $t$ and the next tick $t + \tau$ conditional on history $H_t$ is 
\begin{align*}
    &- \Ex_{\tau,\theta,\bm{A}} \bigg[ \underbrace{\left(1 - \int_t^{t + \tau} f_{\gamma, \theta}(\bm{A},s |t)ds\right) \int_t^{t + \tau} e^{-r(s - t)} (-c) ds}_{\text{The regime survives until $t + \tau$}} \\
    & \quad + \int_t^{t + \tau} f_{\gamma, \theta}(\bm{A},s | t) \underbrace{\left\{ e^{-r(s-t)} - \int_t^{t + \tau} e^{-r(v-t)} c dv \right\}}_{\text{The regime fails at $s \in [t, t + \tau]$}} ds \bigg\lvert H_t\bigg] \\
    &=\Ex_{\tau,\theta,\bm{A}} \bigg[ \underbrace{\int_t^{t + \tau} e^{-r(s-t)} \Big( c - f_{\gamma, \theta}(\bm{A},s | t) \Big) ds}_{=: \Delta  U(\bm{A}, \theta, \tau | t)}   \bigg\lvert H_t\bigg] 
\end{align*}
so before the regime fails, this can be handled within our framework with the caveat that $\Delta U(\bm{A},\theta,\tau | t)$ depends also on time $t$, the last time an agent's clock ticked, and the path of aggregate actions $\bm{A}$. We will assume that $\Delta U(\bm{A}, \theta, \tau| t)$ satisfies the following conditions which are close analogs of the conditions imposed on payoffs in the main text, but suitably modified to handle non-time separability: 
\begin{enumerate}
    \item[(i)] \textbf{Lipschitz Continuity and Supermodularity}: $\Delta U(\bm{A},\theta,\tau | t)$ is increasing and Lipschitz in $\bm{A}$, i.e., for every $\bm{A},\bm{A'} \in [0,1]^\mathcal{T},$
    \begin{itemize}
        \item if $A_s \geq A'_s$ for every $s \in \mathcal{T}$, then $\Delta U (\bm{A}, \theta, \tau | t) > \Delta U (\bm{A'}, \theta, \tau | t).$
        \item there exists a constant $L^* > 0$ such that $|\Delta U(\bm{A},s|t) - \Delta U(\bm{A'},s|t)| \leq L^* \|\bm{A} - \bm{A'}\|_\infty$
    \end{itemize}    
    A sufficient condition for these is $\gamma$ is continuously differentiable and strictly decreasing in $A.$ The proof is deferred to Lemma \ref{lem: technical regime change} in Online Appendix \ref{online appendix: technical proof}.
    \item[(ii)] \textbf{Dominant State}: $\Delta U (\bm{0}, \bar\theta, \tau | t) > 0$ holds for all $t<\tau \in \mathcal{T}$, where $\bar\theta$ is the maximal element of $\Theta$. A sufficient condition for this is $\gamma (0, \bar\theta) < c$.
\end{enumerate}

A further difference is that, since we have assumed that agents observe the failure of the regime (as in typical dynamic regime change games), observing that the regime has survived conveys information about the state. That is, the designer no longer has complete control over the belief martingale. Nonetheless, we have: 

\begin{claim}
    Theorem \ref{thrm:main} applies to this dynamic regime change environment. 
\end{claim}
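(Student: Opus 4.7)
The plan is to verify that the proof of Theorem~\ref{thrm:main} goes through with three modifications: (a) the single-period flow-payoff difference $\Delta u(A,\theta)$ is replaced by the tick-interval expected payoff difference $\Delta U(\bm A,\theta,\tau|t)$ derived above; (b) the belief martingale now carries an exogenous component from survival-based Bayesian updating; and (c) $\Delta U$ depends on the entire future path $\bm A$ rather than only current aggregate play.

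First, since $\Delta U(\bm A,\theta,\tau|t)$ is Lipschitz in $\bm A$ with constant $L^*$, strictly increasing in the path, and satisfies the dominant-state condition $\Delta U(\bm 0,\bar\theta,\tau|t)>0$, I would define the lower dominance region $\Psi_{LD}(A_t)$ and threshold $\psi_{LD}(A_t)$ exactly as in the main text but using $\Delta U$. Lipschitz continuity of $\psi_{LD}$ in $A_t$ (the analog of Lemma~\ref{lem: lipschitz}) follows from the same mean-value argument applied to $\Delta U$. The Lipschitz constant $L^*$ then plays the role of $L$ throughout the lower-bound derivation of \eqref{eq: lower bound}, and a positive constant $C$ bounding the first term continues to exist because $\Delta U(\bar{\bm A},\bar\theta,\tau|t)\geq \Delta U(\bm 0,\bar\theta,\tau|t)>0$ uniformly.

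Second, I would address the partial loss of control over the belief martingale. Because $\gamma(A,\theta)$ is strictly decreasing in $\theta$, the posterior drifts toward the dominant state $\bar\theta$ as time passes without failure, so the exogenous component of the drift is in the direction that \emph{supports} the designer-preferred action — it only helps. The designer retains the ability to inject discrete public signals at the trigger times $\{t:|A_t-Z_{t-}|\geq \mathsf{TOL}(D)\}$ (and inside $\Psi_{LD}$), because any Bayes-plausible two-point distribution with respect to the current posterior (which already incorporates survival) is implementable by an appropriate public signal. Hence the informational puts policy can be reinterpreted as a rule for conditional signals given survival history, and the required asymmetric up/down jumps and the maximal-escape jump are all feasible.

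With these pieces in place, the contagion argument of Steps~1A–1D carries through essentially verbatim: the bound \eqref{eq: lower bound} holds with $L$ replaced by $L^*$, and the iterative construction of $\Psi^n$ still converges to $\{(\mu,A):\mu>\psi_{LD}(A)\}$. The reduction to binary states via the directional vectors $\hat{\bm d}$ in Step~2 uses only linearity of expectation and supermodularity, both of which are preserved. Sequential optimality (Step~3) also carries over: inside $\Psi_{LD}$ the immediate precise injection remains optimal (the exogenous survival drift only shrinks $\Psi_{LD}$ over time, which cannot hurt), while outside $\Psi_{LD}$ the policy is self-reinforcing because the inconclusive downward jump lands at a belief still strictly outside $\Psi_{LD}(A)$, so every continuation admits full implementation of the largest attainable path of play. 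The main technical obstacle I anticipate is part (b): carefully verifying that the asymmetric two-point signal at trigger times is compatible with the exogenous survival-induced drift on the \emph{augmented} public filtration — but this reduces to a Bayes-plausibility check at a countable set of injection instants and does not alter the structure of the contagion or sequential-optimality arguments.
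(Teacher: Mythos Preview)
Your proposal is correct and follows the paper's three-modification approach: redefine $\Psi_{LD}$ via $\Delta U$ with Lipschitz constant $L^*$, observe that survival-induced drift toward $\bar\theta$ preserves the contagion (the paper makes explicit that once $(\mu_t,A_t)\in\Psi^n$ the pair \emph{stays} there since both $\mu_s$ and $A_s$ increase while $\psi^n$ is decreasing), and verify that the time-$0$ jump inside $\Psi_{LD}$ remains optimal. On this last point the paper's argument is slightly cleaner than your ``survival drift only shrinks $\Psi_{LD}$'' remark---it simply notes that Step~3A's upper bound was derived under \emph{full} control of the belief martingale, so it a fortiori bounds the constrained regime-change problem, and the time-$0$ jump remains feasible (being at $t=0$, before any survival information accrues) and hence still achieves that bound.
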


\paragraph{Sketch of modification to proof} The argument needs to be modified in three ways. First, we verify that under the non-time separable regime change environment, the boundary of the lower dominance region decreases in $A$. Next, we argue that a time-$0$ jump to the maximum escape belief remains optimal. Finally, we argue that if beliefs are outside the lower dominance region, our information puts policy continues to fully implement action $1$. We will sketch the modification for binary states; an argument analogous to Step 2 of the proof of Theorem \ref{thrm:main} lifts the result to finite states. 

\noindent \underline{\textbf{Modification 1:}} the lower dominance region $\Psi_{LD}$ remains well-defined and $\psi_{LD}$ is still a decreasing function.

We define $\Psi_{LD}(A_t)$ analogously to the main text as follows:
\begin{align*}
\Psi_{LD}(A_t) \coloneqq \Big\{ \mu \in \Delta(\Theta): \Ex_{\theta \sim \mu} [\Delta U(\bm{\bar{A}}, \theta, \tau| t)] \leq 0 \Big\},
\end{align*}
where $\bm{\bar{A}} := (\bar A_s)_{s \geq t}$ with $\bar A_s$ solving $d \bar{A}_s = \lambda (1- \bar A_s) ds$ for $s \geq t$ and $\bar A_t = A_t$. To see that $\Psi_{LD}(A_t)$ is well-defined, observe that if agent $i$'s clock ticks at time $t$, her expected payoff difference is
\begin{align*}
    \Ex_{\tau,\theta,\bm{A}} \bigg[ \underbrace{\int_t^{t + \tau} e^{-r(s-t)} \Big( c - f_{\gamma, \theta}(\bm{A},s | t) \Big) ds}_{= \Delta U (\bm A, \theta, \tau | t)}  \bigg] = \Ex_{\tau,\theta,\bm{A}} \bigg[ \underbrace{\int_0^{\tau} e^{-rs} \Big( c - f_{\gamma, \theta}(\bm{A},s | 0) \Big) ds}_{= \Delta U (\bm{A_{-t}}, \theta, \tau | 0)}  \bigg],
\end{align*}
where $\bm A = (A_v)_{v \geq t}$ and $\bm{A_{-t}} = (A_{v-t})_{v \geq t}$. Hence, the calendar time affects the expected payoff only through future aggregate plays, which implies that the lower dominance region $\Psi_{LD}$ depends only on the time-$t$ aggregate action $A_t$. Note also that $\psi_{LD}$ is decreasing in $A$ because of the supermodularity condition.

Lemma \ref{lem: lipschitz} still applies to the regime change environment because $\frac{\partial \Delta  \bar U}{\partial A}$ is still bounded above using the fact that 1) $\Delta U(\bm{A},s|t)$ is Lipschitz continuous in $A$ and 2) $\|\bm{\bar{A}} - \bm{\bar{A'}} \|_\infty \leq |A_0-A'_0|$ because $|\bar{A}_t-\bar{A'_t}| \leq A_0-A'_0$ for every $t \in \mathcal{T}$.

\noindent \textbf{\underline{Modification 2:}} time-$0$ jump to the maximum escape belief remains optimal.

\begin{figure}[h]
\centering
\caption{Escaping $\Psi_{LD}$ in dynamic regime change games}
    \subfloat[No information]{\includegraphics[width=0.25\textwidth]{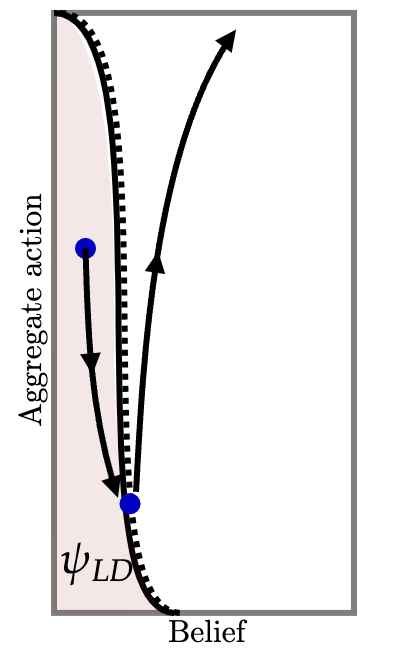}}
    \subfloat[Immediate injection]{\includegraphics[width=0.25\textwidth]{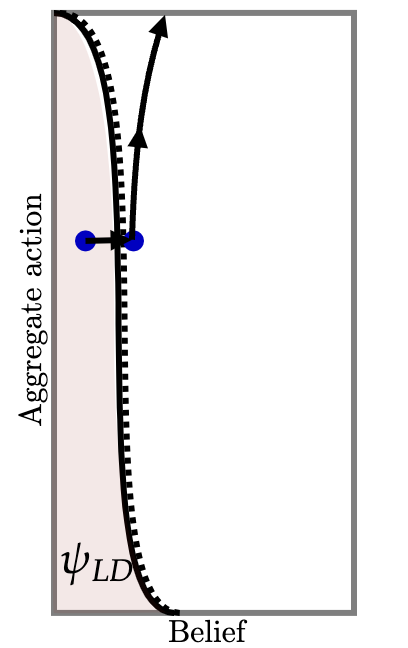}}
    \label{fig:regime_change}
\end{figure}

Step 3A in the proof of Theorem \ref{thrm:main} shows, if the designer can choose any belief martingale for agents, his optimal value is bounded by his utility from an information structure that has time-0 jump: $\mu_{0+} \in \{0,\psi_{LD}(A_0)\}$ {under a conjecture that all agents choose action 1 at all time if $\mu_{0+} = \psi_{LD}(A_0)$.} This time-0 belief jump remains implementable in this environment via the same informational puts policy in Theorem \ref{thrm:main}. It remains to verify that under informational puts, players must in equilibrium believe that all future agents will choose action 1 for all subsequent times if $\mu_{0+} = \psi_{LD}(A_0) + \eta$, as shown in the next modification:

\noindent \textbf{\underline{Modification 3:}} information puts continue to fully implement action $1$.  First, we argue that, under the conjecture all agents choose action $1$ at $t$ if $\mu_t > \psi^n(A_t)$ for some decreasing function $\psi^n$, they also prefer to do so every time after $t$ under the informational puts policy. When $\mu_t> \psi^n(A_t),$ not only $A_t$ increases as more people take action 1 but $\mu_t$ also increases as long as the regime has not yet failed. Since $\psi^n$ is a decreasing function, $\mu_s > \psi^n(A_s)$ when $s \geq t$ as long as the regime has not failed yet, so they still prefer to take action $1$ at every time after $t$, as desired.


Given these observations, the proofs in Appendix~\ref{appendix:proofs} are still valid by replacing $L = L^*\lambda$.\footnote{In the proof of Theorem~\ref{thrm:main}, we use the following bound of the payoff difference:
\begin{align*}
\Ex_{\tau,\theta}\Big[ \int_{s=t}^\tau e^{-r(s-t)} |\Delta u(A_s,\theta) - \Delta u(A'_s,\theta) | ds  \Big] \leq \frac{L}{\lambda}\|A-A'\|_\infty, 
\end{align*}
so we can replace $\frac{L}{\lambda}$ by $L^*$.} In particular, if $\mu_t \geq \psi^n(A_t) - M\cdot\mathsf{TOL}(D(\mu_t,A_t))/2$, we have
\begin{align*}
    &\Ex_{\tau,\theta,\bm{A}} \bigg[ \Delta U (\bm A, \theta, \tau | t) \bigg] \\
    &\geq \Ex_{\theta, \tau}\bigg[\Delta U (\bm{\bar{A}}, \theta, \tau | t)\bigg]  - \frac{\mathsf{TOL}(\psi^n(A_t) - \psi_{LD}(A_t))L(1 + p_n) + L(1-p_n)  }{\lambda} \\
    &> \frac{C}{2}(\psi^n(A_t) - \psi_{LD}(A_t)) - \frac{\mathsf{TOL}(\psi^n(A_t) - \psi_{LD}(A_t))L(1 + p_n) + L(1-p_n)  }{\lambda}
\end{align*}
for some $C > 0$ as shown in \eqref{eq: lower bound} and \eqref{ineq: first term} because
\[
\min_{A_0 \in [0,1]} \Ex_{\tau}[\Delta U (\bm{A},\bar{\theta},\tau) ] > 0.
\]
since  $\Delta U(\bm{A},\bar{\theta},\tau) > 0$ for every $A_0 \in [0,1]$. Then, the remainder of the proof of Theorem~\ref{thrm:main} proceeds unchanged.

\subsection{Stopping games} Stopping games in coordination environments have been extensively studied \citep{gale1995dynamic,dasgupta2007coordination,basak2024panics} among many others. We show how our model can nest a variant of such games with frictions in stopping opportunities.

\paragraph{Model} Suppose that $A_0 = 0$. Agents decide whether to irreversibly invest ($a = 1$) or not $(a = 0)$ and investment opportunities arrive at the ticks of a personal Poisson clock. Different from the main text, investment is irreversible; that is, if player $i$ has ever played action $1$, she continues to do so forever. The flow payoff from investing relative to not is, as in the main text, $\Delta(A,\theta)$ which is strictly increasing in $A$. We maintain the dominant state assumption as in the main text. Then, suppose that player $i$ has not yet invested and her clock ticks at time $t$. Her payoff from investing relative to not is:  
\[
\Ex\bigg[ \int^{+\infty}_t e^{-r(s-t)} u(1,A_s,\theta) ds\bigg] - \Ex\bigg[ \int^{t + \tau}_t e^{-r(s-t)} u(0,A_s,\theta) ds + e^{-r(t+\tau)}V(A_{t + \tau}, \theta)\bigg] 
\]
where $V(A_{t + \tau}, \theta)$ is the value function at time $t + \tau$ conditional on not investing. This can be rewritten as 
\begin{align*}
    &\Delta  U (\bm A, \mu_t) \\
    &:= \Ex\bigg[ \int^{t + \tau}_t e^{-r(s-t)} \Delta u (A_s,\theta) ds\bigg] + \underbrace{\Ex\bigg[ \int_{t + \tau}^{+\infty} e^{-r(s-t)} u(1,A_s,\theta) ds - e^{-r(t+\tau)}V(A_{t + \tau}, \theta)\bigg]}_{\leq 0}
\end{align*}
where, unlike the main text, the second term is not necessarily zero and may be strictly negative. Nonetheless, the \emph{same} policy in the main text can fully implement the equilibrium in which all players stop and irreversibly take action $1$ at the first opportunity: 

\begin{claim}
    Theorem \ref{thrm:main} applies to this stopping game. 
\end{claim}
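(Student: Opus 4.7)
The plan is to reduce the stopping game to the framework of Theorem~\ref{thrm:main} by decomposing the payoff difference as $\Delta U(\bm A,\mu_t) = \Delta U^{\mathrm{main}}(\bm A,\mu_t) + R(\bm A,\mu_t,\tau)$, where $\Delta U^{\mathrm{main}} := \Ex[\int_t^{t+\tau} e^{-r(s-t)}\Delta u(A_s,\theta)ds]$ is precisely the object analyzed in Appendix~\ref{appendix:proofs} and $R$ is the non-positive continuation correction displayed in the setup. Two facts about $R$ drive the modification. First, because $|u|$ is bounded (the action space is binary, $A\in[0,1]$, $\Theta$ finite), both $\Ex[\int_{t+\tau}^\infty e^{-r(s-t)}u(1,A_s,\theta)ds]$ and $e^{-r\tau}V(A_{t+\tau},\theta)$ are uniformly bounded, so $|R|\leq K$ for some constant $K>0$ depending only on primitives. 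Second, on any event at which investing is weakly optimal at time $t+\tau$ under the agent's continuation conjecture, $V(A_{t+\tau},\theta)$ equals the value of investing at $t+\tau$, so $R=0$ on that event.

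I would then re-run Step~1 of the proof of Theorem~\ref{thrm:main} with the inductive regions $\Psi^n$, using the conjecture ``every agent at states in $\Psi^n$ invests at her next tick.'' Under informational puts, no information arrives between $t$ and $T^*$; after $T^*$, either the good signal realizes (probability $p_n$), so $(\mu_{T^*},A_{T^*})\in\Psi^n$ and investing is strictly optimal at every continuation, forcing $R=0$ on this branch; or the bad signal realizes (probability $1-p_n$), which contributes at most $-K(1-p_n)$ to $R$. Combined with \eqref{eq: lower bound}, this yields
\[
\Delta U \ \geq\ \Ex_\tau\!\Big[\int_t^\tau e^{-r(s-t)}\Delta u(\bar A_s,\theta)ds\Big] \ -\ \tfrac{\mathsf{TOL}(\psi^n(A_t)-\psi_{LD}(A_t))L(1+p_n)+L(1-p_n)}{\lambda} \ -\ K(1-p_n).
\]
Shrinking $\mathsf{TOL}$ relative to the main text (the ratio $1-p_n = M\cdot\mathsf{TOL}/(\mathsf{DOWN}+M\cdot\mathsf{TOL})$ can be made arbitrarily small by decreasing the leading constant $m$) ensures the positive term $\tfrac{C}{2}(\psi^n(A_t)-\psi_{LD}(A_t))$ from \eqref{ineq: first term} still dominates the additional $-K(1-p_n)$ penalty. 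The induction $\Psi^{n+1}\supsetneq\Psi^n$ therefore goes through and its closure equals $\{\mu>\psi_{LD}(A)\}$, and Step~2 (lifting to $|\Theta|>2$ via directional reductions to binary auxiliary environments) proceeds with the same substitution.

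For Step~3, sequential optimality off the lower dominance region is immediate: once $(\mu_t,A_t)\notin\Psi_{LD}\cup\mathrm{Bd}_{\theta^*}$, the contagion above makes investing strictly optimal along the informational puts continuation, so the realized aggregate path is $\bar{\bm A}$ --- the designer's first-best --- and no deviation can improve on it. Inside $\Psi_{LD}$, irreversibility with $A_0=0$ simplifies the analysis: on the event that beliefs never escape $\Psi_{LD}$, no agent ever invests, so $\underline{\bm A}\equiv\bm 0$, and the optional-stopping bound of Step~3A yields an upper envelope $(1-p^*(\mu_0,0))\phi(\bm 0)+p^*(\mu_0,0)\phi(\bar{\bm A})$ that informational puts attain up to $\eta$ via the time-$0$ maximal-escape jump.

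The main obstacle will be justifying ``$R=0$ on the good-signal event'' consistently with the iterated-deletion structure. The resolution is that inside $\Psi^n$ the inductive hypothesis selects investing as the unique survivor at \emph{every} clock tick, not just at the current one; this pins down the counterfactual continuation value $V(A_{t+\tau},\theta)$ faced by the waiting agent at $t+\tau$ to equal the ``invest at $t+\tau$'' value, collapsing $R$ on the good-signal branch. This is the one substantive departure from the flow-payoff proof, and it is precisely what lets the non-positive stopping correction be confined to the low-probability $1-p_n$ branch where $\mathsf{TOL}$ keeps it harmless.
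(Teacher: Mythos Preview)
Your approach is essentially the paper's: split $\Delta U$ into the main-text flow term plus the non-positive irreversibility correction $R$, argue $R$ vanishes on the good-signal branch and is bounded by $-W$ on the bad-signal branch, and absorb the resulting $(1-p_n)$-weighted penalty by shrinking $\mathsf{TOL}$ (the paper does this by replacing $L$ with $L+W$ in the $\mathsf{TOL}$ formula; you shrink the leading constant $m$). Your handling of the multi-state lift and of sequential optimality also tracks the paper's sketch.

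One imprecision (which the paper's sketch shares): the assertion ``$R=0$ on the good-signal branch'' requires that investing be optimal at the agent's \emph{next} tick $t+\tau$. When $t+\tau<T^*$ the signal has not yet arrived, so $(\mu_t,A_{t+\tau})\notin\Psi^n$, and the inductive hypothesis you invoke in your last paragraph---that investing is uniquely selected at every tick \emph{inside} $\Psi^n$---does not apply there. The patch is mild: on the good branch the agent's optimal stopping time is bounded above by her first tick after $T^*$, and since $A_s\geq\bar A_s-\mathsf{TOL}$ throughout while $\mu_s>\psi_{LD}(\bar A_s)$ at every intermediate tick, a Lemma~\ref{lem: finite}--style telescoping gives $\Ex_\theta\bigl[\int_{t+\tau}^{\sigma}e^{-r(s-t)}\Delta u(\bar A_s,\theta)\,ds\bigr]\geq 0$, hence $R\geq -L\,\mathsf{TOL}/r$ on that branch rather than exactly zero. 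This additional $O(\mathsf{TOL})$ term is absorbed by the same shrinkage of $m$, so the induction still closes; but your stated resolution covers only the post-$T^*$ ticks, not the pre-$T^*$ ones.
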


\paragraph{Sketch of modification to proof} We require a small modification to the proof of Theorem \ref{thrm:main} which we now outline. To apply Step 1 of the proof of Theorem~\ref{thrm:main}, suppose that $\Theta = \{0, 1\}$ and $\theta^* = 1$. First, observe that if $\mu_t \in \Psi_{LD}(A_t)$, action $0$ is strictly dominant because
\[
\Delta U (\bm A, \mu_t) \leq \Ex\bigg[ \int^{t + \tau}_t e^{-r(s-t)} \Delta u (A_s,\theta) ds\bigg] \leq 0.
\]
Second, to initiate contagion (Step 1A), we define the upper dominance region as follows:
\begin{align*}
    \Psi_{UD}(A_t) \coloneqq  \Big\{ \mu \in \Delta(\Theta): \Delta U (\underline{\bm A}, \mu) \geq 0 \Big\},
\end{align*}
which is nonempty because $\theta = 1$ is the dominant state. Now we follow Step 1B to find the lower bound on $\Delta U$. At $T^*$, the designer injects new information. If a realized signal is good, the belief exceeds $\psi^n(A_{T^*})$, which implies that the gap term is zero. If a signal is bad, we may have a negative gap term. Hence, \eqref{eq: lower bound} can be modified as
\begin{align*}
    &\Delta U (\bm A, \mu_t) \\
    &\geq \Ex_\tau\bigg[\int_{s=t}^{t + \tau} e^{-r(s-t)} \Delta u(\bar{A}_s,\theta) ds\bigg]  - \frac{\mathsf{TOL}(\psi^n(A_t) - \psi_{LD}(A_t))L(1 + p_n) + (L + W)(1-p_n)  }{\lambda},
\end{align*}
where $-W$ is the lower bound of the gap term, e.g., $W = \bar\Delta/r$ with $\bar\Delta := \max_{A,\theta}|\Delta u(A,\theta)|$. By modifying $\mathsf{TOL}$ as
\[
\mathsf{TOL}(D) = \bar\delta \cdot \frac{\lambda C D}{4(L + W) (1 + M D^{-1})},
\]
we can prove Lemma~\ref{lem: contagion} as Appendix~\ref{appendix:proofs}. The remainder of the proof remains unchanged.


\clearpage 

\titleformat{\section}
		{\normalsize\bfseries\center\scshape}     
         {Online Appendix \thesection:}
        {0.5em}
        {}
        []
\renewcommand{\thesection}{\Roman{section}}

\begin{center}
    \large{\textbf{ONLINE APPENDIX TO `INFORMATIONAL PUTS'}} \\
    \small{ANDREW KOH \quad SIVAKORN SANGUANMOO \quad KEI UZUI}
\end{center}

\setcounter{page}{1}
\setcounter{section}{0}
Online Appendix \ref{appendix:finite_players} develops Theorem \ref{thrm:main} for finite players in which the designer, as in the main text, can condition information only on aggregate rather than individual play. Online Appendix \ref{appendix:private} analyzes whether the designer can do better under private information. Online Appendix \ref{online appendix: technical proof} collects technical results on the dynamic regime change games analyzed in Appendix \ref{appendix:examples}. 

\section{Finite players} \label{appendix:finite_players}
\subsection{Finite Model} We describe the following modification to our model in the main text: instead of a unit measure of agents, there are now a finite number of $N < +\infty$ agents. As before, the action space is $\{0,1\}$ and $A_t$ is the proportion of agents playing action $1$, and so on and $A_0 = \frac{N-n}{N}$ is the proportion of initial play, where $n$ is the number of agents who initially play action $0$ (a primitive of our environment). Without loss, we will index agents such that the first $n$ players have initial play $0$, and players $n+1$ through $N$ have initial play $1$. 

\noindent \textbf{Switching frictions with finite players.} As in the main text, players have personal Poisson clocks which tick at independent rate $\lambda > 0$. Hence, a crucial difference from the continuum case is that even if all agents play the strategy `play action $1$ at the next tick of my clock', the path of aggregate play is random. 

\noindent \textbf{States and payoffs.} 
For simplicity, we will specialize the main text to the case that the state is binary $\Theta = \{0,1\}$ with dominant state $\theta^* = 1$. As in the proof of Theorem \ref{thrm:main} (see Step 2 in Appendix \ref{appendix:proofs}) it is straightforward to extend the result and proof to finite states. Flow payoffs $u(a,A,\theta)$ as almost identical to those in the continuum case. We assume
\begin{itemize}
    \item[(i)] \textbf{Supermodularity.} $\Delta u(A,\theta)$ is strictly increasing in $A$ and $u(1,\cdot,\theta)$, $u(0,\cdot,\theta)$ are continuously differentiable. 
    \item[(ii)] \textbf{Dominant state.} $\Delta u(0,1) > 0$. 
\end{itemize}

\noindent  \textbf{Dynamic information.} Our definition of dynamic information is identical to the main text: information are C\`adl\`ag martingales w.r.t. the the natural filtration generated by \emph{aggregate play} $(A_t)_t$ and \emph{past beliefs} $(\mu_t)_t$.

\noindent \textbf{Sequence of economies.} We will consider a sequence of economies parameterized by $N$, the total number of agents. For an economy with $N$ agents, let $\Sigma^N(\bm{\mu},A_0)$ denote the set of subgame perfect equilibria of the stochastic game induced by a belief martingale $\bm{\mu}$ under the economy consisting of $N$ agents whenever $A_0$ can be written as $\frac{k}{N}$ for some $k \in \{0,\dots,N\}$.

\noindent \textbf{Designer's problem.} 
As before, the designer's payoff is an increasing functional $\phi(\bm{A})$. We will assume there exists a constant $L_\phi$ such that $|\phi(\bm{A}) - \phi(\bm{A}')| \leq L_\phi \|\bm{A}-\bm{A}' \|_\infty$ for every $\bm{A},\bm{A}' \in [0,1]^\infty$. The designer's problems under optimal and adversarial equilibrium selections with a finite number of agents as follows: 
\begin{align*}
\sup_{\bm{\mu} \in \mathcal{M}} \inf_{\bm{\sigma} \in \Sigma^N(\bm{\mu},A_0)} \Ex^{\sigma} \Big[\phi\big(\bm{A}\big)\Big] \tag{ADV-$N$} \label{eqn:adv-n} \\
 \sup_{\bm{\mu} \in \mathcal{M}} \sup_{\bm{\sigma} \in \Sigma^N(\bm{\mu},A_0)} \Ex^{\sigma} \Big[\phi\big(\bm{A}\big)\Big] \tag{OPT-$N$} \label{eqn:opt-n}
\end{align*}

\noindent \textbf{Optimal information.} We will use the \emph{exact} same policy constructed in the main text which was deployed in Theorem \ref{thrm:main}. Recall we parameterized that family of policies by $(\bm{\mu}^{\eta})_{\eta > 0}$ which we called \emph{informational puts}.  

\subsection{Main result}  We are ready to state the finite player analog of Theorem \ref{thrm:main} when the prior is not in the lower dominance region. Proposition~\ref{prop: finite LD} in subsection \ref{appendix:finite_in_LD} analyzes the case when the prior is within the lower dominance region.

    \begin{theorem} \label{thrm: finite} Suppose the prior is not in the lower dominance region, i.e, $\mu_0 > \psi_{LD}(A_0)$. Then, the informational puts policy $\bm{\mu}^{\eta}$ uniquely implements the equilibrium in which all players play action $1$. Asymptotically as $N \to +\infty$ the policy is sequentially optimal, and there is no multiplicity gap.
        \begin{enumerate}
        \item[(i)] There exists a constant $G$ such that, under any subgame perfect equilibrium $\sigma \in \Sigma^N(\bm{\mu}^{\eta}, A_0)$, agents play action $1$ if $\mu_t > \psi_{LD}(A_t) + (GN)^{-1/9}$ for every history $H_t$, aggregate action $A_t$, and belief $\mu_t$, 
        \item[(ii)] There exist constants $K_1$ and $K_2$ such that, for any $(\mu_0, A_0)$\footnote{We implicitly assume $A_0 \in \mathbb{Q}$ and $N \cdot A_0 $ is an integer.}, we have
        \begin{align*}
            |\eqref{eqn:adv-n} - \eqref{eqn:opt-n}| &\leq K_1N^{-1/9}, \\
            | \eqref{eqn:adv-n} - \eqref{eqn:adv}|  &\leq K_2N^{-1/9},
        \end{align*}
for sufficiently large $N$ (depending on $(\mu_0,A_0)$).

        \item[(iii)] Sequential optimality: 
        \[
      \lim_{N \to \infty}\Bigg| \inf_{\bm{\sigma} \in \Sigma^N(\bm{\mu}^{\eta},A_0)} \Ex^{\sigma} \big[\phi\big(\bm{A}\big) \big| \mathcal{F}_t\big]  - 
\sup_{\bm{\mu}' \in \mathcal{M}} \inf_{\bm{\sigma}^N \in \Sigma(\bm{\mu}',A_0)} \Ex^{\sigma} \big[\phi(\bm{A})\big| \mathcal{F}_t\big] \Bigg| = 0.
    \]
        \end{enumerate}
    \end{theorem}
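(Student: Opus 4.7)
}
The overall strategy is to mirror the three-step proof of Theorem~\ref{thrm:main} but replace deterministic statements about $(A_t)_t$ by high-probability statements under the appropriate concentration inequalities for the Poisson-clock ticks. Because $\mu_0 > \psi_{LD}(A_0)$, we never need the ``escape-from-$\Psi_{LD}$'' step (Step 3A), so the task reduces to establishing the finite-player analog of the iterative contagion of Steps~1B--1D and then propagating those bounds to the value and sequential-optimality claims (ii)--(iii). Throughout, we use the \emph{same} policy $\bm{\mu}^\eta$ as in the continuum case (with the same $m, M$), only the analysis changes.

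The key technical input is a concentration bound for the aggregate switching process. Fix a strategy profile, let $\bar{A}_t$ denote the deterministic continuum path $d\bar A_t = \lambda(1-\bar A_t)dt$, and let $A_t^N$ denote the realized finite-player path. A Freedman/Bernstein inequality for the compensated counting martingale yields $\sup_{t \leq T}|A_t^N - \bar A_t| = O_{\mathbb{P}}(\sqrt{T/N})$, with exponentially small tails. Plugging this into the reasoning of Step 1B, the lower bound \eqref{eq: lower bound} acquires two new error terms: (a) a term of order $L \cdot \varepsilon_N / \lambda$ from Lipschitz continuity of $\Delta u$ evaluated along the perturbed path, and (b) a term from the fact that the ``tolerance-exceeded'' hitting time $T^*$ may now fire on-path with some probability (since fluctuations of $A_t^N$ alone can push $|A_t^N - Z_{t^-}|$ above $\mathsf{TOL}(D)$), which injects information of the same asymmetric form as off-path. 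Crucially, because the upward jump $M\cdot \mathsf{TOL}(D)$ still exceeds $L_\psi \cdot \mathsf{TOL}(D)$ plus the concentration error, the post-injection belief still lies in $\Psi^n$ with probability $p_+ - o(1)$ whenever $(\mu_t,A_t) \notin \Psi^n$ but $\mu_t \geq \psi^n(A_t) - M\cdot\mathsf{TOL}(D)/2$. Thus Lemma~\ref{lem: contagion} continues to hold with an $O(\varepsilon_N)$ slack on the right-hand side.

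The main obstacle is propagating this per-iteration slack across the roughly $\sim 1/c$ iterations needed to sweep $\Psi^n$ down to the slice $\{D \geq c\}$. Recall that with $\mathsf{TOL}(D) = m D^2$, the recursion $c_{n+1} + Mm c_{n+1}^2/2 = c_n$ gives $c_n \sim 2/(Mmn)$; thus $\sim 1/c$ iterations are required to reach distance $c$. To preserve strict positivity of the lower bound at each iteration one needs $\mathsf{TOL}(c_n) = m c_n^2$ to dominate the cumulative concentration error \emph{and} the jump $M\cdot\mathsf{TOL}(c_n)$ to dominate the Lipschitz-$L_\psi$ amplification of that error. Optimally balancing (i) the per-iteration concentration $\sqrt{1/(\lambda N)}$, (ii) the Lipschitz blowup $L_\psi$, and (iii) the number of iterations $\sim 1/c$ yields $c \gtrsim (GN)^{-1/9}$ for a constant $G$ depending on $m, M, L, L_\psi, \lambda$; this is part (i).

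For part (ii), observe that under $\bm{\mu}^\eta$ and any SPE of the $N$-player game, part (i) implies that once $(\mu_t, A_t)$ lies $(GN)^{-1/9}$ above the lower-dominance boundary, every agent plays action $1$. Since $\mu_0 > \psi_{LD}(A_0)$ is strictly interior, for $N$ large enough this region contains the initial point, so the realized path $(A_t^N)_t$ coincides with the ``everyone switches to $1$'' path up to concentration error $O(N^{-1/2})$, plus rare injection events of negligible probability. Lipschitz continuity of $\phi$ then gives $|\eqref{eqn:opt-n} - \eqref{eqn:adv-n}| = O(N^{-1/9})$, and comparing $(A_t^N)_t$ to the continuum path $(\bar A_t)_t$ bounds $|\eqref{eqn:adv-n} - \eqref{eqn:adv}| = O(N^{-1/9})$. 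Part (iii) is an immediate consequence applied to every subtree: at any history $H_t$ with $\mu_t \notin \Psi_{LD}(A_t)$ the same argument shows that continuing under $\bm{\mu}^\eta$ is within $O(N^{-1/9})$ of the best continuation payoff; for histories inside $\Psi_{LD}$ one appeals to the analog of Step~3A (handled in the companion Proposition~\ref{prop: finite LD} for the finite case). Taking $N \to \infty$ collapses the gap to $0$ uniformly over $H_t$, yielding asymptotic sequential optimality.
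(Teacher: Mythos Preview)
Your high-level plan (concentration of the Poisson-clock process plus the contagion argument of Step~1B--1D) is the right one, and your sketches for parts (ii)--(iii) are essentially what the paper does once (i) is established. But the core of part (i) has two genuine gaps.

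First, the exponent $1/9$ does not arise from ``propagating per-iteration slack across $\sim 1/c$ iterations.'' In the paper, each contagion step is checked \emph{pointwise}: the question is for which $D$ the one-shot lower bound on the payoff difference stays strictly positive. The new finite-$N$ term in that lower bound is the probability that aggregate play strays outside the tolerance band even when every switching agent plays $1$. The paper bounds this by a Chebyshev-type argument (Lemmas~\ref{lem: unlucky bound}--\ref{lem: unlucky bound_2}) giving $\Pr(\sup_t |\bar A_t^N - \bar A_t| > \delta) \lesssim \delta^{-4}N^{-1}$; with $\delta = \mathsf{TOL}(D) \sim D^2$ this is $\sim D^{-8}N^{-1}$, and positivity of the lower bound requires $D^{-8}N^{-1} \lesssim D$, i.e.\ $D \gtrsim N^{-1/9}$. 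Your Freedman/Bernstein route gives exponential tails and would in fact yield a \emph{better} exponent (polylogarithmic rather than $N^{-1/9}$), so the balancing you describe cannot produce $1/9$; you are asserting the theorem's exponent rather than deriving it.

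Second, you do not address a real finite-$N$ subtlety: agent $i$'s own action now shifts the aggregate path by $1/N$, so comparing her payoff from $1$ versus $0$ is no longer a flow-payoff comparison along a fixed $(A_s)_s$. The paper handles this with Lemma~\ref{lem: finite}, showing that along the target path $(\bar A_s)_s$ the strategy ``always play $1$'' dominates any alternative; the contagion lower bound is then obtained by comparing $U_i(\sigma_i^1)$ and $U_i(\sigma_i)$ via this lemma together with the concentration bounds on $\|A^N - \bar A\|_\infty$.

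A minor point on (iii): under the hypothesis $\mu_0 > \psi_{LD}(A_0)$, the policy never sends beliefs into $\Psi_{LD}$ (the downward jump is only $\mathsf{DOWN}(D) = D/2$), so no appeal to Proposition~\ref{prop: finite LD} is needed.
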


\subsection{Outline of proof of Theorem \ref{thrm: finite} and preliminaries} The proof of Theorem \ref{thrm: finite} consists of the following steps, as described in Figure~\ref{fig:map_finite}.

\begin{figure}[h!]  
\centering
\captionsetup{width=1.0\linewidth}
    \caption{Roadmap for proof of Theorem \ref{thrm: finite}} \includegraphics[width=1\textwidth]{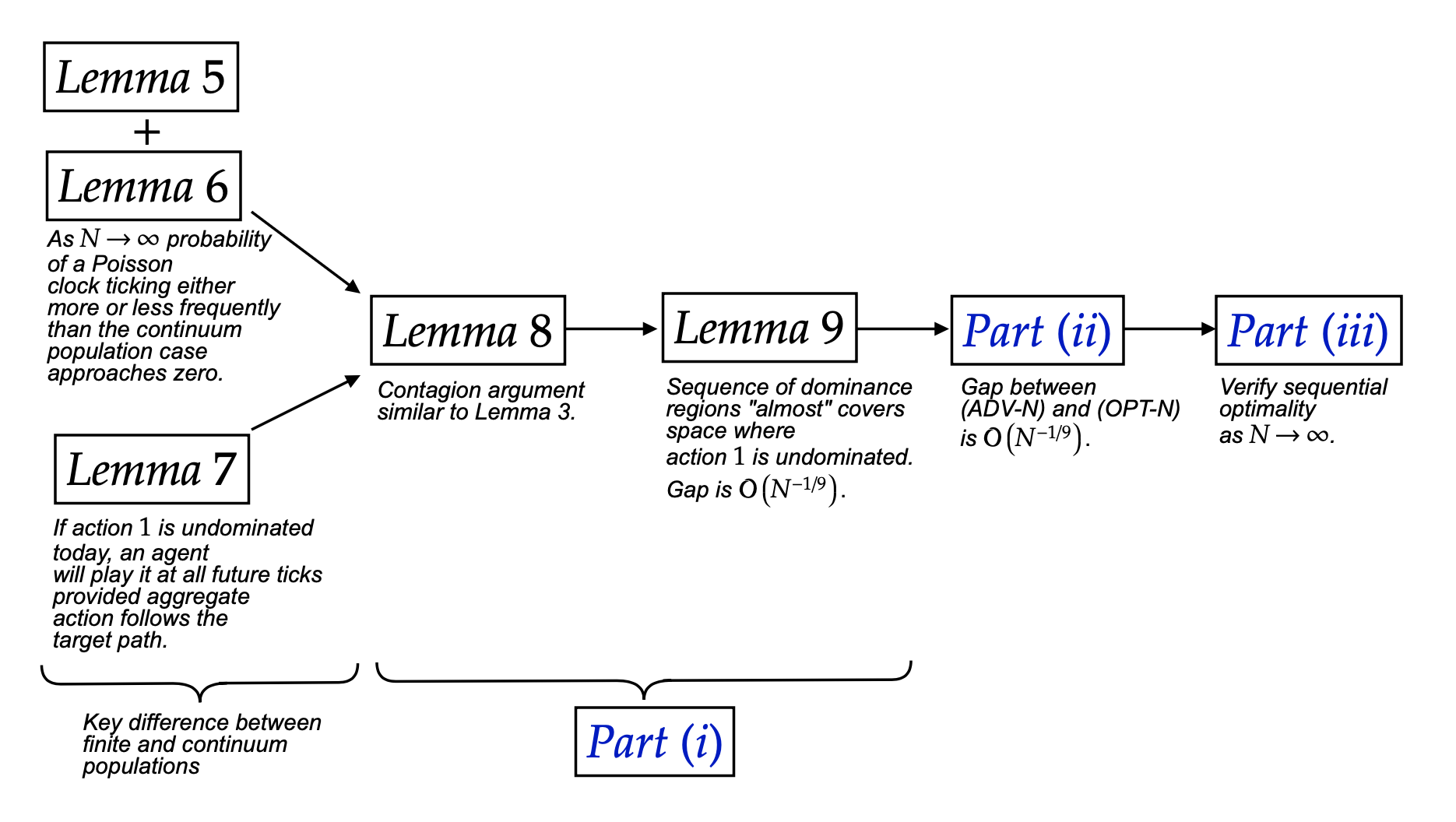}
    \label{fig:map_finite}
\end{figure}

It will be helpful to define the random path of aggregate play. For each $i \in \{1,\dots,n\}$, define $\tau_i \sim \text{Exp}(\lambda)$ as an iid exponential distribution with rate $\lambda$, that is, $\tau_i$ is agent $i$'s random waiting time for the first switching opportunity. We define paths  $(\bar A_t^N)_t$ and $(\bar{A}_t)_t$ as follows:
\begin{align*}
    \bar A_t^N &= A_0 + \frac{1}{N}\sum_{i=1}^n 1\{\tau_i \leq t\} \\
    \bar{A}_t &= 1- (1-A_0) e^{-\lambda t} 
\end{align*}
Note that $\bar A_t^N$ is the \emph{random} proportion of agents playing action $1$ at time $t$ when everyone switches to action $1$ as quickly as possible i.e., at the random ticks of their individual clocks, while $\bar{A}_t$ is non-random: it is simply the proportion of agents playing action $1$ at time $t$ when $1 - e^{-\lambda t}$ of the agents initially playing action $0$ have switched to action $1$ by time $t$. Our definition of $\bar{A}_t$ coincides exactly with the continuum case.

If the number of agents is finite, the proportion of agents playing action $1$ can deviate from the tolerated distance from the target even when no one has switched to action $0$. Lemmas~\ref{lem: unlucky bound} and \ref{lem: unlucky bound_2} provide an upper bound on the probability of such ``unlucky'' events:
\[
\Pr(\forall t, | \bar{A}_t^N - \bar{A}_t| \leq  \delta ) > 1 - (\bar c + \underline c)\delta^{-4}N^{-1}.
\]

\begin{lemma}\label{lem: unlucky bound}
    There exists an  absolute constant $\bar{c}>0$ such that, for every $\delta >0$ and $N \in \mathbb{N}$, $\Pr(\forall t, \bar{A}_t^N +  \delta \geq \bar{A}_t) > 1 - \bar{c}\delta^{-4}N^{-1}$.
\end{lemma}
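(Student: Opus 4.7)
My strategy is to convert the uniform-in-$t$ deviation bound into a one-sided Kolmogorov--Smirnov-type bound on the empirical distribution of iid Uniform$(0,1)$ variables, apply the sharp one-sided Dvoretzky--Kiefer--Wolfowitz--Massart inequality, and finally convert the resulting subgaussian tail into the polynomial form stated in the lemma.

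First, I perform a probability-integral transform. Recall $\bar A_t^N = A_0 + \tfrac{1}{N}\sum_{i=1}^n 1\{\tau_i \le t\}$ and $\bar A_t = A_0 + (1-A_0) F(t)$ where $F(t) := 1 - e^{-\lambda t}$ and $n = N(1-A_0)$. Set $U_i := F(\tau_i)$, which are iid Uniform$(0,1)$ by the inverse-CDF theorem, and let $G_n(u) := \tfrac{1}{n}\sum_{i=1}^n 1\{U_i \le u\}$ denote their empirical CDF. The identity $1\{\tau_i \le t\} = 1\{U_i \le F(t)\}$ gives $\bar A_t - \bar A_t^N = \tfrac{n}{N}\bigl(F(t) - G_n(F(t))\bigr)$, and because $F$ is a continuous bijection from $[0,\infty)$ onto $[0,1)$, ranging $t$ over $[0,\infty)$ is equivalent to ranging $u$ over $[0,1)$:
\[
\sup_{t \ge 0}\bigl(\bar A_t - \bar A_t^N\bigr) = \frac{n}{N}\sup_{u \in [0,1)} \bigl(u - G_n(u)\bigr).
\]

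Second, I invoke the one-sided DKW--Massart inequality, which states that $\Pr\bigl(\sup_u (u - G_n(u)) > \varepsilon\bigr) \le e^{-2n\varepsilon^2}$ for every $\varepsilon > 0$. Choosing $\varepsilon = N\delta/n$ and using $n \le N$,
\[
\Pr\!\left( \sup_t (\bar A_t - \bar A_t^N) > \delta \right) \le e^{-2N^2\delta^2/n} \le e^{-2 N\delta^2}.
\]

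Finally, I convert this subgaussian bound to the polynomial form claimed. The assertion is vacuous when $\delta \ge 1$ since $\bar A_t \le 1$, so assume $\delta \in (0,1)$. Writing $x := N\delta^2$, the elementary inequality $x e^{-2x} \le 1/(2e)$ (the maximum, attained at $x = 1/2$) gives $N \delta^4 e^{-2N\delta^2} = \delta^2 \cdot x e^{-2x} \le 1$, so $e^{-2N\delta^2} \le (N\delta^4)^{-1}$ and the lemma holds with $\bar c = 1$. The only real obstacle is quoting DKW in its sharp one-sided form (Massart, 1990); once the probability-integral transform to a uniform empirical process is in hand, the rest is direct bookkeeping.
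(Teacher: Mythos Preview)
Your proof is correct and takes a genuinely different---and considerably cleaner---route than the paper. The paper partitions time into $O(N^{\alpha})$ blocks determined by order statistics of the $\tau_i$, bounds the inter-arrival spacing in each block via Chebyshev, and takes a union bound; this is entirely self-contained but yields $\bar c = 256$ after careful bookkeeping. You instead recognize, via the probability-integral transform $U_i = F(\tau_i)$, that $\sup_t(\bar A_t - \bar A_t^N)$ is exactly $(n/N)$ times the one-sided Kolmogorov--Smirnov statistic for $n$ iid uniforms, so DKW--Massart gives a subgaussian tail $e^{-2N\delta^2}$ in one line, which you then weaken to the polynomial $\delta^{-4}N^{-1}$ form; your argument delivers $\bar c = 1$. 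One small remark: the one-sided DKW inequality with leading constant $1$ is slightly delicate for very small $\varepsilon$, but even the uncontroversial two-sided version with constant $2$ still gives $2N\delta^4 e^{-2N\delta^2} \le 2\cdot(2e)^{-1} < 1$, so $\bar c = 1$ survives. The tradeoff is transparency versus black-boxing: the paper's Chebyshev-plus-union-bound method needs no external concentration results, whereas your approach is shorter and sharper but invokes Massart's theorem.
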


\begin{proof}

Fix $\alpha$ such that $\delta = 2N^{-\alpha}$.
    We rearrange $(\tau_i)_{1,\dots,n}$ as $\tau_{(1)} < \tau_{(2)} < \cdots < \tau_{(n)}.$ For each $k \in \{0,\dots, \lceil nN^{\alpha-1} \rceil - 1\}$, define $T_k \coloneqq [ \tau_{(kN^{1-\alpha})}, \tau_{((k+1)N^{1-\alpha})})$, where $\tau_{(i)} = \tau_{(\lfloor i \rfloor )}$, $\tau_{(0)} = 0$, and $\tau_{(n+1)} = \infty$. If $t \in T_k$, We must have $\bar{A}_t^N \in [A_0 +     \frac{\lfloor kN^{1-\alpha} \rfloor}{N}, A_0 + (k+1)N^{-\alpha}].$ Therefore,
\begin{align*}
    &\Pr(\forall t \leq T, \bar{A}_t^N \geq \bar{A}_t - \delta) \\
    &= \Pr\bigg( \bigcap_{k=0}^{\lceil nN^{\alpha-1} \rceil - 1} \{\omega: \forall t \in T_k, \bar{A}_t^N \geq \bar{A}_t - \delta\} \bigg)  \\
    &\geq \Pr\bigg( \bigcap_{k=0}^{\lceil nN^{\alpha-1} \rceil - 1 }\bigg\{\omega: \forall t \in T_k, A_0 +     \frac{\lfloor kN^{1-\alpha} \rfloor}{N}  \geq \bar{A}_t - \delta \bigg\} \bigg)\tag{$\bar{A}_t^N \geq A_0 + \frac{\lfloor kN^{1-\alpha} \rfloor}{N}$}  \\
    &\geq \Pr\bigg( \bigcap_{k=0}^{\lceil nN^{\alpha-1} \rceil - 1} \bigg\{\omega:  A_0 +     \frac{kN^{1-\alpha} -1}{N}  \geq \bar{A}_{\tau_{((k+1)N^{1-\alpha})}} - \delta \bigg\} \bigg) \\
    &= \Pr\bigg( \bigcap_{k=0}^{\lceil nN^{1-\alpha} \rceil - 2} \bigg\{\omega:  A_0 +     \frac{kN^{1-\alpha} -1}{N}  \geq \bar{A}_{\tau_{((k+1)N^{1-\alpha})}} - \delta \bigg\} \bigg),
\end{align*}
    where the last equality follows from that if $k = \lceil nN^{\alpha - 1} \rceil - 1$ then 
    \begin{align*}
        A_0 + \frac{kN^{1-\alpha } -1}{N} &> \frac{N-n}{N} + \frac{n - N^{1-\alpha}-1}{N} \tag{$k > nN^{1-\alpha} - 1$}\\
        &= 1 - \frac{N^{1-\alpha}+1}{N} \\
        &> 1- \delta \tag{$N^{1-\alpha} > 1$}\\
        &> \bar{A}_{\tau_{((k+1)N^{1-\alpha})}} - \delta .
    \end{align*}   
    Note that $N^{1-\alpha} > 1$ holds because we later focus on $\delta > 18/N$, which is equivalent to $N^{1-\alpha} > 9$. We define the event $\Omega_{relax}$ as follows:
    \begin{align*}
        \Omega_{relax} = \bigcap_{k=0}^{\lceil nN^{\alpha - 1}\rceil -2}  \underbrace{\bigg\{ \tau_{((k+1)N^{1-\alpha})} - \tau_{(kN^{1-\alpha})} \leq \lambda^{-1}(1+\delta/4)\log \bigg(\frac{ \lfloor n-kN^{1-\alpha} \rfloor}{\lfloor n-(k+1)N^{1-\alpha} \rfloor} \bigg) \bigg\}}_{\Omega_k}.
    \end{align*}
    Under the event $\Omega_{relax}$, for every $k \leq \lceil nN^{\alpha-1}\rceil -2$, we have
    \begin{align*}
        \bar{A}_{\tau_{((k+1)N^{1-\alpha})}}  &= 1-\frac{n}{N}\exp(-\lambda \tau_{((k+1)N^{1-\alpha})}) \\
        &= 1-\frac{n}{N}  \exp\bigg( -\lambda \sum_{i=0}^k (\tau_{((i+1) N^{1-\alpha})}-\tau_{(i N^{1-\alpha})} ) \bigg) \\
        &\leq 1-\frac{n}{N} \exp \bigg(-(1+\delta/4)(\log n - \log \big( \lfloor n- (k+1)N^{1-\alpha}\rfloor \big) \bigg) \tag{From $\Omega_l$ for $ l=0, \dots,k$}\\
        &= 1- \frac{n}{N} \bigg(1-\frac{ \lceil(k+1)N^{1-\alpha} \rceil}{n} \bigg)^{1+\delta/4} \\
        &\leq 1 -\frac{n}{N}\bigg(1 - \frac{(1+\delta/4)\big((1+k)N^{1-\alpha}+1\big)}{n}  \bigg) \\
        &= A_0 + (1+\delta/4)((1+k) N^{-\alpha} + N^{-1}) \tag{$A_0 = \frac{N-n}{N}$}
    \end{align*}
    Note that $k < N^{\alpha}$. Thus, if $18/N < \delta < 1 ,$
    \begin{align*}
        \bar{A}_{\tau_{((k+1)N^{1-\alpha})}} - \delta &\leq A_0 + (1+\delta/4)((1+k) N^{-\alpha}+N^{-1}) - \delta \tag{From the above inequality}\\
        &= A_0+ kN^{-\alpha} + (1+\delta/4 + \delta k/4) N^{-\alpha} +(1+\delta/4)N^{-1} - \delta \\
        &\leq A_0 + kN^{-\alpha} + (1+\delta/4)(N^{-\alpha}+N^{-1}) - 3\delta/4 \tag{$kN^{-\alpha} < 1$}\\
        &= A_0 + kN^{-\alpha} + (1+\delta/4)(\delta/2+N^{-1}) - 3\delta/4 \\
        &\leq A_0 +kN^{-\alpha} - N^{-1}, \tag{$\delta \in (18/N,1)$}
    \end{align*}
    Note that, the lemma statement is trivial when $\delta > 1$. Also, if $\delta < 18/N$, we can set $\bar{c}>18^4$ so that $\bar{c} \delta^{-4}N^{-1} > \bar{c}N^3/18^4 \geq 1$, which makes the lemma statement trivial.
    
    This implies
    \begin{align*}
        \Omega_{relax} \subset \bigcap_{k=0}^{\lceil nN^{1-\alpha} \rceil - 2} \bigg\{\omega:  A_0 +     \frac{kN^{1-\alpha} -1}{N}  \geq \bar{A}_{\tau_{((k+1)N^{1-\alpha})}} - \delta \bigg\}.
    \end{align*}
    Now we compute $\Pr(\Omega_{relax})$. Note that $\tau_{((k+1)N^{1-\alpha})} - \tau_{(kN^{1-\alpha})}$ has 
    \begin{align*}
        \text{mean} &= \sum_{i=\lfloor kN^{1-\alpha} \rfloor}^{\lfloor (k+1)N^{1-\alpha} \rfloor - 1}  \frac{1}{\lambda(n-i)}  \leq \frac{1}{\lambda}\log \bigg(\frac{\lfloor n - kN^{1-\alpha} \rfloor}{\lfloor n - (k+1)N^{1-\alpha} \rfloor} \bigg)  \\
        \text{variance} &= \sum_{i=\lfloor kN^{1-\alpha} \rfloor}^{\lfloor (k+1)N^{1-\alpha} \rfloor} \frac{1}{\lambda^2(n-i)^2} \leq  \frac{1}{\lambda^2}\bigg(\frac{1}{\lfloor n-kN^{1-\alpha} \rfloor} - \frac{1}{\lfloor n-(k+1)N^{1-\alpha} \rfloor} \bigg).
    \end{align*}
Let $a_k = \lfloor n-kN^{1-\alpha} \rfloor$. Thus, by Chebyshev inequality, the probability of $\Omega_k^c$ is bounded above by
\begin{align*}
    \frac{1/a_k - 1/a_{k+1}}{(\delta/4)^2  (\log a_{k+1} - \log a_k)^2 } &= \bigg(\frac{1/a_k - 1/a_{k+1}}{\log a_{k+1} - \log a_k}\bigg)^2 \cdot \frac{16/\delta^2 }{1/a_k-1/a_{k+1}} \\
    &\leq \frac{16}{\delta^2 a_k^2} \cdot \frac{1}{1/a_k-1/a_{k+1}} \tag{$x-\log x \geq 1$ for all $x>0$}\\
    &= \frac{16}{\delta^2} \bigg(\frac{1}{a_k} + \frac{1}{a_{k+1}-a_k}\bigg) \\
    &< \frac{16}{\delta^2} \cdot 4N^{\alpha-1} \tag{$a_k,a_{k+1}-a_k > \lfloor N^{1-\alpha} \rfloor > \frac{1}{2}N^{1-\alpha}$}
\end{align*}
for every $k \leq \lceil nN^{\alpha-1} \rceil -2$. Thus, $\Pr(\Omega_{relax}^c)$ is bounded above by
    \begin{align*}
        \lceil nN^{\alpha-1}\rceil \cdot \frac{64N^{\alpha-1}}{\delta^2} \leq  64N^{2\alpha-1}\delta^{-2} = \underbrace{256}_{=: \bar c} \delta^{-4} N^{-1}
    \end{align*}
since $\delta = 2N^{-\alpha}$, as desired. 
\end{proof}

\begin{lemma}\label{lem: unlucky bound_2}
    There exists an  absolute constant $\underline{c}>0$ such that, for every $\delta >0$ and $N \in \mathbb{N}$,  $\Pr(\forall t, \bar{A}_t^N -  \delta \leq \bar{A}_t) > 1 - \underline{c}\delta^{-4}N^{-1}$.
\end{lemma}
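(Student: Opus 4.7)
The plan is to mirror the argument for Lemma~\ref{lem: unlucky bound}, replacing the one-sided upper bound on inter-arrival times with a one-sided lower bound. Heuristically, Lemma~\ref{lem: unlucky bound} controlled the event that switches take too \emph{long}, causing $\bar A^N_t$ to lag behind $\bar A_t$; here we control the event that switches happen too \emph{fast}, causing $\bar A^N_t$ to overshoot $\bar A_t$. Since the $\tau_{(k)}$ have the same distribution in both lemmas, the Chebyshev variance bound will yield the same scaling.

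Concretely, I would proceed as follows. Fix $\alpha$ with $\delta = 2N^{-\alpha}$ and, as before, define the blocks $T_k \coloneqq [\tau_{(kN^{1-\alpha})}, \tau_{((k+1)N^{1-\alpha})})$ using the order statistics of $(\tau_i)_{i=1}^n$. For $t \in T_k$ we have the \emph{upper} bound $\bar{A}_t^N \leq A_0 + (k+1)N^{-\alpha}$, so it suffices to verify $A_0 + (k+1)N^{-\alpha} \leq \bar A_{\tau_{(kN^{1-\alpha})}} + \delta$ for every $k$ (for the boundary block containing large $t$, one handles it trivially since $\bar A_t^N \leq 1 \leq \bar A_t + \delta$ once $t$ is large enough relative to $\delta$, analogously to the $k = \lceil nN^{\alpha-1}\rceil - 1$ case in the previous lemma). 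Then define the ``good'' event
\begin{align*}
    \Omega^{low}_{relax} = \bigcap_{k=0}^{\lceil nN^{\alpha-1}\rceil - 2} \left\{ \tau_{((k+1)N^{1-\alpha})} - \tau_{(kN^{1-\alpha})} \geq \lambda^{-1}(1-\delta/4)\log\left(\frac{\lfloor n - kN^{1-\alpha}\rfloor}{\lfloor n-(k+1)N^{1-\alpha}\rfloor}\right)\right\},
\end{align*}
the one-sided analog of $\Omega_{relax}$ from Lemma~\ref{lem: unlucky bound}.

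On $\Omega^{low}_{relax}$, telescoping gives $\bar A_{\tau_{(kN^{1-\alpha})}} \geq 1 - \frac{n}{N}\big(1 - \frac{kN^{1-\alpha}}{n}\big)^{1-\delta/4}$. Expanding $(1-x)^{1-\delta/4} \geq 1 - (1-\delta/4)x - O(x^2)$ and using $kN^{-\alpha} \leq 1$ yields $\bar A_{\tau_{(kN^{1-\alpha})}} \geq A_0 + (1-\delta/4)kN^{-\alpha} - O(N^{-1})$, so for $t \in T_k$,
\begin{align*}
\bar A^N_t - \bar A_t \leq (k+1)N^{-\alpha} - (1-\delta/4)kN^{-\alpha} + O(N^{-1}) \leq N^{-\alpha} + (\delta/4) + O(N^{-1}) < \delta
\end{align*}
once $\delta \geq cN^{-1}$ for a suitable absolute constant (outside this regime the lemma is trivial by choosing $\underline c$ large). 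Hence $\Omega^{low}_{relax} \subseteq \{\forall t,\ \bar A^N_t - \delta \leq \bar A_t\}$.

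Finally, I would bound $\Pr((\Omega^{low}_{relax})^c)$ using Chebyshev's inequality exactly as in the previous lemma. The mean and variance of $\tau_{((k+1)N^{1-\alpha})} - \tau_{(kN^{1-\alpha})}$ are the same quantities computed there, so each block-event fails with probability at most $16\delta^{-2}\cdot 4N^{\alpha - 1}$, and a union bound over $\lceil nN^{\alpha-1}\rceil$ blocks gives $\Pr((\Omega^{low}_{relax})^c) \leq \underline c\, \delta^{-4}N^{-1}$ for an absolute constant $\underline c > 0$. I don't anticipate serious obstacles; the only mildly delicate point is verifying the linearization $(1-x)^{1-\delta/4} \geq 1-(1-\delta/4)x + O(x^2)$ yields slack $< \delta/2$ uniformly in $k$, which is immediate since $kN^{-\alpha} \leq 1$ and $\delta \leq 1$.
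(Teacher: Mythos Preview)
Your approach mirrors the paper's exactly: same block decomposition via order statistics, the symmetric one-sided lower-bound event on inter-arrival times, and the same Chebyshev/union-bound endgame (the paper uses $N^{-\alpha}=2\delta/3$ and factor $1-\delta/3$ rather than your $\delta/2$ and $1-\delta/4$, arriving at $\underline c=81$).

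One step is misstated, though your conclusion from it is correct. To pass from $\bar A_{\tau_{(kN^{1-\alpha})}} \geq 1 - \tfrac{n}{N}(1-x)^{1-\delta/4}$ (with $x=kN^{1-\alpha}/n$) to the target $A_0+(1-\delta/4)kN^{-\alpha}$, you need an \emph{upper} bound on $(1-x)^{1-\delta/4}$, not the lower bound you wrote. The correct inequality is simply $(1-x)^{p}\leq 1-px$ for $p\in(0,1)$ and $x\in[0,1]$ (concavity/Bernoulli), which is what the paper invokes; it delivers the conclusion cleanly with no $O(x^2)$ or $O(N^{-1})$ slack needed. With that fix your arithmetic $N^{-\alpha}+(\delta/4)kN^{-\alpha}\leq \delta/2+\delta/4<\delta$ goes through.
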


\begin{proof} Fix $\alpha$ such that $N^{-\alpha} = 2\delta/3$. We define $(\tau_i)_{1,\dots,n}$ and $(T_k)_k$ in the same way as in the proof Lemma \ref{lem: unlucky bound}. Therefore,
\begin{align*}
    &\Pr(\forall t \leq T, \bar{A}_t^N \leq \bar{A}_t + \delta) \\
    &= \Pr\bigg( \bigcap_{k=0}^{\lceil nN^{\alpha-1} \rceil - 1} \{\omega: \forall t \in T_k, \bar{A}_t^N \leq \bar{A}_t + \delta\} \bigg) \\
    &\geq \Pr\bigg( \bigcap_{k=0}^{\lceil nN^{\alpha-1} \rceil - 1 }\bigg\{\omega: \forall t \in T_k, A_0 +     (k+1)N^{-\alpha}  \leq \bar{A}_t + \delta \bigg\} \bigg) \tag{$\bar{A}_t^N \leq A_0 + (k+1)N^{-\alpha}$} \\
    &\geq \Pr\bigg( \bigcap_{k=0}^{\lceil nN^{\alpha-1} \rceil - 1} \bigg\{\omega:  A_0 +    (k+1)N^{-\alpha}  \leq \bar{A}_{\tau_{(kN^{1-\alpha})}} + \delta \bigg\} \bigg) \tag{$\bar A_t \geq \bar{A}_{\tau_{(kN^{1-\alpha})}}$ for $t \in T_k$}
\end{align*}
    We define the event $\Omega_{relax}$ as follows:
    \begin{align*}
        \Omega_{relax} = \bigcap_{k=0}^{\lceil nN^{\alpha - 1}\rceil -2}  \underbrace{\bigg\{ \tau_{((k+1)N^{1-\alpha})} - \tau_{(kN^{1-\alpha})} \geq \lambda^{-1}(1-\delta/3)\log \bigg(\frac{\lfloor n-kN^{1-\alpha} \rfloor}{\lfloor n-(k+1)N^{1-\alpha} \rfloor} \bigg) \bigg\}}_{\Omega_k}.
    \end{align*}
    Under the event $\Omega_{relax}$, for every $k \leq \lceil nN^{\alpha-1}\rceil -1$, we have
    \begin{align*}
        \bar{A}_{\tau_{(k N^{1-\alpha})}}  &= 1-\frac{n}{N}\exp(-\lambda \tau_{(k N^{1-\alpha})}) \\
        &= 1-\frac{n}{N}  \exp\bigg( -\lambda \sum_{i=0}^{k-1} (\tau_{((i+1) N^{1-\alpha})}-\tau_{(i N^{1-\alpha})} ) \bigg) \\
        &\geq 1-\frac{n}{N} \exp \bigg((1-\delta/3)(\log n - \log \big(\lfloor n- kN^{1-\alpha}\rfloor \big) \bigg) \tag{From $\Omega_l$ for $ l=0, \dots,k$}\\
        &\geq 1- \frac{n}{N} \bigg(1-\frac{ kN^{1-\alpha}}{n} \bigg)^{1-\delta/3} \\
        &\geq 1 -\frac{n}{N}\bigg(1 - \frac{(1-\delta/3)kN^{1-\alpha}}{n}  \bigg) \\
        &= A_0 + (1-\delta/3)k N^{-\alpha} 
    \end{align*}
    Note that $k < N^{\alpha}$. Thus,
    \begin{align*}
        \bar{A}_{\tau_{(kN^{1-\alpha})}} + \delta &\geq A_0 + (1-\delta/3)k N^{-\alpha} + \delta \tag{From the above inequality}\\
        &= A_0+ (k+1)N^{-\alpha} -(1+\delta k /3) \cdot N^{-\alpha} + \delta \\
        &> A_0 + (k+1)N^{-\alpha} - N^{-\alpha} + 2\delta/3 \tag{$kN^{-\alpha} < 1$}\\
        &= A_0+ (k+1)N^{-\alpha}, \tag{$N^{-\alpha} = 2\delta/3$}
    \end{align*}
    This implies
    \begin{align*}
        \Omega_{relax} \subset \bigcap_{k=0}^{\lceil nN^{1-\alpha} \rceil - 2} \bigg\{\omega:  A_0 +     (k+1)N^{-\alpha}  \leq \bar{A}_{\tau_{(kN^{1-\alpha})}} - \delta \bigg\}.
    \end{align*}
    Now we compute $\Pr(\Omega_{relax})$. As proved in Lemma \ref{lem: unlucky bound}, by Chebyshev inequality, the probability of $\Omega_k^c$ is bounded above by
\begin{align*}
    \frac{1/a_k - 1/(a_{k+1})}{(\delta/3)^2  (\log a_{k+1} - \log a_k)^2 } &\leq \bigg(\frac{1/a_k - 1/a_{k+1}}{\log a_{k+1} - \log a_k}\bigg)^2 \cdot \frac{9/\delta^2 }{1/a_k-1/a_{k+1}} \\
    &\leq \frac{9}{\delta^2 a_k^2} \cdot \frac{1}{1/a_k-1/a_{k+1}}\\
    &= \frac{9}{\delta^2} \bigg(\frac{1}{a_k} + \frac{1}{a_{k+1}-a_k}\bigg) \\
    &< \frac{9}{\delta^2} \cdot 4N^{\alpha-1} \tag{$a_k,a_{k+1}-a_k > \lfloor N^{1-\alpha} \rfloor > \frac{1}{2}N^{1-\alpha}$}
\end{align*}
for every $k \leq \lceil nN^{\alpha-1} \rceil -2$. Thus, $\Pr(\Omega_{relax}^c)$ is bounded above by
    \begin{align*}
        \lceil nN^{\alpha-1}\rceil \cdot  \frac{36N^{\alpha-1}}{\delta^2} \leq 36N^{2\alpha-1}\delta^{-2} = \underbrace{81}_{=: \underline c} \delta^{-4} N^{-1},
    \end{align*}
since $\delta = \frac{3}{2}N^{-\alpha}$, as desired. 
\end{proof}

A subtlety with finite agents is that agent $i$'s action today affects the aggregate path of play and hence her future decision problem. Thus, she must take this into account when choosing her action. The following lemma shows that if $\mu > \psi_{LD}(\bar A_0)$, it is optimal for each agent to take action $1$ when the future path of aggregate actions follows the target $(\bar A_s)_s$.
\begin{lemma}\label{lem: finite}
    For every agent $i$, suppose $(\tau_{in})_n$ be a increasing sequence of Poisson clocks of agent $i$. Suppose $a_{in} \in \{0,1\}$ be a (random) action agent $i$ takes at $\tau_{in}$. If $\mu > \psi_{LD}(\bar{A}_0)$, then
    \begin{align*}
        \Ex_\mu \Big[ \sum_{n=0}^\infty \int_{\tau_{in}}^{\tau_{i,n+1}} e^{-rs} u(a_{in},\bar{A}_s,\theta) ds \Big] \leq  \Ex_\mu \Big[ \int_0^\infty e^{-rs} u(1,\bar{A}_s,\theta) ds \Big].
    \end{align*} 
\end{lemma}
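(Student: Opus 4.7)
The plan is a tick-by-tick dominance argument that replacing each action $a_{in}$ by the constant $1$ can only raise agent $i$'s expected discounted payoff along the fixed aggregate path $(\bar A_s)_s$. Writing the difference between the RHS and LHS of the inequality as $\sum_{n=0}^\infty G_n$ with
\[
G_n := \Ex_\mu\Big[(1-a_{in})\int_{\tau_{in}}^{\tau_{i,n+1}} e^{-rs}\Delta u(\bar A_s, \theta)\,ds\Big],
\]
the claim reduces to showing $G_n \geq 0$ for every $n$. Since $1-a_{in} \in \{0,1\}$ is $\mathcal{F}_{\tau_{in}}$-measurable and nonnegative, the tower property lets me pull it out of the conditional expectation, so it is enough to show that, conditional on $\tau_{in}=t$, the inner integral has nonnegative expectation.

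Next, I would invoke memorylessness of the Poisson clock: given $\tau_{in}=t$, the interarrival time $\tau' := \tau_{i,n+1}-t$ is $\mathrm{Exp}(\lambda)$ and independent of $\theta \sim \mu$. A change of variable $v = s - t$ then rewrites the conditional integral as
\[
e^{-rt}\,\Ex_{\tau',\theta\sim\mu}\Big[\int_0^{\tau'} e^{-rv}\Delta u(\bar A_{v+t}, \theta)\,dv\Big].
\]
The key step is a monotonicity reduction to time $0$: since $\bar A_s = 1-(1-A_0)e^{-\lambda s}$ is increasing in $s$, we have $\bar A_{v+t}\geq \bar A_v$ for all $v,t\geq 0$, and supermodularity of $\Delta u(\cdot,\theta)$ gives the pointwise bound $\Delta u(\bar A_{v+t},\theta)\geq \Delta u(\bar A_v,\theta)$, so
\[
\Ex_{\tau',\theta\sim\mu}\Big[\int_0^{\tau'} e^{-rv}\Delta u(\bar A_{v+t},\theta)\,dv\Big] \;\geq\; \Ex_{\tau',\theta\sim\mu}\Big[\int_0^{\tau'} e^{-rv}\Delta u(\bar A_v,\theta)\,dv\Big].
\]

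The right-hand side of the last display is exactly the expression defining membership in $\Psi_{LD}(\bar A_0)$ at the starting time. The hypothesis $\mu > \psi_{LD}(\bar A_0)$ means $\mu \notin \Psi_{LD}(\bar A_0)$, so this quantity is strictly positive, whence $G_n \geq 0$ and summing over $n$ gives the lemma. I do not anticipate any real obstacle: the result is a clean per-decision-point dominance argument combined with a monotonicity shift back to $t=0$, and the only subtlety---the randomness of $a_{in}$---is handled immediately by the tower property once one observes that $a_{in}\in\{0,1\}$ is $\mathcal{F}_{\tau_{in}}$-measurable.
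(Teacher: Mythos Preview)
Your proof is correct and follows essentially the same route as the paper's: both condition on $\tau_{in}$, use memorylessness of the Poisson clock to write the interarrival time as a fresh $\mathrm{Exp}(\lambda)$ variable, exploit monotonicity of $\bar A_s$ together with supermodularity of $\Delta u$ to reduce to the time-$0$ integral, and then invoke $\mu > \psi_{LD}(\bar A_0)$. The only difference is cosmetic: you spell out the tower-property step for the random indicator $1-a_{in}$ explicitly, whereas the paper establishes the conditional inequality $\Ex_\mu[\int_{\tau_{in}}^{\tau_{i,n+1}} e^{-rs}\Delta u(\bar A_s,\theta)\,ds \mid \tau_{in}] \geq 0$ and then jumps directly to the conclusion for $a_{in}$.
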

\begin{proof}
    For every $n \in \mathbb{N}$, consider that
    \begin{align*}
        \Ex_\mu \Big[ \int^{\tau_{i,n+1}}_{\tau_{in}} e^{-rs} \Delta u(\bar{A}_s, \theta)ds\Big] &= \Ex_\mu\bigg[ e^{-r \tau_{in}} \Ex_\mu \Big[ \int_{0}^{\tau_{i,n+1}-\tau_{i,n}} e^{-rs} \Delta u(\bar{A}_{s + \tau_{in}}, \theta) ds \big\lvert \tau_{in}\Big]  \bigg] \\
        &\geq \Ex_\mu\bigg[ e^{-r \tau_{in}} \Ex_\mu \Big[ \int_{0}^{\tau_{i,n+1}-\tau_{i,n}} e^{-rs} \Delta u( \bar{A}_{s}, \theta) ds \big\lvert \tau_{in}\Big]  \bigg] \\
        &\geq 0, \tag{$\mu> \psi_{LD}(\bar{A}_0)$}
    \end{align*}
    This implies
    \begin{align*}
        \Ex_\mu \Big[ \int^{\tau_{i,n+1}}_{\tau_{in}} e^{-rs} u(a_{in},\bar{A}_s, \theta)ds\Big] \leq \Ex_\mu \Big[ \int^{\tau_{i,n+1}}_{\tau_{in}} e^{-rs} u(1,\bar{A}_s, \theta)ds\Big]
    \end{align*}
    for every $n \in \mathbb{N}$, as desired.
\end{proof}

\begin{proof}[Proof of Part (i) of Theorem \ref{thrm: finite}]
We follow similar steps as we did in Appendix~\ref{appendix:proofs}. We restate Lemma~\ref{lem: contagion} as follows:
    \begin{lemma}\label{lem: contagion, finite}
        There exists $G > 0$ such that if $\mu_t > \psi_{LD}(A_t) + (GN)^{-1/9}$, $\Psi^n \subset \Psi^{n+1}$ holds for all $n \in \mathbb{N}$.
    \end{lemma}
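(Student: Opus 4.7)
My plan is to adapt the contagion argument of Lemma \ref{lem: contagion} to finitely many players, where the new ingredient is control over Poisson-clock fluctuations in aggregate play via the "lucky event" $\Omega_{\delta_N} := \{\forall s \geq t,\ |\bar{A}^N_s - \bar{A}_s| \leq \delta_N\}$. Lemmas \ref{lem: unlucky bound} and \ref{lem: unlucky bound_2} give $\Pr(\Omega_{\delta_N}^c) \leq (\bar c + \underline c)\delta_N^{-4}N^{-1}$. I fix $n$, assume the inductive conjecture that all agents play action $1$ at any history with $S(H) \in \Psi^n$, and consider an agent's tick at $(\mu_t, A_t)$ with $\mu_t \in (\psi^{n+1}(A_t), \psi^n(A_t)]$.

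On $\Omega_{\delta_N}$, aggregate play when all agents attempt action $1$ obeys $A_s \geq \bar{A}_s - \delta_N$, so by Lipschitz continuity of $\Delta u(\cdot,\theta)$ the pre-$T^*$ and good-signal post-$T^*$ bounds of the continuum proof (equations (\ref{eq: lb before trigger})--(\ref{eq: lb after trigger case 2})) each pick up an extra additive loss of order $L\delta_N/\lambda$. On $\Omega_{\delta_N}^c$ I use the worst-case bound $W := 2\bar{\Delta}/r$ with $\bar{\Delta} := \max_{a,A,\theta}|u(a,A,\theta)|$, weighted by the unlucky probability. Assembling the pieces as in Step 1B yields
\begin{align*}
\Ex\bigl[U_1 - U_0\bigr] \geq \tfrac{C(1-\bar\delta)}{2}D \;-\; \tfrac{L}{\lambda}\delta_N \;-\; W(\bar c + \underline c)\,\delta_N^{-4}N^{-1}.
\end{align*}

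The key new feasibility restriction is $\delta_N \leq \mathsf{TOL}(D(\mu_t,A_t))$: this is precisely what is needed so that (i) Poisson noise on $\Omega_{\delta_N}$ does not on its own breach the tolerance and spuriously trigger information, and (ii) the good-signal verification $(\mu_t + M\mathsf{TOL}(D),A_{T^*}) \in \Psi^n$ still goes through after the Lipschitz perturbation $\psi^n(A_{T^*}) \leq \psi^n(\bar{A}_{T^*}) + L_\psi \delta_N$, which is exactly why the continuum proof set $M = 2L_\psi$. Since $\mathsf{TOL}(D) \asymp D^2$ for small $D$, saturating this constraint at $\delta_N = \mathsf{TOL}(D)/2$ makes the extra loss of order $D^2 + D^{-8}/N$. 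The $D^2$ term is negligible relative to the $CD/2$ gain, so the binding requirement is $CD \gtrsim W D^{-8}/N$, giving $D^9 \gtrsim W/(CN)$, i.e.\ $D \geq (GN)^{-1/9}$ for $G = O(W/C)$. Defining $\psi^{n+1}(A_t)$ implicitly by $\psi^{n+1}(A_t) + \tfrac{M}{2}\mathsf{TOL}(\psi^{n+1}(A_t)-\psi_{LD}(A_t)) + E_N = \psi^n(A_t)$, with the finite-sample slack $E_N$ being independent of $A_t$, preserves the translation property of $(\Psi^n)_n$ and yields $\Psi^n \subsetneq \Psi^{n+1}$ exactly when $D > (GN)^{-1/9}$.

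The main obstacle is carefully propagating the lucky-event conditioning through the post-$T^*$ analysis: the good-signal continuation reuses the inductive conjecture on $\Psi^n$, but with $A_{T^*}$ itself a noisy version of $\bar{A}_{T^*}$, and the bad-signal continuation invites arbitrary continuation play under which Poisson compounding must still be absorbed into the constant $W$. Handling this cleanly requires stating the $W$ bound globally (via the uniform payoff bound $\bar{\Delta}$) rather than trying to re-apply Lemmas \ref{lem: unlucky bound}--\ref{lem: unlucky bound_2} recursively after each injection, and it is the constraint $\delta_N \leq \mathsf{TOL}(D)$ rather than an unconstrained balance of $\delta_N$ and $\delta_N^{-4}/N$ that drives the exponent $1/9$ (instead of the naive $1/5$).
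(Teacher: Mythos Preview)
Your approach matches the paper's: the $N^{-1/9}$ rate arises from applying Lemmas \ref{lem: unlucky bound}--\ref{lem: unlucky bound_2} with $\delta$ of order $\mathsf{TOL}(D)$ (hence of order $D^2$) to control the probability that post-$T^*$ Poisson fluctuations re-trigger an injection, yielding an unlucky probability of order $D^{-8}/N$ that must be dominated by the $CD$ gain. Your feasibility restriction $\delta_N \leq \mathsf{TOL}(D)$ for reason (i) is exactly right and is how the paper invokes the concentration bounds.

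Two places where the paper is more careful and your sketch has gaps. First, with finitely many players, agent $i$'s action at time $t$ shifts $A_t$ by $1/N$, so ``$\Ex[U_1 - U_0]$'' is not well-defined as a single-tick object: the continuation dynamics (including whether tolerances are later breached) depend on what $i$ does now. The paper compares full strategies $\sigma_i$ (play $0$ at $t$, then arbitrary) versus $\sigma_i^1$ (always play $1$), routing through an auxiliary payoff $U_i^*$ evaluated along the deterministic target $(\bar A_s)_s$ and invoking Lemma \ref{lem: finite} to obtain $U_i^*(\sigma_i) \leq U_i^*(\sigma_i^1)$; this is why the paper integrates over $[t,\infty)$ (denominator $r$) rather than $[t,t+\tau]$ (your $\lambda$). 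Second, your reason (ii) misidentifies the relevant perturbation: the good-signal verification concerns $|A_{T^*} - A_t|$, which is governed by pre-$T^*$ \emph{adversarial} play and is bounded by $\mathsf{TOL}(D) + 1/N$ (the extra $1/N$ from the discrete jump at $T^*$), independently of $\delta_N$ or the lucky event. The continuum choice $M = 2L_\psi$ already absorbs the $\mathsf{TOL}(D)$ part; the paper handles the new $1/N$ term by tightening the inductive condition to $\mu_t \geq \psi^n(A_t) - \tfrac{M}{2}\mathsf{TOL}(D) + L_\psi/N$, which is why condition \eqref{ineq: sufficient condition} on $G$ has two clauses. Neither gap changes the rate, but both are needed for the argument to be rigorous as written.
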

    \begin{proof}[Proof of Lemma~\ref{lem: contagion, finite}]
        The proof is similar to that of Lemma~\ref{lem: contagion}. Suppose that everyone plays action 1 for any histories $H'$ such that $S(H') = (\mu,A)$ is in the round-$n$ dominance region $\Psi^n.$ To obtain $\Psi^{n+1}$, we derive the lower bound on the expected payoff difference of playing $0$ and $1$ given $\Psi^n$.

        Fix any history $H$ with the current target aggregate action $Z_t$ such that $S(H) = (\mu_t,A_t) \notin \Psi^n$. From our construction of $Z_t$, we have $|Z_{t-}- A_{t-}| < \mathsf{TOL}(D(\mu_{t-},A_{t-}))$. For any path $(A_s)_{s \geq t}$ with an increment at most $\frac{1}{N}$, we define a (deterministic) hitting time $T^*((A_s))_{s \geq t}$ as follows:
        \begin{align*}
            T^* = \inf\{s \geq t: |Z_s -A_s| \geq \mathsf{TOL}(D(\mu_t,A_s)) \text{ or } (\mu_t,A_s) \in \Psi^n\}.
        \end{align*}
        Fix $T^*$, and we first determine a behavior of the path $(A_s)_{s \geq t}$ given $T^*$.
        
\noindent        \textbf{Before time $T^*$.} For any $s \in [t,T^*)$, we showed in \eqref{ineq: delta} that $\mathsf{TOL}(D(\mu_t,A_s)) \leq \mathsf{TOL}(\psi^n(A_t) - \psi_{LD}(A_t))$. By the definition of $Z$, we must have $Z_s = \bar{A}_s$ for every $s \in [t,T^*)$ because $|Z_s - A_s| < \mathsf{TOL}(D(\mu_t,A_s))$ for every $s< T^*$. Then we can write down the lower bound of $A_s$ when $s \in [t,T^*)$ as follows:
        \begin{align}\label{ineq: before T^*}
            A_s \geq \bar{A}_s - \mathsf{TOL}(D(\mu_t,A_s)) \geq \bar{A}_s - \mathsf{TOL}(\psi^n(A_t) - \psi_{LD}(A_t)),
        \end{align}
        almost surely given $T^*$.

   \noindent     \textbf{After time $T^*$.} Fix any $s > T^*$. We consider the following two cases.

\noindent        \textbf{Case 1: $|Z_{T^*} - A_{T^*}| < \mathsf{TOL}(D(\mu_{t},A_{T^*}))$.} This means $\mu_{T^*} = \mu_t$ because no information has been injected until $T^*$. Then the definition of $T^*$ and the right continuity of $Z_s$ and $A_s$ imply $(\mu_{T^*}, A_{T^*}) \in \Psi^n.$ This means every agent strictly prefers to take action 1 at $T^*$. This increases $A_s$, inducing every agent taking action 1 after time $T^*$ until time $s'$ at which information is injected, i.e., $|Z_{s'} - A_{s'}| > \mathsf{TOL}(D(\mu_t,A_{s'}))$.

        The event that no information is injected again after time $T^*$ is equivalent to the event that $|Z_s - A_s| \leq \mathsf{TOL}(D(\mu_t,A_{s}))$ for every $s>T^*$. Observe that
        \begin{align*}
            &\Pr\Big(\forall s>T^*, |Z_s - A_s| \leq \mathsf{TOL}(D(\mu_t,A_s)) \big\lvert T^*\Big) \\
            &\geq \Pr\Big(\forall s>T^*, |Z_s - A_s| \leq \mathsf{TOL}(D(\mu_t,A_t)) \big\lvert T^*\Big) \tag{$A_s \geq A_t$}\\
            &= \Pr\Big(\forall s>T^*, |\bar{A}_s - A_s| \leq \mathsf{TOL}(D(\mu_t,A_t)) \big\lvert T^*\Big) \tag{$\bar{A_{s}} = 1 - (1-A_{T^*})e^{-\lambda (s - T^*)} = Z_s$},
        \end{align*}
        From Lemma~\ref{lem: unlucky bound} and Lemma~\ref{lem: unlucky bound_2}, we know that 
        \begin{align*}
            \Pr\Big(\forall s>T^*, |\bar{A}_s - A_s| \leq \mathsf{TOL}(D(\mu_t,A_t)) \big\lvert T^*\Big) \geq 1- (\bar c + \underline c)N^{-1} \mathsf{TOL}(D(\mu_t,A_t))^{-4}. 
        \end{align*}
        Therefore, we can write down the lower bound of $A_s$ when $s >T^*$ under this case as follows:
        \begin{align*}
            \forall s>T^*, A_s \geq \bar{A}_s - \mathsf{TOL}(D(\mu_t,A_s)) \geq \bar{A}_s - \mathsf{TOL}(\psi^n(A_t) - \psi_{LD}(A_t))
        \end{align*}
        with probability at least $1- (\bar c + \underline c)N^{-1} \mathsf{TOL}(D(\mu_t,A_t))^{-4} $.

        \noindent \textbf{Case 2: $|Z_{T^*} - A_{T^*}| \geq \mathsf{TOL}(D(\mu_{t},A_{T^*}))$.} In this case, information is injected at $T^*$. Note that, if $(\mu_{T^*},A_{T^*}) \in \Psi^n,$ then everyone prefers to take action 1 at $T^*$. This increases $A_s,$ inducing every agent taking action 1 after time $T^*$ until time $s'$ at which information is injected again. Thus, the probability that $(\mu_{T^*},A_{T^*}) \in \Psi^n$ and no information is injected again is at least 
        \begin{align*}
            \Pr((\mu_{T^*},A_{T^*}) \in \Psi^n \mid T^*) \cdot \left\{1- (\bar c + \underline c)N^{-1} \mathsf{TOL}(D(\mu_t,A_t))^{-4} \right\}.
        \end{align*}
        By definition, we have
        \begin{align*}
            \Pr((\mu_{T^*},A_{T^*}) \in \Psi^n \mid T^*) = p_+(\mu_t,A_{T^*})1\{(\mu_t + M\cdot\mathsf{TOL}(D(\mu_t,A_{T^*})),A_{T^*}) \in \Psi^n\}
        \end{align*}
        
        Now we claim that
        \begin{align*}
            A_{T^*} > A_t - \mathsf{TOL}(D(\mu_t,A_t)) - \frac{1}{N}.
        \end{align*}
        To see this, suppose for a contradiction that $A_{T^*} \leq A_t - \mathsf{TOL}(D(\mu_t,A_t)) - \frac{1}{N}$, which implies $A_{T^*}<A_t$. However, since the definition of $T^*$ implies $A_{T^*-} \geq Z_{T^*-} - \mathsf{TOL}(D(\mu_t,A_{T^*-})),$ we have 
        \begin{align*}
            A_{T^*} &\geq A_{T^*-} - \frac{1}{N} \\
            &\geq  Z_{T^*-} - \mathsf{TOL}(D(\mu_t,A_{T^*-})) - \frac{1}{N} \\
            &> A_t - \mathsf{TOL}(D(\mu_t,A_t)) - \frac{1}{N},  \tag{$Z_{T^*-} = \bar A_{T^*} > A_t$ and $A_t > A_{T^*}$}
        \end{align*}
        where the first inequality follows from the increment size of $A_t$ being at most $1/N$ by assumption. This is a contradiction.

        Hence, if $\mu_t \geq \psi^n(A_t) - M\cdot\mathsf{TOL}(D(\mu_t,A_t))/2 + \frac{1}{N}L_{\psi}$,\footnote{We use this condition when we construct $\Psi^{n+1}$.} we must have
        \begin{align*}
            \mu_t + M\cdot\mathsf{TOL}(D(\mu_t,A_t)) &\geq \psi^n(A_t) + M\cdot\mathsf{TOL}(D(\mu_t,A_t))/2 + \frac{1}{N}L_{\psi} \\
            &> (\psi^n(A_{T^*}) -  L_{\psi} \mathsf{TOL}(D(\mu_t,A_t))) + M\cdot\mathsf{TOL}(D(\mu_t,A_t))/2 \tag{Lipschitz continuity of $\psi^n$}\\
            &= \psi^n(A_{T^*}),
        \end{align*}
        by setting $M = 2L_{\psi}$. Thus, $(\mu_t + M\cdot\mathsf{TOL}(D(\mu_t,A_t)),A_{T^*}) \in \Psi^n$ holds, implying
        \begin{align*}
            \Pr((\mu_{T^*},A_{T^*}) \in \Psi^n \mid T^*) &= p_+(\mu_t,A_{T^*})\\
            &= 1 - \frac{M\cdot\mathsf{TOL}(D(\mu_t,A_{T^*}))}{\mathsf{DOWN}(D(\mu_{t},A_{T^*})) + M\cdot\mathsf{TOL}(D(\mu_t,A_{T^*}))} \\
            &\geq 1 - \frac{M\cdot\mathsf{TOL}(D(\mu_t,A_{T^*}))}{\mathsf{DOWN}(D(\mu_{t},A_{T^*}))} \tag{$M\cdot\mathsf{TOL}(D(\mu_t,A_{T^*})) \geq 0$}\\
            &= 1-\frac{\bar{\delta} \lambda MC}{2L+2LM(\mu_t - \psi_{LD}(A_{T^*}))^{-1}} \\
            &\geq 1-\frac{\bar{\delta} \lambda MC}{2L+2LM(\psi^n(A_t) - \psi_{LD}(A_t))^{-1}} \tag{From \eqref{ineq: delta} and continuity of $A_s$}\\
            &\geq 1- \bar{\delta} c (\psi^n(A_t) - \psi_{LD}(A_t))
        \end{align*}
        for absolute constant $c := \frac{\lambda C}{2L}$. Then we can write down the lower bound of $A_s$ for every $s>T^*$ under this case as follows:
        \begin{align*}
            \forall s>T^*, A_s \geq \bar{A}_s - \mathsf{TOL}(D(\mu_t,A_s)) \geq \bar{A}_s - \mathsf{TOL}(\psi^n(A_t) - \psi_{LD}(A_t))
        \end{align*}
        with probability at least 
        \begin{align*}
            &(1- \bar{\delta} c (\psi^n(A_t) - \psi_{LD}(A_t)) ) \cdot (1- (\bar c + \underline c)N^{-1} \mathsf{TOL}(D(\mu_t,A_t))^{-4} ) \\
            &\geq 1- \bar{\delta} c (\psi^n(A_t) - \psi_{LD}(A_t)) - (\bar c + \underline c)N^{-1} \mathsf{TOL}(D(\mu_t,A_t))^{-4}.
        \end{align*}
    Combining \textbf{Case 1} and \textbf{Case 2}, we must have 
    \begin{align}\label{ineq: after T^*}
            \forall s>T^*, A_s  \geq \bar{A}_s - \mathsf{TOL}(\psi^n(A_t) - \psi_{LD}(A_t))
    \end{align}
        with probability at least 
             $1- \bar{\delta} c (\psi^n(A_t) - \psi_{LD}(A_t)) - (\bar c + \underline c)N^{-1} \mathsf{TOL}(D(\mu_t,A_t))^{-4}.$

\noindent     \textbf{Obtaining the lower bound.} Our next step is to compute the lower bound when agent $i$ takes action $1$ at time $t$ and the upper bound when agent $i$ takes action $0$ at time $t$. One difference from the case with a continuum of agents is that agent $i$'s action affects the entire future path of aggregate actions. Therefore, we need to account for these effects when computing the bounds. Finally, using these bounds, we show that there exists $G$ such that if $\mu_t > \psi_{LD}(A_t) + (GN)^{-1/9}$, agent $i$ strictly prefers action $1$ when $\mu_t \geq \psi^n(A_t) - M\cdot\mathsf{TOL}(D(\mu_t, A_t))/2 + \frac{1}{N}L_{\psi}$, given that all agents take action $1$ for all $(\mu_s, A_s) \in \Psi^n$.

    Suppose that agent $i$ takes action $a \in \{0,1\}$ at $(\mu_{t-},A_{t-})$ and takes a (random) action $a_{in}$ after each tick of her Poisson clock $(\tau_n)_n$. We call this strategy $\sigma_i$ and assume that it induces $A_s(\sigma_i) $.\footnote{Again, $A_s$ depends on agent $i$'s strategy $\sigma_i$ because of finiteness.} Her payoff from strategy $\sigma_i$ is given by
    \begin{align*}
        U_i(\sigma_i) = \Ex_\mu\bigg[\sum_{n=0}^\infty \int_{\tau_n }^{\tau_{n+1}} e^{-r(s-t)} u(a_{in},A_s(\sigma_i),\theta) ds\bigg].
    \end{align*}
    In \eqref{ineq: before T^*}, we showed that $A_s(\sigma_i) \geq \bar{A}_s - \mathsf{TOL}(\psi^n(A_t) - \psi_{LD}(A_t))$ for every $s<T^*.$\footnote{The increment of $(A_s(\sigma_i))_s$ is at most $1/N$ because the probability that Poisson clocks of more than one agents tick at the same time is zero. Hence we can apply the earlier arguments.}  After time $T^*$, if no information is injected again, everyone (including agent $i$) takes action $1$, implying $a_{in} = 1$ if $\tau_n > T^*$. In \eqref{ineq: after T^*}, we showed 
    \begin{align*}
            \forall s>T^*, A_s(\sigma_i)  \geq \bar{A}_s - \mathsf{TOL}(\psi^n(A_t) - \psi_{LD}(A_t))
        \end{align*}
        with probability at least 
             $1- \bar{\delta} c (\psi^n(A_t) - \psi_{LD}(A_t)) - (\bar c + \underline c)N^{-1} \mathsf{TOL}(D(\mu_t,A_t))^{-4}.$
    Combining before and after $T^*$, we have 
        \begin{align*}
            \forall s \ne T^*, A_s(\sigma_i)  \geq \bar{A}_s - \mathsf{TOL}(\psi^n(A_t) - \psi_{LD}(A_t))
        \end{align*}
        with probability at least 
             $1- \bar{\delta} c (\psi^n(A_t) - \psi_{LD}(A_t)) - (\bar c + \underline c)N^{-1} \mathsf{TOL}(D(\mu_t,A_t))^{-4}.$     
    By Lipschitz continuity of $u(a,\cdot,\theta)$ and $\Delta u(\cdot,\theta),$ we must have\footnote{Let $L_0$ and $L_1$ be Lipschitz constants of $u(0, \cdot, \theta)$ and $u(1, \cdot, \theta)$, respectively. Then, $\Delta u(\cdot, \theta)$ is Lipschitz continuous with constant $L := L_0 + L_1$.}
    \begin{align*}
        \forall s \ne T^*, \forall a\in \{0,1\}, |u(a,A_s(\sigma_i),\theta) -  u(a,\bar{A}_s,\theta)| &\leq \mathsf{TOL}(\psi^n(A_t) - \psi_{LD}(A_t))L
    \end{align*}
    with probability at least 
    \[
    P_N(\mu_t,A_t) := 1- \bar{\delta} c (\psi^n(A_t) - \psi_{LD}(A_t)) - (\bar c + \underline c)N^{-1} \mathsf{TOL}(D(\mu_t,A_t))^{-4}.
    \] Thus, for every strategy $\sigma_i$ under conjecture $\Psi^n$, this implies
    \begin{align}
        &\bigg\lvert U_i(\sigma_i) - \underbrace{\Ex_\mu\bigg[\sum_{n=0}^\infty \int_{\tau_n }^{\tau_{n+1}} e^{-r(s-t)} u(a_{in},\bar{A}_s,\theta) ds}_{\eqqcolon U^*_i(\sigma_i)}\bigg] \bigg\rvert \notag\\
        &=\left| \Ex_\mu\bigg[\sum_{n=0}^\infty \int_{\tau_n }^{\tau_{n+1}} e^{-r(s-t)} \left\{u(a_{in},A_s(\sigma_i),\theta) - u(a_{in},\bar{A}_s,\theta) \right\} ds \right| \notag\\
        &\leq \Ex\bigg[\sum_{n=0}^\infty \int_{\tau_n }^{\tau_{n+1}} e^{-r(s-t)}   (P_N(\mu_t,A_t)\mathsf{TOL}(\psi^n(A_t) - \psi_{LD}(A_t))L  + (1-P_N(\mu_t,A_t)) L \Big) \bigg]  ds  \tag{From the above inequality}\\
        &= \left\{P_N(\mu_t,A_t)\mathsf{TOL}(\psi^n(A_t) - \psi_{LD}(A_t))L  + (1-P_N(\mu_t,A_t)) L \right\} \cdot  \int_{t}^{\infty} e^{-r(s-t)} ds \notag\\
        &=\frac{P_N(\mu_t,A_t)\mathsf{TOL}(\psi^n(A_t) - \psi_{LD}(A_t))L  + (1-P_N(\mu_t,A_t)) L }{r}. \label{inequality: U-U^*}
    \end{align}
    
    Now define $\sigma^1_i$ to be a strategy that agent $i$ always takes action 1. Suppose that agent $i$ takes action 0 at the beginning for $\sigma_i.$ Consider that, since $\mu > \psi_{LD}(A_t),$ if $\mu_t \geq \psi^n(A_t) - M\cdot\mathsf{TOL}(D(\mu_t,A_t))/2  +\frac{1}{N}L_{\psi}$,
    \begin{align}
         U^*_i(\sigma_i) &= \Ex_{\mu}\bigg[\sum_{n=0}^\infty \int_{\tau_n}^{\tau_{n+1}} e^{-r(s-t)} u(a_{in}, \bar{A}_s,\theta) ds \bigg] \\
         &\leq \Ex_{\mu}\bigg[ \int_{t}^{\tau_1} e^{-r(s-t)} u(0, \bar{A}_s,\theta) ds \bigg] + \Ex_{\mu}\bigg[\sum_{n=1}^\infty \int_{\tau_n}^{\tau_{n+1}} e^{-r(s-t)} u(1, \bar{A}_s,\theta) ds \bigg] \tag{From Lemma~\ref{lem: finite}}\\
         &= U^*_i(\sigma^1_i) - \Ex_{\mu}\bigg[ \int_{t}^{\tau_1} e^{-r(s-t)}  \Delta u(\bar{A}_s,\theta) ds \bigg] \\
         &\leq U^*_i(\sigma^1_i) - \frac{C}{2}(\psi^n(A_t) - \psi_{LD}(A_t)).\quad\quad\quad\quad\quad\quad\quad\quad\text{(From \eqref{ineq: first term})} \label{inequality: sigma-sigma^1}
    \end{align}

    Therefore, we have
    \begin{align*}
        &U_i(\sigma^1_i) - U_i(\sigma_i) \\
        &= (U_i(\sigma^1_i) - U^*_i(\sigma^1_i) ) + (U^*_i(\sigma^1_i) - U_i^*(\sigma_i)) + (U_i^*(\sigma_i) - U_i(\sigma_i)) \\
        &\geq \frac{C}{2}(\psi^n(A_t) - \psi_{LD}(A_t)) - 2 \cdot \frac{P_N(\mu_t,A_t)\mathsf{TOL}(\psi^n(A_t) - \psi_{LD}(A_t))L  + (1-P_N(\mu_t,A_t)) L }{r} \tag{From \eqref{inequality: U-U^*} and \eqref{inequality: sigma-sigma^1}}\\
        &\geq \frac{C}{2}(\psi^n(A_t) - \psi_{LD}(A_t)) - 2 \cdot \frac{\mathsf{TOL}(\psi^n(A_t) - \psi_{LD}(A_t))L  + (1-P_N(\mu_t,A_t)) L }{r} \tag{$P_N(\mu_t,A_t) \leq 1$}\\
    \end{align*}
    Recall that 
    \begin{align*}
        P_N(\mu_t,A_t) &= 1- \bar{\delta} c (\psi^n(A_t) - \psi_{LD}(A_t)) - (\bar c + \underline c)N^{-1} \mathsf{TOL}(D(\mu_t,A_t))^{-4}.
    \end{align*}
    Since $D(\mu_t,A_t) = \mu_t - \psi_{LD}(A_t) > (d N)^{-1/9}$ holds by assumption, we must have
    \begin{align*}
        P_N(\mu_t,A_t)  > 1- \bar{\delta} c (\psi^n(A_t) - \psi_{LD}(A_t)) - (\bar c + \underline c) (\underline e\bar{\delta})^{-4} d^{8/9} N^{-1/9} 
    \end{align*}
    for some constant $\bar e$ and $\underline e$ such that $\bar e\bar{\delta}D^2 \geq \mathsf{TOL}(D) \geq \underline e\bar{\delta}D^2.$\footnote{By the definition of $\mathsf{TOL}$, any $\bar e \geq \lambda C/4L$ and $\underline e \leq \lambda C/\{4L(1 + M)\}$ works.} Define $\phi_n := \psi^n(A_t) -\psi_{LD}(A_t)$. Since $\mu_t \leq \psi^n(A_t),$ we have $\phi_n \geq D(\mu_t,A_t) > (G N)^{-1/9}.$
    Thus, 
    \begin{align*}
        &U_i(\sigma^1_i) - U_i(\sigma_i) \\
        &\geq \frac{C\phi_n}{2} - 2\cdot \frac{\bar e\bar{\delta}\phi_n^2L + (\bar{\delta}c\phi_n + (\bar c + \underline c)(\underline e\bar{\delta})^{-4}G^{8/9} N^{-1/9} )}{r} \tag{From the above inequality}\\
        &\geq \bigg(\frac{C}{2} - \frac{2\bar\delta(\bar eL + c)}{r} \bigg) \phi_n - \frac{2(\bar c + \underline c)(\underline e\bar{\delta})^{-4}G^{8/9} N^{-1/9}}{r} \tag{$\phi_n \leq 1$}\\
        &\geq \bigg(\frac{C}{2} - \frac{2\bar\delta(\bar eL + c)}{r} \bigg) (dN)^{-1/9} - \frac{2(\bar c + \underline c)(\underline e\bar{\delta})^{-4}G^{8/9} N^{-1/9}}{r} \tag{$\phi_n > (GN)^{-1/9}$}\\
        &>0,
    \end{align*}
    where the last inequality is true if we choose $\bar\delta$ and $G$ such that
    \begin{align*}
        \bar{\delta} &< \frac{Cr}{4(\bar eL + c)} \\
        G &< \frac{r (\underline e \bar\delta)^4}{2(\bar c + \underline c)} \bigg( \frac{C}{2} - \frac{2\bar\delta(\bar eL + c)}{r}\bigg).
    \end{align*}
    In conclusion, we have shown that there exists a constant $d$ such that if $\mu_t > \psi_{LD}(A_t) + (dN)^{-1/9}$, agent $i$ strictly prefers action $1$ when $\mu_t \geq \psi^n(A_t) - M\cdot\mathsf{TOL}(D(\mu_t, A_t))/2 + \frac{1}{N}L_{\psi}$, given that all agents take action $1$ for all $(\mu_s, A_s) \in \Psi^n$.

    \noindent \textbf{Characterizing $\Psi^{n+1}$.} Note that $\mathsf{TOL}$ is increasing. Thus, $\mu_t + M\cdot\mathsf{TOL}(D(\mu_t,A_t))/2$ is increasing and continuous in $\mu_t$. Therefore, for each $A_t$, there exists $\mu'(A_t) < \psi^n(A_t)$ such that 
\begin{align*}
    \mu'(A_t) + \frac{M\cdot\mathsf{TOL}(D(\mu'(A_t),A_t))}{2} =  \psi^n(A_t) + \frac{L_{\psi}}{N}
\end{align*}
if
\begin{align*}
    \frac{M\cdot\mathsf{TOL}(D(\psi^n(A_t),A_t))}{2} > \frac{L_{\psi}}{N}.
\end{align*}
A sufficient condition for this is
\begin{align*}
    \frac{M}{2}\bar\delta\underline e \left( GN \right)^{-2/9} > \frac{L_{\psi}}{N} \Leftrightarrow G < \left( \frac{M\bar\delta\underline e}{2L_{\psi}} \right)^{\frac{9}{2}}N^{\frac{7}{2}}.
\end{align*}
Hence, taking $G$ such that
\begin{align}\label{ineq: sufficient condition}
G < \min \left\{ \left( \frac{M\bar\delta\underline e}{2L_{\psi}} \right)^{\frac{9}{2}}, \frac{r (\underline e \bar\delta)^4}{2(\bar c + \underline c)} \bigg( \frac{C}{2} - \frac{2\bar\delta(\bar eL + c)}{r}\bigg) \right\}
\end{align}
is sufficient. Note that we choose $C$ 

Then we define
\[\Psi^{n+1} = \{(\mu_t,A_t) : \mu_t \geq \mu'(A_t)\}\]
From the argument above, we must have an agent always choosing action $1$ whenever $(\mu_t,A_t) \in \Psi^{n+1}.$ Moreover, we can rewrite the above equation as follows:
\begin{align*}
    (\mu'(A_t)-\psi_{LD}(A_t)) + \frac{M\cdot\mathsf{TOL}(\mu'(A_t)-\psi_{LD}(A_t))}{2} = \psi^n(A_t) - \psi_{LD}(A_t) + \frac{L_{\psi}}{N},
\end{align*}
where the RHS is constant in $A_t$ by the property of $\psi^n$. Thus, $\mu'(A_t) - \psi_{LD}(A_t)$ must be also constant in $A_t.$ This concludes that round-$(n+1)$ dominance region $\Psi^{n+1}$ satisfies $\Psi^n \subset \Psi^{n+1}$ because $c_n = \psi^n(A_t) - \psi_{LD}(A_t) > \mu'(A_t) - \psi_{LD}(A_t) =: c_{n+1}$ when $\eqref{ineq: sufficient condition}$ is satisfied.
\end{proof}

To conclude the proof of part (i) of Theorem~\ref{thrm: finite}, we show the following lemma.

\begin{lemma}
    \label{lem: induction finite}
    \[
    \bigcup_{n \in \mathbb{N}} \Psi^n \supseteq \Big\{(\mu,A) \in \Delta(\Theta) \times [0,1]: \mu > \psi_{LD}(A) + (GN)^{-1/9}\Big\}.
    \]
\end{lemma}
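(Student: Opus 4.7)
The plan is to reduce the problem to a one-dimensional fixed-point analysis on the translation constants $c_n$. By the translation property of Definition~\ref{def: psi}, which Lemma~\ref{lem: contagion, finite} preserves, each $\Psi^n$ is of the form $\{(\mu, A) : \mu \geq \psi_{LD}(A) + c_n\}$. Unwinding the defining equation for $\mu'(A_t)$ in the proof of Lemma~\ref{lem: contagion, finite} yields the scalar recursion
\[
c_{n+1} + \frac{M \cdot \mathsf{TOL}(c_{n+1})}{2} \;=\; c_n + \frac{L_{\psi}}{N},
\]
with $c_{n+1} < c_n$ whenever the contagion step is valid, i.e.\ whenever $c_{n+1} > (GN)^{-1/9}$.

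The core step is a fixed-point argument. Suppose for contradiction that $c_n > (GN)^{-1/9}$ for every $n$. Then $c_n$ is monotone decreasing and bounded below, so $c_n \downarrow c^* \geq (GN)^{-1/9}$, and passing to the limit in the recursion yields $\mathsf{TOL}(c^*) = 2L_{\psi}/(MN)$. Substituting the explicit form $\mathsf{TOL}(D) = \bar\delta \lambda C D / (4L + 4LM D^{-1})$ chosen in the proof of Theorem~\ref{thrm:main} and using the elementary lower bound $\mathsf{TOL}(c) \geq \underline e \bar\delta c^2$ for some constant $\underline e > 0$ uniform on $c \in [0,1]$, I obtain $(c^*)^2 \leq 2 L_{\psi}/(MN \underline e \bar\delta)$, so $c^* = O(N^{-1/2})$.

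To conclude, I would compare rates: since $N^{-1/2}$ decays strictly faster than $N^{-1/9}$, for every $N$ larger than a threshold depending only on the constants $L_{\psi}, M, \underline e, \bar\delta, G$, the bound above forces $c^* < (GN)^{-1/9}$, contradicting $c^* \geq (GN)^{-1/9}$. Hence some $c_{n_0} \leq (GN)^{-1/9}$, and any $(\mu, A)$ with $\mu > \psi_{LD}(A) + (GN)^{-1/9}$ belongs to $\Psi^{n_0} \subseteq \bigcup_n \Psi^n$, giving the claimed inclusion.

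The main obstacle I anticipate is bookkeeping rather than analysis: one must verify that the constant $G$ used to close the contagion argument in Lemma~\ref{lem: contagion, finite} (via~\eqref{ineq: sufficient condition}) is simultaneously small enough to guarantee $c^* < (GN)^{-1/9}$ for large $N$. Since both constraints on $G$ are polynomial in the same primitives, this is just a matter of taking $N$ beyond a suitable threshold; no genuinely new analytic ingredient beyond the recursion and the quadratic lower bound on $\mathsf{TOL}$ is required.
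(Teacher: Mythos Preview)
Your proposal is correct and takes essentially the same route as the paper: pass to the limit of the translation constants to obtain $\mathsf{TOL}(c^*) = 2L_\psi/(MN)$ from the recursion, then compare with $(GN)^{-1/9}$. The bookkeeping obstacle you anticipate is in fact already resolved by the first branch of \eqref{ineq: sufficient condition}: the condition $G < (M\bar\delta\underline e/(2L_\psi))^{9/2}$ is precisely what guarantees $2L_\psi/(MN) \leq \underline e\bar\delta(GN)^{-2/9} \leq \mathsf{TOL}\big((GN)^{-1/9}\big)$ for every $N \geq 1$, so monotonicity of $\mathsf{TOL}$ gives $c^* \leq (GN)^{-1/9}$ directly, without your rate-comparison detour or any threshold on $N$.
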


\begin{proof}[Proof of Lemma~\ref{lem: induction finite}]
    Recall $\psi^n(A_t) = \sup\{\mu \in \Delta(\Theta): (\mu,A_t) \notin \Psi^n\}.$ By Lemma~\ref{lem: contagion, finite}, $\psi^n(A_t)$ is decreasing in $n$. Define $\psi^*(A_t) = \lim_{n\to\infty} \psi^n(A_t)$. In limit, we must have
\begin{align*}
    &\psi^*(A_t) + M\cdot\mathsf{TOL}(D(\psi^*(A_t),A_t))/2 = \psi^*(A_t) + L_{\psi}/N\\
    &\Rightarrow \mathsf{TOL}(D(\psi^*(A_t),A_t)) = 2L_{\psi}/(MN).
\end{align*}
Since our choice of $G$ by \eqref{ineq: sufficient condition} ensures
\[
\frac{2L_{\psi}}{MN} \leq \mathsf{TOL}\left((GN)^{-1/9}\right),
\]
we have 
\[
D(\psi^*(A_t),A_t) \leq \mu_t - \psi_{LD}(A_t) \Leftrightarrow \psi^*(A_t) \leq \mu_t
\]
for any $\mu_t > \psi_{LD}(A_t) + (GN)^{-1/9}$, as desired.
\end{proof}
This concludes the proof of part (i) of Theorem \ref{thrm: finite}. \end{proof}

\begin{proof}[Proof of Part (ii) of Theorem \ref{thrm: finite}]
Consider $N$ large enough so that $\mu_0 > \psi_{LD}(A_0) + (dN)^{-1/9}.$ Under $\bm{\mu}^\eta$ and the environment of $N$ agents, Part (i) implies everyone takes action 1 under any equilibrium outcome until new information is injected. 

Without loss of generality, we assume $\phi(\bm A) \geq 0.$ Let $\bm\tau := (\tau_i)_{i = 1}^N$. We have
\begin{align*}
&\inf_{\sigma \in \Sigma^N(\bm{\mu}^\eta, A_0)}
\Ex^{\sigma}\Big[\phi(\bm{A}) \Big] \\
& \geq \Ex_{\bm\tau} \left[1 \left\{\forall t, |\bar A_t - \bar A_t^N| \leq \min\left\{\mathsf{TOL}(D(\mu_0, A_t)), N^{-2/9} \right\} \right\}\phi(\bm{\bar{A}}^N)\right]\\
& \geq \Ex_{\bm\tau} \left[1 \left\{\forall t, |\bar A_t - \bar A_t^N| \leq \min\left\{\mathsf{TOL}(D(\mu_0, A_0)), N^{-2/9} \right\} \right\}\phi(\bm{\bar{A}}^N)\right] \tag{$A_t \geq A_0$}\\
&\geq \Ex_{\bm\tau} \left[1 \left\{\forall t, |\bar A_t - \bar A_t^N| \leq \min\left\{\mathsf{TOL}(D(\mu_0, A_0)), N^{-2/9} \right\} \right\} \left\{\phi(\bm{\bar{A}}) - L_\phi \|\bm{\bar{A}}-\bm{\bar{A}}^N \|_\infty \right\}\right]  \tag{Lipschitz continuity of $\phi$}\\
& \geq \left\{ 1- (\bar c + \underline c)N^{-1} \min\left\{\mathsf{TOL}(D(\mu_0, A_0)), N^{-2/9} \right\}^{-4} \right\}\left\{\phi(\bm{\bar{A}}) - L_\phi N^{-2/9} \right\} \tag{Lemmas \ref{lem: unlucky bound} and \ref{lem: unlucky bound_2}}\\
&\geq \left\{ 1- (\bar c + \underline c)N^{-1/9} \max\left\{(\underline e\bar\delta)^{-4} G^{8/9}, 1 \right\} \right\}\left\{\phi(\bm{\bar{A}}) - L_\phi N^{-2/9}\right\} \tag{$\mathsf{TOL}(D) \geq \underline e\bar\delta D^2$ and $D \geq (GN)^{-1/9}$}\\
&\geq \phi(\bm{\bar{A}}) - K_2N^{-1/9}
\end{align*}
for some constant $K_2$, where $\bm{\bar{A}}$ satisfies $\bar{A}_t = \bar{A}(A_0, t) = 1 - (1 - A_0)e^{-\lambda t}$, and $\bm{\bar{A}}^N$ satisfies
\begin{align*}
    \bar{A}_t^N &= A_0 + \frac{1}{N}\sum_{i = 1}^{n} 1 \{ \tau_i \leq t \}
\end{align*}
with $n$ being the number of agents playing $0$ at time $0$. The proof of Theorem \ref{thrm:main} implies $\eqref{eqn:adv} = \phi(\bm{\bar{A}}).$ Thus,
\begin{align} \label{eq: adv-n>adv}
    \eqref{eqn:adv-n} + K_2N^{-1/9} \geq \eqref{eqn:adv}
\end{align}
when $N$ is large enough, as desired. Note that
\begin{align*}
    \big\lvert \mathbb{E}[\phi(\bar{\bm{A}}^N) - \phi(\bar{\bm{A}})] \big\rvert &\leq \mathbb{E}[|\phi(\bar{\bm{A}}^N) - \phi(\bar{\bm{A}})|] \\
    &\leq L_\phi \mathbb{E} \big[ \| {\bm{\bar{A}}}^N - {\bm{\bar{A}}}\|_\infty \big] \\
    &\leq L_\phi \Big( N^{-1/5} + \mathbb{P}\big(\| {\bm{\bar{A}}}^N - {\bm{\bar{A}}}\|_\infty \geq N^{-1/5}\big) \Big) \\
    &\leq \bar{K}N^{-1/5}, \tag{Lemmas \ref{lem: unlucky bound} and \ref{lem: unlucky bound_2}}
\end{align*}
where $\bar{K} =L_\phi(1+ \bar{c}+\underline{c}).$ This implies
\begin{align} \label{eq: adv-n<adv}
    \eqref{eqn:adv-n} \leq \eqref{eqn:opt-n} \leq \mathbb{E}[\phi(\bar{\bm{A}}^N)] \leq \phi(\bar{\bm{A}}) + \bar{K}N^{-1/5} \leq \eqref{eqn:adv} + \bar{K}N^{-1/5}.
\end{align}
Inequalities \eqref{eq: adv-n>adv} and \eqref{eq: adv-n<adv} imply
\begin{align*}
    \eqref{eqn:opt-n} - \eqref{eqn:adv-n} &\leq \eqref{eqn:adv} - \eqref{eqn:adv-n} + \bar K N^{-1/9} \\
    &\leq \underbrace{(\bar K + K_2)}_{=: K_1}N^{-1/9}
\end{align*}
\end{proof}

\begin{proof}[Proof of Part (iii) of Theorem \ref{thrm: finite}] Since $\mu_0 > \psi_{LD}(A_0)$, we must have $\mu_{t-} > \psi_{LD}(A_t)$ for every $t$ for every history $H_t$. Let $N$ be large enough such that $\mu_{t-} > \psi_{LD}(A_t) + 2(GN)^{-1/9}$. We consider the following two cases for : 
\begin{itemize}[leftmargin=*]
    \item \textbf{Case 1:} If $\mu_{t-} > \psi_{LD}(A_t) + 2(GN)^{-1/9}$ and $|A_t - Z_{t-}| < \mathsf{TOL}(D(\mu_t, A_t))$. In this case, there is no information arriving, and everyone takes action 1. This will increase $A_t$, and every agent always takes action $1$ from time $t$ onwards as long as $|\bar A_s - \bar A_s^N| \leq \mathsf{TOL}(D(\mu_s, A_s))$ for all $s \geq t$. Since Lemmas~\ref{lem: unlucky bound} and \ref{lem: unlucky bound_2} imply that such probability converges to $1$ as $N\to\infty$, the designer's payoff converges to the best case, implying sequential optimality as $N \to \infty$.
    
    \item \textbf{Case 2:} If  $\mu_{t-} > \psi_{LD}(A_t) + 2(GN)^{-1/9}$ and $|A_t - Z_{t-}| \geq \mathsf{TOL}(D(\mu_t, A_t))$. In this case, the belief moves to either $\mu_{t-} + M \cdot \mathsf{TOL}(D)$ or $\mu_t - \mathsf{DOWN}(D)$. Note that $\mu_{t-} - \mathsf{DOWN}(D) = (\mu_t + \psi_{LD}(A_t))/2 > \psi_{LD}(A_t) + (GN)^{-1/9}$. So no matter what information arrives, every agent takes action 1. This will increase $A_t$, and every agent always takes action $1$ after time $t$ as long as $|\bar A_s - \bar A_s^N| \leq \mathsf{TOL}(D(\mu_s, A_s))$ for all $s \geq t$. Again, since such probability converges to $1$ as $N\to\infty$, we have sequential optimality as $N \to \infty$.
\end{itemize}

\end{proof}

\subsection{Optimal implementation when prior is in lower dominance region} \label{appendix:finite_in_LD}
Theorem \ref{thrm: finite} established an analog of our result in the main text the prior is outside the lower dominance region. We now analyze informational puts when the prior lies within the lower dominance region. 

\begin{proposition} 
    \label{prop: finite LD}
    \phantom{} 
    \begin{enumerate}
    \item [(i)] If $\mu_0 \in \Psi_{LD}(A_0)$, then there exists a constant $K$ such that 
    \[\eqref{eqn:adv-n} \geq \eqref{eqn:adv} - KN^{-1/9}, \]
    for sufficiently large $N$.
    \end{enumerate}
    Suppose $u(1,A,\theta)$ and $u(0,A,\theta)$ are increasing and decreasing in $A$, respectively. The following statements hold:
    \begin{enumerate}
    \item [(ii)] The gap between \eqref{eqn:opt-n} and \eqref{eqn:opt} vanishes for sufficiently large $N$ and large $r$:
     \[\text{lim sup}_{r \to \infty} |\eqref{eqn:opt-n} - \eqref{eqn:opt}| \leq H N^{-1/9}\]
    holds for some constant $H$ when $N$ is sufficiently large.

    \item[(iii)] The multiplicity gap vanishes for sufficiently large $N$ and large $r$:
    \[\lim_{r \to \infty}\lim_{N \to \infty} |\eqref{eqn:opt-n} - \eqref{eqn:adv-n}| = 0. \]
    \end{enumerate}
\end{proposition}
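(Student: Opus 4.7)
The plan is to prove the three parts sequentially, leveraging the informational puts policy $\bm{\mu}^{\eta}$ together with Theorem~\ref{thrm: finite} for post-jump subgames, and the Step~3A analysis of Theorem~\ref{thrm:main} for the pre-jump dynamics.

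For part (i), I would apply $\bm{\mu}^{\eta_N}$ with $\eta_N = O(N^{-1/9})$. The time-$0$ jump sends beliefs to the maximal escape point $\mu^+ \notin \Psi_{LD}(A_0)$ with probability $p^*(\mu_0, A_0) - \eta_N$, and to $\mu^- \in \Psi_{LD}(A_0)$ with the complementary probability. On the first event, the continuation subgame starts outside the lower dominance region, so Theorem~\ref{thrm: finite}(ii) yields a worst-equilibrium payoff of at least $\phi(\bm{\bar A}) - K_2 N^{-1/9}$; on the second event, the trivial bound $\phi(\bm{\underline A})$ suffices. Combining these with the Step~3A identity $\eqref{eqn:adv} = p^*(\mu_0, A_0)\phi(\bm{\bar A}) + (1 - p^*(\mu_0, A_0))\phi(\bm{\underline A})$ and absorbing $\eta_N$ into the error term yields $\eqref{eqn:adv-n} \geq \eqref{eqn:adv} - K N^{-1/9}$ for a suitable constant $K$.

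For part (ii), I would establish an upper bound on $\eqref{eqn:opt-n}$ by a finite-player adaptation of Step~3A. Define $\tau^* := \inf\{t : \mu_t \notin \Psi_{LD}(A_0)\}$. The strict monotonicity of $u$ in $A$ makes action $0$ strictly dominant for any agent whose clock ticks while $(\mu_t, A_t) \in \Psi_{LD}(A_t)$, so path-wise $A_t = \underline A_t^N$ on $[0, \tau^*)$, and after $\tau^*$ the best case is growth along $\bar A_t^N$ starting from $A_{\tau^*}$. The optional stopping theorem on $(\mu_t)$ bounds $\Pr(\tau^* < \infty) \leq p^*(\mu_0, A_0)$ as in Step~3A. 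Lemmas~\ref{lem: unlucky bound} and \ref{lem: unlucky bound_2} (applied with $\delta$ of order $N^{-1/5}$) together with the Lipschitz property of $\phi$ then control the expected path deviations $\Ex[\|\bm{\bar A}^N - \bm{\bar A}\|_\infty]$ and $\Ex[\|\bm{\underline A}^N - \bm{\underline A}\|_\infty]$ at rate $O(N^{-1/5})$, giving
\begin{align*}
\eqref{eqn:opt-n} \leq p^*(\mu_0, A_0)\, \phi(\bm{\bar A}) + (1 - p^*(\mu_0, A_0))\, \phi(\bm{\underline A}) + O(N^{-1/9}).
\end{align*}
The role of the large-$r$ hypothesis is to render agents essentially myopic, so that no dynamic strategic considerations exploit path randomness in ways that enlarge the gap beyond the path-fluctuation error.

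Part (iii) then follows by combining (i) and (ii) with the identity $\eqref{eqn:opt} = \eqref{eqn:adv}$ from Theorem~\ref{thrm:main}: $0 \leq \eqref{eqn:opt-n} - \eqref{eqn:adv-n} \leq (H+K)\, N^{-1/9}$, which vanishes as $N \to \infty$ (and then trivially as $r \to \infty$). The main obstacle is the upper bound in part (ii): the continuum Step~3A argument relies on deterministic path-wise bounds that become random in the finite case, and the monotonicity together with large $r$ is precisely what permits the stochastic analog to be controlled at rate $N^{-1/9}$; a careful treatment of the fluctuations near $\tau^*$ (where the optional stopping argument is applied and the finite-$N$ escape probability must be linked back to $p^*(\mu_0, A_0)$) is the technical crux.
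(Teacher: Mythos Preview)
Your argument for part (i) is essentially the paper's approach and is correct. Part (iii) also follows as you say once (i) and (ii) are in hand.

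The gap is in part (ii). Your claim that ``the strict monotonicity of $u$ in $A$ makes action $0$ strictly dominant for any agent whose clock ticks while $(\mu_t, A_t) \in \Psi_{LD}(A_t)$'' is false for finite $N$. In the finite model, agent $i$'s choice shifts $A_t$ by $1/N$, which changes both her own flow payoff and her continuation value at the next tick. Consequently the continuum threshold $\psi_{LD}$ no longer marks the boundary of strict dominance: an agent with $\mu_t$ just below $\psi_{LD}(A_t)$ may strictly prefer action $1$, so you cannot conclude $A_t = \underline A_t^N$ on $[0,\tau^*)$, and the optional-stopping bound $\Pr(\tau^* < \infty) \leq p^*(\mu_0,A_0)$ no longer controls $\eqref{eqn:opt-n}$ at the rate you claim.

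The paper's fix is to construct a \emph{finite-population} lower dominance threshold $\underline{\psi}^N_{LD}(A) = \psi_{LD}(A) - \Delta\psi(N,\lambda/r)$ with $\Delta\psi(N,\lambda/r) = \big(4LN^{-1/2} + (\bar u - \underline u)\lambda/r\big)/\Delta u(0,1)$, and to show action $0$ \emph{is} strictly dominant below $\underline{\psi}^N_{LD}$. The monotonicity hypotheses are used here, not where you place them: $u(1,\cdot,\theta)$ increasing and $u(0,\cdot,\theta)$ decreasing give one-sided bounds $u(1,A_s^1,\theta) \leq u(1,\bar A_s^{1,N},\theta)$ and $u(0,A_s^0,\theta) \geq u(0,\bar A_s^{0,N},\theta)$, from which $\overline U_1^N$ and $\underline U_0^N$ are computed. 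The resulting escape-probability bound is $p^*_N(\mu_0,A_0) = \mu_0/\underline{\psi}^N_{LD}(A_0) > p^*(\mu_0,A_0)$, and the excess $p^*_N - p^*$ contains the term $(\bar u - \underline u)\lambda/r$ that does \emph{not} vanish with $N$; this is precisely why the $\limsup_{r\to\infty}$ is needed in the statement and why your ``myopic agents'' intuition, while in the right direction, does not locate the actual obstruction. The technical crux is the own-action effect on dominance, not fluctuations near $\tau^*$.
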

\begin{proof}[Proof of Part (i) of Proposition \ref{prop: finite LD}] The proof of Theorem~\ref{thrm:main} implies
\begin{align*}
    \eqref{eqn:adv} = \sup_{\substack{\bm{\mu} \in \mathcal{M} \\
    \sigma \in \Sigma(\bm{\mu}, A_0)}}\Ex^{\sigma}\Big[\phi(\bm{A})\Big] \leq  (1-p^*(\mu_0)) \phi(\bm{\underline{A}}) + p^*(\mu_0)\phi(\bm{\bar{A}}),
\end{align*}
where $p^*(\mu_0) := \mu_0/\psi_{LD}(A_0)$, and $\bm{\underline{A}}$ satisfies $\underline{A}_t = \underline{A}(A_0, t) = A_0 e^{-\lambda t}$.

 Consider $\bm{\mu}^\eta$ such that $\eta > 2(GN)^{-1/9}/(2(GN)^{-1/9} + \psi_{LD}(A_0))$, where $G$ is defined in Theorem \ref{thrm: finite}. We have
 \[
 \mu_0^+ = \frac{\mu_0}{p^*(\mu_0) - \eta} > \psi_{LD}(A_0) + 2(GN)^{-1/9},
 \]
where $\mu_0^+$ is the maximal escaping belief defined in the main text. Under $\bm{\mu}^\eta$ and the environment of $N$ agents, if $\mu_{0+} > \psi_{LD}(A_0) + (GN)^{-1/9},$ then everyone takes action $1$ until new information is injected under any equilibrium outcome by part (i) of Theorem \ref{thrm: finite}. Thus, we have
\begin{align*}
&\inf_{\sigma \in \Sigma^N(\bm{\mu}^\eta, A_0)}
\Ex^{\sigma}\Big[\phi(\bm{A}) \Big]\\
&\geq (1-p^*(\mu_0) + \eta) \Ex_{\bm\tau}\left[1\left\{ \forall t, |\underline A_t - \underline A_t^N| \leq \min\left\{\mathsf{TOL}(D(\mu_t, A_t)), N^{-2/9} \right\} \right\} \phi(\bm{\underline{A}}^N) \right] \\
&\quad + (p^*(\mu_0)- \eta)\Ex_{\bm\tau}\left[1\left\{ \forall t, |\bar A_t - \bar A_t^N| \leq \min\left\{\mathsf{TOL}(D(\mu_t, A_t)), N^{-2/9} \right\}  \right\}\phi(\bm{\bar{A}}^N)\right] \tag{$\phi(\bm A) \geq 0$}\\
&\geq (1-p^*(\mu_0) + \eta)\left\{ 1- (\bar c + \underline c)N^{-1/9} \max\left\{(\underline e\bar\delta)^{-4} G^{8/9}, 1 \right\} \right\} \left\{ \phi(\bm{\underline{A}}) - L_\phi N^{-2/9} \right\} \\
&\quad + (p^*(\mu_0)- \eta)\left\{ 1- (\bar c + \underline c)N^{-1/9} \max\left\{(\underline e\bar\delta)^{-4} G^{8/9}, 1 \right\} \right\}\left\{ \phi(\bm{\bar{A}}) - L_\phi N^{-2/9}\right\} \tag{From the same argument as Theorem \ref{thrm: finite} part (ii)},
\end{align*}
where $\bm{\underline{A}}^N$ satisfies
\begin{align*}
    \underline A_t^N &= A_0 - \frac{1}{N}\sum_{i = 1}^{N - n} 1 \{ \tau_i \leq t \}.
\end{align*}
This implies
\begin{align*}
    \eqref{eqn:adv-n} &\geq (1-p^*(\mu_0)) \left\{ 1- \mathcal{O}(N^{-1/9}) \right\} \left\{ \phi(\bm{\underline{A}}) - L_\phi N^{-2/9}\right\} \\
&\quad + (p^*(\mu_0))\left\{ 1- \mathcal{O}(N^{-1/9}) \right\}\left\{ \phi(\bm{\bar{A}}) - L_\phi  N^{-2/9}\right\}
\end{align*}
Hence, we have
\begin{align*}
     \eqref{eqn:adv} - \eqref{eqn:adv-n}  & \leq L_\phi N^{-2/9} \\
    &\quad + \mathcal{O}(N^{-1/9}) p^*(\mu_0)  \left\{ \phi(\bm{\bar{A}}) - L_\phi N^{-2/9} \right\} \\
    &\quad + \mathcal{O}(N^{-1/9}) (1 - p^*(\mu_0)) \left\{ \phi(\bm{\underline{A}}) - L_\phi N^{-2/9} \right\}\\
    &= \mathcal{O}(N^{-1/9}).
\end{align*}
For sufficiently large $N$, there exists a constant $K$ such that $ \eqref{eqn:adv} - \eqref{eqn:adv-n}  \leq KN^{-1/9},$ as desired.
\end{proof}

\begin{proof}[Proof of Part (ii) of Proposition \ref{prop: finite LD}]
 Suppose that $u(a,A,\theta) \in [\underline{u},\bar{u}]$. 
Consider any information structure $\bm{\mu} \in \mathcal{M}$ and equilibrium $\bm{\sigma} \in \Sigma^N(\bm{\mu},A_0).$ Suppose that agent $i$'s Poisson clock ticks at time $t$ and then she chooses a new action. Let $A_{t}^1$ and $A_t^0$ be the aggregate play at time $t$ if she chooses action 1 and 0 at time $t$, respectively. Define paths $(\bar{A}_s^{1,N})_{s \geq t}$ and $(\bar{A}_s^{0,N})_{s \geq t}$ such that
\begin{align*}
\bar{A}^{1,N}_s &= A_t^1 + \frac{1}{N} \sum_{i=1}^n 1\{\tau_i \leq t\}   \\
\bar{A}^{0,N}_t &= A_t^0 + \frac{1}{N} \sum_{i=1}^n 1\{\tau_i \leq t\}.
\end{align*}

Suppose that $(A^1_s)_{s \geq t}$ and $(A^0_s)_{s \geq t}$ are the realized aggregate action paths after the agent takes actions $1$ and $0$ at $t$, respectively. It must be that $A^1_s \leq \bar{A}^{1,N}_s$ and $A^0_s \leq \bar{A}^{0,N}_s$ a.s. for every $s \geq t$. Thus, the expected payoff of the agent choosing action 1 at $t$ followed by a path of (random) actions after $\tau$ $(a_{1s})_{s \geq \tau}$  is
\begin{align*}
&\Ex_{\tau, \theta \sim \mu_t}\bigg[ \int_t^\tau e^{-r(s-t)}u(1,A^1_s,\theta) ds + \int_\tau^\infty e^{-r(s-t)} u(a_{1s},A^1_s,\theta) ds  \bigg] \\
&\leq \Ex_{\tau, \theta \sim \mu_t}\bigg[ \int_t^\tau e^{-r(s-t)}u(1,\bar{A}^{1,N}_s,\theta) ds + \int_\tau^\infty e^{-r(s-t)} \bar{u} ds  \bigg] \tag{$u(a, \theta, A) \leq \bar u$}\\
& \leq \Ex_{\tau, \theta \sim \mu_t}\bigg[ \int_t^\tau e^{-r(s-t)}u(1,\bar{A}_s,\theta) ds \bigg] +\Ex\bigg[ \int_t^\tau e^{-r(s-t)} L|\bar{A}_s - \bar{A}^{1,N}_s| ds \bigg] + \frac{\lambda \bar{u}}{r(\lambda + r)} \tag{Lipschitz continuity of $u$}
\end{align*}
since $u(1,A,\theta)$ is increasing in $A.$
Note that $|\bar{A}_t^N - \bar{A}^{1,N}_t| \leq \frac{1}{N}$. Thus,
\begin{align*}
    \Ex|\bar{A}_s - \bar{A}^{1,N}_s| \leq \frac{1}{N} + \frac{1}{N} \Ex \Big\lvert n (1-e^{-\lambda t}) - \sum_{i=1}^n 1\{\tau_i \leq t\} \Big\rvert \tag{$n = N(1 - A_0)$} 
\end{align*}
Consider that 
\begin{align*}
    \Ex \Big[ \sum_{i=1}^n 1\{\tau_i \leq t\}\Big] &= n \Pr(\tau_i \leq t) = n(1-e^{-\lambda t}) \\
    Var \Big( \sum_{i=1}^n 1\{\tau_ i \leq t\}\Big) &= n\Pr(\tau_i \leq t)\big( 1-\Pr(\tau_i \leq t)\big) = ne^{-\lambda t}(1-e^{-\lambda t}) < n.
\end{align*}
Thus, by the Cauchy-Schwarz inequality,
\begin{align*}
    \Ex \Big\lvert n (1-e^{-\lambda t}) - \sum_{i=1}^n 1\{\tau_i \leq t\} \Big\rvert   \leq  \sqrt{Var\Big( \sum_{i=1}^n 1\{\tau_i \leq t \}\Big)} < \sqrt{n}.
\end{align*}
Therefore, $\Ex|\bar{A}_s - \bar{A}^{1,N}_s| \leq (1+\sqrt{n})N^{-1} \leq 2N^{-1/2}$. This implies the expected payoff of the agent choosing action 1 at $t$ is bounded above by
\[\Ex_{\tau,\theta \sim \mu_t}\bigg[ \int_t^\tau e^{-r(s-t)}u(1,\bar{A}_s,\theta) ds \bigg] + \frac{2LN^{-1/2}}{\lambda(\lambda+r)} + \frac{\lambda \bar{u}}{r(\lambda+r)}\eqqcolon \overline{U}^N_1 \]
Similarly, the expected payoff of the agent choosing action $0$ at $t$ followed by a path of (random) actions after $\tau$ $(a_{0s})_{s \geq \tau}$  is 
\begin{align*}
&\Ex_{\tau,\theta \sim \mu_t}\bigg[ \int_t^\tau e^{-r(s-t)}u(0,A^0_s,\theta) ds + \int_\tau^\infty e^{-r(s-t)} u(a_{0s},A^0_s,\theta) ds  \bigg] \\
&\geq \Ex_{\tau,\theta \sim \mu_t}\bigg[ \int_t^\tau e^{-r(s-t)}u(0,\bar{A}^{0,N}_s,\theta) ds + \int_\tau^\infty e^{-r(s-t)} \underline{u} ds  \bigg] \\
&\geq \Ex_{\tau,\theta \sim \mu_t}\bigg[ \int_t^\tau e^{-r(s-t)}u(0,\bar{A}_s,\theta) ds \bigg] - \frac{2LN^{-1/2}}{\lambda(\lambda+r)} + \frac{\lambda \underline{u}}{r(\lambda + r)} \eqqcolon \underline{U}^N_0.
\end{align*}
Thus, the agent always chooses action $0$ if
\begin{align*}
    &\underline{U}^N_0 \geq \overline{U}^N_1 \iff \Ex_{\tau,\theta \sim \mu_t}\bigg[ \int_t^\tau e^{-r(s-t)} \Delta u(\bar{A}_s,\theta) ds \bigg] \leq -\frac{4LN^{-1/2}}{\lambda+r} - \frac{\lambda(\bar{u} - \underline{u})}{r(\lambda+r)}
\end{align*}
From (\ref{ineq: conctant C}), if $\psi_{LD}(A_t) > 0$, then for every $\mu_t < \psi_{LD}(A_t)$
\[\Ex_{\tau,\theta \sim \mu_t}\bigg[ \int_t^\tau e^{-r(s-t)} \Delta u(\bar{A}_s,\theta) ds \bigg] < -C(\psi_{LD}(A_t) - \mu_t),\]
where
\[C = \min_{A_t \in [0,1]} \Ex_\tau\Big[ \int_{s=t}^\tau e^{-r(s-t)} \Delta u(\bar{A}_s,1) ds \Big] > \frac{ \Delta u(0,1)}{\lambda+r}.\]
Thus, if $\psi_{LD}(A_t) - \mu_t > \frac{4LN^{-1/2} + (\bar{u} - \underline{u})\lambda/r }{\Delta u(0,1)}$, then 
\[\Ex_{\tau,\theta \sim \mu_t}\bigg[ \int_t^\tau e^{-r(s-t)} \Delta u(\bar{A}_s,\theta) ds \bigg] \leq -\frac{4LN^{-1/2}}{\lambda+r} - \frac{\lambda(\bar{u} - \underline{u})}{r(\lambda+r)} \Longrightarrow \underline{U}^N_0 \geq \overline{U}^N_1,\]
which implies the agent must choose action $0$ at $t$. We define a finite-population lower dominance region $\underline{\psi}^N_{LD}:[0,1] \to [0,1]$ as follows:
\[\underline{\psi}^N_{LD}(A_t) = \max \bigg\{ 0 ,\psi_{LD}(A_t) - \frac{4LN^{-1/2} + (\bar{u}-\underline{u}) \lambda/r}{\Delta u(0,1)} \bigg\}. \]
We showed above that, if $\mu_t < \underline{\psi}^N_{LD}(A_t)$, then the agent must choose action $0.$ This implies 
\begin{align*}\Big \lvert\max\{\mu_t,\underline{\psi}^N_{LD}(A_t)\} - \max\{\mu_t,\psi_{LD}(A_t)\}\Big\rvert &\leq \Big\lvert \underline{\psi}^N_{LD}(A_t) - \psi_{LD}(A_t) \Big\rvert\\
&\leq \frac{4LN^{-1/2} + (\bar{u}-\underline{u}) \lambda/r}{\Delta u(0,1)}  \\
&\eqqcolon \Delta\psi(N,\lambda/r) \to 0
\end{align*}
in any order of limits of $\lambda/r \to 0$ and $N\to \infty.$ 

From Step 3A of the proof of Theorem \ref{thrm:main}, we can apply a similar argument to show that
\[\eqref{eqn:opt-n} \leq (1-p^*_N(\mu_0,A_0))\Ex[\phi(\underline{\bm{A}}^N)] + p^*_N(\mu_0,A_0)\Ex[\phi(\bar{\bm{A}}^N)],\]
where $p^*_N(\mu_0,A_0) = \frac{\mu_0}{\max\{\mu_0,\underline{\psi}^N_{LD}(A_0)\}}$. Recall that, if $\mu_0 < \psi_{LD}(A_0)$, then 
\[\eqref{eqn:opt} = (1-p^*(\mu_0,A_0))\phi(\underline{\bm{A}}) + p^*(\mu_0,A_0)\phi(\bar{\bm{A}}),\]
where $p^*(\mu_0,A_0) = \frac{\mu_0}{\max\{\mu_0,\psi_{LD}(A_0)\}} $.

We showed in inequality \eqref{eq: adv-n<adv} that $|\mathbb{E}[\phi(\bar{\bm{A}}^N) - \phi(\bar{\bm{A}})]| \leq \bar{K}N^{-1/5}.$ Similarly, we can show $\big\lvert \mathbb{E}[\phi(\underline{\bm{A}}^N) - \phi(\underline{\bm{A}})] \big\rvert \leq \bar{K}N^{-1/5}.$ Moreover, if we define $Q := 4L/\Delta u(0,1)$, we have
\begin{align*}
    p^*_N(\mu_0,A_0) - p^*(\mu_0,A_0) &= \frac{\mu_0}{\max\{\psi_{LD}(A_0),\mu_0\}} \cdot \bigg(\frac{\max\{\psi_{LD}(A_0),\mu_0\} }{\max\{  \underline{\psi}_{LD}^N(A_0),\mu_0\}} - 1\bigg) \\
    &\leq \frac{\Delta \psi(N,\lambda/r) }{ \mu_0 }  \\
    &\to \frac{QN^{-1/2}}{\mu_0}
\end{align*}
as $r \to \infty$ since $\Delta \psi(N,\lambda/r) \to QN^{-1/2}$ as $r \to \infty.$  Hence, we have
\begin{align*}
     &\eqref{eqn:opt-n} -\eqref{eqn:opt}\\  & \leq (1 - p_N^*(\mu_0, A_0)) \left| \mathbb{E}[\phi(\underline{\bm{A}}^N) - \phi(\underline{\bm{A}})] \right| + p_N^*(\mu_0, A_0) \left|\mathbb{E}[\phi(\bar{\bm{A}}^N) - \phi(\bar{\bm{A}})]\right| \\
     &\quad + (p_N^*(\mu_0, A_0) - p^*(\mu_0, A_0)) \left\{\phi(\bar{\bm{A}}) - \phi(\underline{\bm{A}}) \right\} \\
     &\leq \bar K N^{-1/5} +  \frac{\Delta\psi(N,\lambda/r)}{\mu_0}\left\{\phi(\bar{\bm{A}}) - \phi(\underline{\bm{A}}) \right\} \\
     &\to \bar K N^{-1/5} +  \frac{QN^{-1/2}}{\mu_0}\left\{\phi(\bar{\bm{A}}) - \phi(\underline{\bm{A}}) \right\} \quad \text{as $r\to\infty$},
\end{align*} 
and
\begin{align*}
    &\eqref{eqn:opt} - \eqref{eqn:opt-n} \\
    \leq & \eqref{eqn:opt} - \eqref{eqn:adv-n} \\
    \leq & (1-p^*(\mu_0,A_0)) \mathcal{O}(N^{-1/9}) \phi(\underline{\bm{A}}) + p^*(\mu_0,A_0) \mathcal{O}(N^{-1/9}) \phi(\bar{\bm{A}})\\
    &\quad + \left\{ 1- \mathcal{O}(N^{-1/9}) \right\} L_\phi N^{-2/9}\\
    =& \mathcal{O}(N^{-1/9}).
\end{align*}
These together imply
\begin{align*}
\text{lim sup}_{r \to \infty} |\eqref{eqn:opt-n} -\eqref{eqn:opt}| \leq HN^{-1/9}
\end{align*}
for some constant $H$, as desired.
\end{proof}

\begin{proof}[Proof of Part (iii) of Proposition \ref{prop: finite LD}]
Finally, we have 
\begin{align*}
    &\lim_{r \to \infty}\lim_{N \to \infty} |\eqref{eqn:opt-n} -\eqref{eqn:adv-n}|\\
    &\leq \lim_{r \to \infty}\lim_{N \to \infty} \underbrace{|\eqref{eqn:opt-n} -\eqref{eqn:opt}|}_{\text{Proposition~\ref{prop: finite LD} Part (ii)}} + \underbrace{|\eqref{eqn:opt} -\eqref{eqn:adv}|}_{\text{Theorem~\ref{thrm:main} Part (i)}} + \underbrace{|\eqref{eqn:adv} -\eqref{eqn:adv-n}|}_{\text{Proposition~\ref{prop: finite LD} Part (i)}} \\
    &= 0,
\end{align*}
as desired.

\end{proof}
\clearpage 

\section{Designing private information} \label{appendix:private}
In this appendix we discuss whether the designer can do better by designing private information.

\paragraph{Relaxed feasibility for joint belief processes.} We consider the relaxed problem under which each agent's belief can be `separately controlled' i.e., any joint distribution over agents' beliefs under which the marginal distribution is a martingale is feasible under the relaxed problem. 
There is a common prior $\mu_0$ and a private belief process $\bm{\mu}_i := (\mu_{it})_t$, where $\mu_{it} := \mathbb{P}(\theta = 1|\mathcal{F}_{it})$ with $\mathcal{F}_{it}$ being agent $i$'s time-$t$ filtration generated by $(A_{s},\mu_{is})_{s \leq t}$. 

The belief process for agent $i \in [0,1]$, $\bm{\mu}_i := (\mu_{it})_{t}$ is R-feasible if it is an $(\mathcal{F}_{it})_t$-martingale. The set of joint R-feasible belief process is 
\[
\mathcal{M}^P := \Big\{(\mu_{it})_{t}: i \in [0,1], \text{ $(\mu_{it})_{t}$ is R-feasible } \Big\}.
\]
We emphasize that this is a necessary condition on beliefs, but is not sufficient (see, e.g., \cite*{arieli2021feasible,morris2020no} for a discussion of the static case). Let the set of feasible joint belief processes be $\mathcal{M}^F$. Although it is still an open question of how to characterize this set, we know $\mathcal{M}^F \subseteq \mathcal{M}^P$.

\paragraph{The problem under private information.}
\[
\sup_{\bm{\mu} \in \mathcal{M}^F} \inf_{\bm{\sigma} \in PBE(\bm{\mu},A_0)} \Ex^{\sigma} \Big[\phi\big(\bm{A}\big)\Big] \tag{ADV-P} \label{eqn:private_ADV}.
\]
noting that we have moved from subgame perfection to Perfect-Bayesian Equilibria since there is now private information among players. However, observe that $\mathcal{M} \subseteq \mathcal{M}^F$ and, furthermore, that BNE coincides with SPE under public information so $\eqref{eqn:private_ADV} \geq \eqref{eqn:adv}$.

\begin{manualtheorem}{1B} Suppose that $\mu_0 \notin \Psi_{LD}(A_0) \cup \text{Bd}_{\theta^*}$, then  
\[
\eqref{eqn:private_ADV} = \eqref{eqn:adv}. 
\]
If $\mu_0 \in \Psi_{LD}(A_0)$ and further supposing $\phi$ is a convex functional, then  
\[\eqref{eqn:private_ADV} - \eqref{eqn:adv} \leq \Big(p^*(\mu_0,A_0) - p^*(\mu_0,1)\Big)\Big( \phi\big(\bm{\overline{A}}^\lambda \Big) - \phi\big(\bm{\underline{A}}^\lambda\big)\Big),\]
where $p^*(\mu,A)$ is defined in Definition \ref{defn:insideLD}.
\end{manualtheorem}
\begin{proof} The case in which $\mu_0 \notin \Psi_{LD}(A_0) \cup \text{Bd}_{\theta^*}$ follows directly from Theorem \ref{thrm:main} since it already attains the upper bound on the time-path of aggregate play. We prove the second part in several steps. 
    
\paragraph{Step 1A.} Constructing a relaxed problem. Some care is required: by moving from $\mathcal{M}^F$ to $\mathcal{M}^P$, equilibria of the resultant game might not be well-defined. We will deal with this in two ways. First, we will weaken PBE to what we call non-dominance which requires that players play action $1$ whenever it is not strictly dominated. Notice that this is not an equilibrium concept and is well-defined even with hetrogeneous beliefs. Second, we will replace the inner $\inf$ with $\sup$ to obtain the relaxed problem 
\[
\sup_{\bm{\mu} \in \mathcal{M}^P} \sup_{\bm{\sigma} \in ND(\bm{\mu},A_0)} \Ex^{\sigma} \Big[\phi\big(\bm{A}\big)\Big] \tag{ADV-P-R} \label{eqn:private_ADV_R}.
\]
It is easy to see that this is indeed a relaxed problem i.e., $\eqref{eqn:private_ADV_R} \geq \eqref{eqn:private_ADV}$ since (i) $\mathcal{M}^P \supseteq \mathcal{M}^F$ and furthermore, for each $\bm{\mu} \in \mathcal{M}^F$, $PBE(\bm{\mu},A_0) \subseteq ND(\bm{\mu},A_0)$. 

\paragraph{Step 1B.} Solving the relaxed problem. First observe that for each player $i \in [0,1]$, a necessary condition for action $1$ to not be strictly dominated is 
\[
\mu_{it} > \psi_{LD}(A = 1) 
\]
Hence, consider the strategy $\overline \sigma$ in which each player $i$ plays $1$ if $\mu_{it} > \psi_{LD}(A = 1)$ and $0$ otherwise. Clearly, 
\[
\sup_{\bm{\mu} \in \mathcal{M}^P} \Ex^{\overline \sigma} \Big[\phi\big(\bm{A}\big)\Big] \geq \eqref{eqn:private_ADV_R}.
\]
Let $(\mu_{it})_t$ be any Cadlag martingale and let $\tau_i := \inf\{t \in \mathcal{T}: \mu_{it} \notin \Psi_{LD}(1) \}$. Clearly this Cadlag martingale is improvable if it continues to deliver information after $\tau_i$, so it is without loss to consider $(\mu_{it})_t$ which are constant a.s. after $\tau_i$. But observe that since $(\mu_{it})_t$ is a martingale, the probability of exiting the region $\Psi_{LD}(1)$ is upper-bounded with the same calculation : 
\[
\mathbb{P}(\tau_i < +\infty) \leq p^*(\mu_0,1).
\]

 We define the (random) number of agents whose beliefs eventually cross $\psi_{LD}(1)$ as follows:
\begin{align*}
    F &= \int_{i \in I} 1\{\tau_i < \infty\} di. 
\end{align*}
Consider that
\begin{align*}
    \Ex_\mu [F] &= \Ex_\mu\bigg[ \int_{i \in I} 1\{\tau_i < \infty \}  di \bigg]
    = \int_{i \in I} \Pr_\mu(\tau_i<\infty) di \leq p^*(\mu_0,1).
\end{align*}
Now we will derive the upper bound of $A_t$ for each realization of $(\mu_{it})_{i,t}.$ Agent $i \in I$ takes action $1$ at time $t$ only if either
\begin{enumerate}
    \item[(I)] agent $i$'s Poisson clock ticked before $t$, and his belief eventually crosses $\psi_{LD}(1)$, or
    \item[(II)] agent $i$ took action 1 initially, and his Poisson clock has not ticked yet.
\end{enumerate}
The measures of agents in (I) and (II) are $F (1-\exp(-\lambda t))$ and $A_0 \exp(-\lambda t)$, respectively. Thus,
\begin{align*}
    A_t &\leq  F(1-\exp(-\lambda t)) + A_0 \exp(-\lambda t)
\end{align*}
almost surely. Define $\overline{\bm{A}}^\lambda := (\overline A_t^\lambda)_{t}$ as the solution to the ODE $d\overline{A}_t^\lambda = \lambda(1-\overline{A}_t^\lambda) dt $ with boundary $\overline A_0^\lambda = A_0$, and $\underline{\bm{A}}^\lambda := (\underline A_t^\lambda)_{t}$ as the solution to the ODE $d\underline{A}^\lambda_t = -\lambda \underline{A}^\lambda_t dt $ with boundary $\underline A_0^\lambda = A_0$. We have
\[\overline{A}_t^\lambda = 1 - (1-A_0)\exp(-\lambda t), \quad \quad \underline{A}_t^\lambda = A_0\exp(-\lambda t),  \]
so we can rewrite the upper bound of $A_t$ as follows:
\[A_t \leq F \overline{A}_t^\lambda + (1-F) \underline{A}_t^\lambda \quad \forall t \quad \Rightarrow \quad \bm{A} \leq F \bm{\overline{A}}^\lambda + (1-F) \bm{\underline{A}}^\lambda\]
almost surely.  
Since $\phi$ is a convex and increasing functional, we must have
\[\phi(\bm{A}) \leq F \phi(\bm{\overline{A}}^\lambda) + (1-F) \phi(\bm{\underline{A}}^\lambda)\]
almost surely. This implies
\begin{align*}
    \Ex_\mu\Big[\phi(\bm{A})\Big] &\leq \Ex_\mu [F] \phi \big(\bm{\overline{A}}^\lambda \big) + (1- \Ex_\mu [F]) \phi\big(\bm{\underline{A}}^\lambda\big)  \\
    &\leq p^*(\mu_0,1)\phi\big(\bm{\overline{A}}^\lambda \big) + \big(1- p^*(\mu_0,1)\big) \phi\big(\bm{\underline{A}}^\lambda\big)
\end{align*}
for every $\bm{\mu}.$ Thus,
\begin{align*}
\eqref{eqn:private_ADV} \leq \eqref{eqn:private_ADV_R}  \leq p^*(\mu_0,1)\phi\big(\bm{\overline{A}}^\lambda \big) + \big(1- p^*(\mu_0,1)\big) \phi\big(\bm{\underline{A}}^\lambda\big).
\end{align*}
This implies
\begin{align*}
    \eqref{eqn:private_ADV} - \eqref{eqn:adv} \leq (p^*(\mu_0,A_0) - p^*(\mu_0,1))\big( \phi\big(\bm{\overline{A}}^\lambda \big) - \phi\big(\bm{\underline{A}}^\lambda\big)\big),
\end{align*}
as desired.
\end{proof}
\section{Technical Results for Dynamic Regime Change} \label{online appendix: technical proof}
\begin{lemma} \label{lem: technical regime change}
    In the setting of dynamic regime change games (Appendix \ref{appendix:examples}), if $\gamma$ is continuously differentiable and strictly decreasing in $A$, then $\Delta U (\bm{A},\theta,\tau|t)$ is increasing and Lipschitz in $\bm{A}.$
\end{lemma}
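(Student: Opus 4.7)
The plan is to rewrite $\Delta U(\bm{A},\theta,\tau|t)$ in terms of the survival function of the regime and then exploit the monotonicity and Lipschitz properties of $\gamma$ directly. Let $G_{\gamma,\theta}(\bm{A},s|t) := \exp\bigl(-\int_t^s \gamma(A_{s'},\theta)\,ds'\bigr)$ denote the probability that the regime survives from $t$ to $s$ under aggregate action path $\bm{A}$, so that the hazard-rate density equals $f_{\gamma,\theta}(\bm{A},s|t) = -\partial_s G_{\gamma,\theta}(\bm{A},s|t)$. Integrating the $f$ term by parts against $e^{-r(s-t)}$ yields
\[
\int_t^{t+\tau} e^{-r(s-t)} f_{\gamma,\theta}(\bm{A},s|t)\,ds = 1 - e^{-r\tau}G_{\gamma,\theta}(\bm{A},t+\tau|t) - r\int_t^{t+\tau} e^{-r(s-t)} G_{\gamma,\theta}(\bm{A},s|t)\,ds,
\]
which isolates the dependence of $\Delta U$ on $\bm{A}$ entirely through the survival function $G_{\gamma,\theta}$.

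First, I would establish strict monotonicity. Since $\gamma(\cdot,\theta)$ is strictly decreasing in $A$, raising $\bm{A}$ pointwise strictly decreases $\int_t^s \gamma(A_{s'},\theta)\,ds'$, hence strictly increases $G_{\gamma,\theta}(\bm{A},s|t)$ for every $s > t$. Plugging this into the identity above shows that $\int_t^{t+\tau}e^{-r(s-t)} f_{\gamma,\theta}(\bm{A},s|t)\,ds$ is strictly decreasing in $\bm{A}$, and therefore $\Delta U(\bm{A},\theta,\tau|t) = \int_t^{t+\tau} e^{-r(s-t)}c\,ds - \int_t^{t+\tau} e^{-r(s-t)} f_{\gamma,\theta}(\bm{A},s|t)\,ds$ is strictly increasing in $\bm{A}$.

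Next, I would establish the Lipschitz bound. Because $\gamma$ is continuously differentiable on the compact set $[0,1]\times \Theta$, it is Lipschitz in $A$ with some constant $L_\gamma$. Using $|e^{-x}-e^{-y}|\leq |x-y|$ for $x,y\geq 0$, the elementary estimate
\[
\bigl|G_{\gamma,\theta}(\bm{A},s|t) - G_{\gamma,\theta}(\bm{A}',s|t)\bigr| \leq \int_t^s |\gamma(A_{s'},\theta) - \gamma(A_{s'}',\theta)|\,ds' \leq L_\gamma (s-t)\,\|\bm{A}-\bm{A}'\|_\infty
\]
follows. Substituting into the integration-by-parts identity gives
\[
|\Delta U(\bm{A},\theta,\tau|t) - \Delta U(\bm{A}',\theta,\tau|t)| \leq L_\gamma \|\bm{A}-\bm{A}'\|_\infty \Bigl(\tau e^{-r\tau} + r\int_0^\tau u\, e^{-ru}\,du\Bigr),
\]
and a routine integration-by-parts on $r\int_0^\tau u e^{-ru}du$ collapses the bracketed quantity to $(1-e^{-r\tau})/r \leq 1/r$. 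Thus $L^* := L_\gamma/r$ serves as a Lipschitz constant uniformly in $(\theta,\tau,t)$.

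The only mild subtlety I anticipate is ensuring that the Lipschitz bound is uniform in $\tau$, since $\tau$ is unbounded (exponentially distributed). This is handled precisely by the cancellation in $\tau e^{-r\tau} + r\int_0^\tau u e^{-ru}du = (1-e^{-r\tau})/r$: the strictly positive discount rate $r>0$ damps the growth of the horizon, so the constant $L_\gamma/r$ does not blow up. Nothing else in the argument requires delicate estimates.
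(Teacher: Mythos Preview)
Your proof is correct. The monotonicity argument is essentially the same as the paper's: both rewrite $-\int e^{-r(s-t)} f_{\gamma,\theta}\,ds$ via integration by parts in terms of the survival function $G_{\gamma,\theta}(\bm{A},s|t)=\exp(-\int_t^s\gamma(A_{s'},\theta)\,ds')$ and then observe that $G$ is increasing in $\bm{A}$ because $\gamma$ is strictly decreasing.

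For the Lipschitz part you take a slightly different and sharper route. The paper bounds the density $|f_{\gamma,\theta}(\bm{A},s|t)-f_{\gamma,\theta}(\bm{A}',s|t)|$ pointwise by splitting the product $\gamma\cdot G$ into two terms, which requires the additional observation $\gamma(A,\theta)\le L_\gamma$ and produces the estimate $(L_\gamma^2(s-t)+L_\gamma)\|\bm{A}-\bm{A}'\|_\infty$; integrating against $e^{-r(s-t)}$ then gives $L^*=L_\gamma^2/r^2+L_\gamma/r$. You instead stay with the survival-function representation of $\Delta U$, bound only $|G-G'|\le L_\gamma(s-t)\|\bm{A}-\bm{A}'\|_\infty$, and exploit the exact cancellation $\tau e^{-r\tau}+r\int_0^\tau u e^{-ru}\,du=(1-e^{-r\tau})/r$ to obtain the sharper constant $L^*=L_\gamma/r$. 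Your approach avoids the product bound and does not need $\gamma$ itself to be bounded by $L_\gamma$; the paper's approach is more direct on the density but pays for it with an extra $L_\gamma^2/r^2$ term. Either constant suffices for the downstream argument.
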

\begin{proof}
    To show that $\Delta U (\bm A,\theta, \tau | t)$ is increasing $\bm{A}$, consider that
\begin{align*}
&\Delta U (\bm{A},\theta,\tau | t)- c\int_t^{t+\tau} e^{-r(s-t)} ds  \\
&=  \int_{s=t}^{t+\tau} e^{-r(s-t)} de^{-\int_t^s \gamma(A_{s'},\theta) ds'} \\
&=\bigg(\exp\Big({-r\tau-\int_t^{t+\tau} \gamma(A_{s'},\theta) ds'}\Big)-1\bigg) +r\int_{s=t}^{t+\tau} \exp\Big({-r(s-t)-\int_t^s \gamma(A_{s'},\theta) ds'} \Big) ds,
\end{align*}
which is increasing in $\bm{A}$ if $\gamma$ is strictly decreasing in $A$. 

Next, we show $\Delta U(\bm{A},\theta,\tau|t)$ is Lipschitz in $\bm{A}.$ Since $\gamma$ is continuously differentiable with respect to $A$ and its domain is compact, $\gamma$ is bounded and Lipschitz continuous. Suppose $L_\gamma$ is a constant such that $|\gamma(A,\theta) - \gamma(A',\theta)| \leq L_\gamma|A-A'|$ and $\gamma(A,\theta) \leq L_\gamma$ for every $\theta$ and $A,A'\in [0,1]$. This implies
\[\big| e^{-\int_t^s \gamma (A_{s'},\theta) ds'} - e^{-\int_t^s \gamma (A'_{s'},\theta) ds'}\big| \leq \Big| \int_t^s (\gamma(A_{s'},\theta) - \gamma(A'_{s'},\theta)) ds' \Big| \leq (s-t)L_\gamma \|\bm{A}-\bm{A'}\|_\infty,\]
where we used the inequality $|e^x-e^y| \leq |x-y|$ for every $x,y<0$. Thus,
\begin{align*}
&|f_{\gamma,\theta}(\bm{A},s|t) - f_{\gamma,\theta}(\bm{A'},s|t)| \\
&\leq \gamma(A_s,\theta)(s-t)L_\gamma \|\bm{A} - \bm{A'} \|_\infty + |\gamma(A_s,\theta) - \gamma(A'_s,\theta)|e^{-\int_t^s \gamma(A'_{s'},\theta) ds'}\\
&\leq (L_\gamma^2(s-t)+L_\gamma) \|\bm{A} - \bm{A'}\|_\infty,
\end{align*}
which implies
\begin{align*}
&|\Delta U (\bm{A},\theta,\tau | t) -\Delta U (\bm{A'},\theta,\tau | t)| \\
&= \Big|\int_t^{t+\tau} e^{-r(s-t)}(f_{\gamma,\theta}(\bm{A},s|t) - f_{\gamma,\theta}(\bm{A'},s|t)) \Big| \\
&\leq \|\bm{A}-\bm{A'}\|_\infty\int_t^{t+\tau}e^{-r(s-t)}(L_\gamma^2(s-t)+L_\gamma)ds \\
&\leq L^*\|\bm{A}-\bm{A'}\|_\infty,
\end{align*}
where $L^* \coloneqq \frac{L_\gamma^2}{r^2} + \frac{L_\gamma}{r}$, as desired.
\end{proof}
\end{document}